\newtheorem{lemma}{Lemma}
\theoremstyle{definition}
\def\BState{\State\hskip-\ALG@thistlm}
\algrenewcommand\algorithmicforall{\textbf{forall}}
\begin{document}
%
% paper title
% Titles are generally capitalized except for words such as a, an, and, as,
% at, but, by, for, in, nor, of, on, or, the, to and up, which are usually
% not capitalized unless they are the first or last word of the title.
% Linebreaks \\ can be used within to get better formatting as desired.
% Do not put math or special symbols in the title.
\title{Simplified Biased Contribution Index (SBCI): A Mechanism to Make P2P Network Fair and Efficient for Resource Sharing}
%
%
% author names and IEEE memberships
% note positions of commas and nonbreaking spaces ( ~ ) LaTeX will not break
% a structure at a ~ so this keeps an author's name from being broken across
% two lines.
% use \thanks{} to gain access to the first footnote area
% a separate \thanks must be used for each paragraph as LaTeX2e's \thanks
% was not built to handle multiple paragraphs
%
%
%\IEEEcompsocitemizethanks is a special \thanks that produces the bulleted
% lists the Computer Society journals use for "first footnote" author
% affiliations. Use \IEEEcompsocthanksitem which works much like \item
% for each affiliation group. When not in compsoc mode,
% \IEEEcompsocitemizethanks becomes like \thanks and
% \IEEEcompsocthanksitem becomes a line break with idention. This
% facilitates dual compilation, although admittedly the differences in the
% desired content of \author between the different types of papers makes a
% one-size-fits-all approach a daunting prospect. For instance, compsoc 
% journal papers have the author affiliations above the "Manuscript
% received ..."  text while in non-compsoc journals this is reversed. Sigh.

\author{Sateesh~Kumar~Awasthi,
        Yatindra Nath Singh,~\IEEEmembership{Senior~Member,~IEEE,}
        % <-this % stops a space
\IEEEcompsocitemizethanks{\IEEEcompsocthanksitem Sateesh Kumar Awasthi  is with the Department of Electrical Engineering, Indian Institute of Technology, Kanpur, India, e-mail: (sateesh@iitk.ac.in)% <-this % stops a space% note need leading \protect in front of \\ to get a newline within \thanks as% \\ is fragile and will error, could use \hfil\break instead.
\IEEEcompsocthanksitem Yatindra~Nath~Singh  is with the Department of Electrical Engineering, Indian Institute of Technology, Kanpur, India, e-mail: (ynsingh@iitk.ac.in)}}% <-this % stops a space
\IEEEtitleabstractindextext{%
\begin{abstract}
To balance the load and to discourage the free-riding in peer-to-peer (P2P) networks, many incentive mechanisms and policies have been proposed in recent years. Global peer ranking is one such mechanism. In this mechanism, peers are ranked based on a metric called contribution index. Contribution index is defined in such a manner that peers are motivated to share the  resources in the network. Fairness in the terms of upload to download ratio in each peer can be achieved by this method. However, calculation of contribution index is  not trivial. It is computed distributively and iteratively  in the entire network and requires strict clock synchronization among the peers. A very small error in clock synchronization may lead to wrong results. Furthermore, iterative calculation requires a lot of message overhead and storage capacity, which makes its implementation more complex. In this paper, we are proposing a simple incentive mechanism based on the contributions of peers, which can balance the upload and download amount of resources in each peer. It does not require iterative calculation, therefore, can be implemented with lesser message overhead and storage capacity without requiring strict clock synchronization. This approach is efficient as there are very less rejections among the cooperative peers. It can be implemented in a truly distributed fashion with $O(N)$ time complexity per peer.
                                                                                                                                                                                                                                                                                                                                                                                                                                                                                                                                                                                                                                                                                                                                                                                                                                                                                                                                                                                                                                                                                                                                                                                                                                                                                                                                                                                                                                                                                                                                                                                                                                                                                                                                                                                                                                                                                                                                                                                                                                                                                                                                                                                                                                                                                                                                                                                                                                                                                                                                                                                                                                                                                                                                                                                                                                                                                                                                                                                                                                                                                                                  
\end{abstract}

% Note that keywords are not normally used for peerreview papers.
\begin{IEEEkeywords}
P2P network, free-rider, DHT.
\end{IEEEkeywords}}

% make the title area
\maketitle

% To allow for easy dual compilation without having to reenter the
% abstract/keywords data, the \IEEEtitleabstractindextext text will
% not be used in maketitle, but will appear (i.e., to be "transported")
% here as \IEEEdisplaynontitleabstractindextext when compsoc mode
% is not selected <OR> if conference mode is selected - because compsoc
% conference papers position the abstract like regular (non-compsoc)
% papers do!
\IEEEdisplaynontitleabstractindextext
% \IEEEdisplaynontitleabstractindextext has no effect when using
% compsoc under a non-conference mode.

% For peer review papers, you can put extra information on the cover
% page as needed:
% \ifCLASSOPTIONpeerreview
% \begin{center} \bfseries EDICS Category: 3-BBND \end{center}
% \fi
%
% For peerreview papers, this IEEEtran command inserts a page break and
% creates the second title. It will be ignored for other modes.
\IEEEpeerreviewmaketitle

\ifCLASSOPTIONcompsoc
\IEEEraisesectionheading{\section{Introduction}\label{sec:introduction}}
\else
\section{Introduction}
\label{sec:introduction}
\fi
% Computer Society journal (but not conference!) papers do something unusual
% with the very first section heading (almost always called "Introduction").
% They place it ABOVE the main text! IEEEtran.cls does not automatically do
% this for you, but you can achieve this effect with the provided
% \IEEEraisesectionheading{} command. Note the need to keep any \label that
% is to refer to the section immediately after \section in the above as
% \IEEEraisesectionheading puts \section within a raised box.

% The very first letter is a 2 line initial drop letter followed
% by the rest of the first word in caps (small caps for compsoc).
% 
% form to use if the first word consists of a single letter:
% \IEEEPARstart{A}{demo} file is ....
% 
% form to use if you need the single drop letter followed by
% normal text (unknown if ever used by IEEE):
% \IEEEPARstart{A}{}demo file is ....
% 
% Some journals put the first two words in caps:
% \IEEEPARstart{T}{his demo} file is ....
% 
% Here we have the typical use of a "T" for an initial drop letter
% and "HIS" in caps to complete the first word.
\IEEEPARstart{P}{eer-to-peer (P2P)} networks gained a significant popularity in the last decade and now responsible for a large fraction of internet traffic \cite{survey2_1}, \cite{survey2_2}. The popularity of these networks is due to their inherent advantages over traditional client-server model, e.g., the diversity of available data, scalability, robustness  and cost effectiveness. The initial setup cost for these networks is very small because costly central servers are not needed. However, lack of central control leads to the problem of unfairness in these networks, i.e., large difference between upload and download amount at any peer. In such a situation, many peers free-ride and contribute very less or nothing which results in slow downloads for other peers \cite{freeride1}. Therefore, designing and implementing an efficient incentive policy to motivate the peers to share the resources  becomes important.\par In recent years, many incentive policies have  been proposed to maintain the fairness in P2P networks \cite{global}, \cite{bci}, \cite{robust}, \cite{tit}, \cite{mtit}, \cite{give}. In these policies, peers' cooperative behavior in the network is evaluated and resources are given to them in proportion to their cooperation.\par In \cite{tit}, \cite{mtit}, \cite{give}, peers' cooperation is evaluated locally, i.e., peer cooperate with only those peers who had cooperated with them in the past. To start the process of sharing, a small amount of data is given to every peer. In such scenario, free-riders can always find a new peer to download their desired data. Also, the cooperative peers are not allowed  to download more than this small amount of data from a new peer even though they have uploaded the large amount of data to some other peers\cite{global}.\par  In \cite{global}, \cite{bci}, \cite{robust}, peers' cooperative behavior in the entire network is taken into consideration. For this purpose, in \cite{robust}, every peer keeps the record of each transaction which has happened in the entire network. It makes the implementation of algorithm very complex. In comparison to this,  \cite{global}, \cite{bci} are simpler approaches. In these approaches, peers are ranked in the entire network. The rank of the peer is determined by the contribution index. It is estimated using two factors, resources contributed by the peer in the network and contribution index of peer with whom it is transacting. Estimation of contribution index is performed by iterative methods and can be implemented in a distributed fashion. These approaches are able to balance the amount of upload and download of resources in the network. However, there are some fundamental problems in its implementation.\par First, in each iteration, index managers, i.e., peers who are managing the contribution index of other peers, need the current contribution index of peers from other peers. If clocks of the peers are not synchronized, then the peers who are reporting the contribution index of peers may report the contribution index of the previous iteration,  which may lead to the wrong estimates  \cite{sync}. \par Second, updating the contribution index in each iteration requires a lot of message overhead. This is more important when the number of iterations required to converge the algorithm is large. If new transactions happen in the network, then contribution index need to be updated. Even  one transaction, between any two peers, can affect the contribution index of all the peers in the network.\par And lastly, index managers need to keep the record of all past transactions of a peer for whom they are estimating the  contribution index. This needs a large amount of storage capacity. Keeping all these points in view, a simple incentive policy is required, which can ensure the following:
\begin{itemize}
\item It should balance the upload and download amount of resources at each peer.
\item There must be minimum rejections among the cooperative peers. 
\item Cooperation of peers must be considered in the entire network.
\item Lower message overhead and storage capacity is desirable.
\item It should be robust to peer dynamics.
\item It should be implementable in truly distributed system.
\end{itemize}
In this paper, we are proposing an incentive policy, which considers  peers' cooperation in the entire network. We are assigning the contribution index to each peer. It is a simplified form of the Biased Contribution Index (BCI) \cite{bci}. We call it Simplified Biased Contribution Index (SBCI). It also depends on the cooperation of peers in sharing the resources and in  balancing  the load in the network. SBCI is updated at regular time intervals. At any time, SBCI is calculated  using previous SBCI and the cooperation made by the peers during this period, i.e., in between previous update to current update. In the estimation of SBCI, no iterative calculation is required, hence it automatically solves the first and second  problems. Once the peers'  cooperation is modeled in terms of SBCI, it need not store the history of peers' transactions, hence, it also solves the last problem. Our simulation results show that SBCI can balance the upload and download amount at each peer with minimum rejections among cooperative peers. Hence it meets all the above design considerations.\par Rest of the paper is organized as follows. Section \ref{rwork} covers the summary of related work. The proposed incentive model is introduced in Section \ref{proposed}. Section \ref{analysis} covers the analysis of algorithm. The transaction procedure for maximum efficiency is introduced in Section \ref{procedure}. Evaluation of algorithm, through simulation is discussed in Section \ref{experiment}. Finally, paper is concluded in  Section \ref{conclusion}.
\section{Related Work}\label{rwork}
Presence of free-riding peers and its impact on fairness in P2P network have been studied earlier also \cite{freeride1}, \cite{gnutela2002}. Several approaches have been proposed by the research community \cite{global}, \cite{bci}, \cite{robust}, \cite{tit}, \cite{mtit}, \cite{give}, \cite{whitewashing}, \cite{ccom2007}, \cite{tom9},\cite{online}, \cite{ton12}, \cite{jstsp2010}, \cite{evalgame2010}, \cite{tongame2006}.
\par BitTorrent \cite{torrent}, a most popular file sharing system, used tit-for-tat (TFT) approach to prevent the free-riding. In this approach, a  peer cooperates with other peers in the same proportion as they have cooperated with him in the previous round. In each round, every peer updates the contributions of peers in the previous round. To improve the performance, many variants of TFT have been proposed. Garbacki \textit{et al.}, \cite{tit}, proposed ATFT  in which bandwidth is used rather than content to decide the incentives. Dave \textit{et al.},\cite{auction}, proposed auction based model  to improve the TFT. In this model, peers reward one another with \textit{proportional shares}, \cite{prop_share}, of bandwidth. Sherman \textit{et al.}, \cite{ton12}, proposed FairTorrent. It is a deficit based distributed algorithm in which a peer uploads the next data block to the peer, whom it owes the most data as measured by a deficit counter. In Give-to-get \cite{give}, peer ranks all its neighbors,  based on the amount of data what have been received from them in the last round and then unchokes the top three forwarders. All these mechanisms consider the local and very short history of peers' cooperation. \par 
Global history of peers' cooperation is considered in \cite{global}, \cite{bci}, \cite{robust}, \cite{mtit}. In multilevel tit-for-tat (ML-TFT) \cite{mtit}, a peer ranks other peers based on the fraction of download, what he received from them. Its time complexity is much larger for n-step ranking of peers. Feldman \textit{et al.}, \cite{robust}, proposed a robust incentive technique, which considers the peers' cooperation in the entire network, but it is not trivial to implement in a large network. Its calculation have complexity of $O(N^3)$. In Global Contribution (GC) approach \cite{global}, a peers' GC point is defined such that all peers are motivated to download from low contributing peers and upload to high contributing peers. GC point is calculated using iterative methods such as the Jacobi and Gauss-Seidel. In another similar approach, Biased Contribution Index(BCI) \cite{bci}, second order iterative function is used to calculate the BCI of peers. BCI is defined as monotonically increasing function of biased upload to download ratio. Convergence of BCI \cite{bci} is faster than GC \cite{global}.
\par Many authors proposed  approaches based on game theory  \cite{online},  \cite{jstsp2010}, \cite{evalgame2010}, \cite{tongame2006}. Free-riding can be reduced by this approach. This approach is based on the assumption that the rules of the game are known to all the players. For practically large network,  this may not be true for all the peers.
\par Reputation management system, \cite{abs}, \cite{eigen}, \cite{sat}, \cite{sort}, is  another approach in which peers' behavior is modeled as  trust. Trust is estimated by each peer based on its interaction with the other peers and then it is aggregated in the whole network. Trust is a more generalized term and depends on the overall behavior of peer in the network. In the proposed SBCI, we are focusing on the particular issues of fairness and free-riding.

\section{Proposed Incentive model}\label{proposed}
 \subsection{Design Rules to Ensure the Fair and Efficient P2P Network}
 Let us make some design rules to ensure the design considerations mentioned in Section \ref{sec:introduction}.\\
 1). If any peer only downloads the resources from the network  then its SBCI must be zero.\\
 2). If it only uploads to the network ( at least once to other than free-rider) then its SBCI must be 1.\\
 3). Uploading to the free-riders should not increase the SBCI.\\
 4). Uploading to any other peer should always increase the SBCI.\\
 5). Download should always decrease the SBCI.\\
 6). Peers must be motivated to upload to high contributing peers.\\
 7). Peers must be motivated to download from low contributing peers. 
 \subsection{Simplified Biased Contribution Index}
 Let there be $N$ peers in a P2P network. Further, we considered time evolution in discrete instances. A time instance is represented by $t_n$, and if an event happened in the time interval, $(t_{n-1}, t_n]$, it is considered to happen at $t_n$. At any time, $t_n$, let the share matrix in the entire network be $\mathbf{S(t_n)} $. Where its $ij$ element is the amount of resource shared by peer $i$ to peer $j$ at time $t_n$, i.e., in $(t_{n-1}, t_n]$. The bias ratio, $R_i(t_n)$, for peer $i$ at time $t_n$ can be defined in the  similar way as in \cite{bci}.
 
 \begin{equation}\label{eq4.1}
 R_i(t_n)=\mathbf{\frac{e_i.S(t_n).x(t_n)}{e_i.S^{tr}(t_n).x(t_n)}}
 \end{equation}
 Here, $\mathbf{ x(t_n)}$ is the SBCI vector of peers at time $t_n$. $\mathbf{S^{tr}(t_n)}$ is transpose of matrix $\mathbf{S(t_n)}$ and $\mathbf{e_i}$ is a row vector with its $i^{th}$ entry as 1 and all others as zero. Now, let us define the SBCI, $x_i(t_n)$, of peer $i$ as a monotonically increasing function of the bias ratio at time  $t_{n-1}$.
 \begin{equation}\label{eq4.2}
 \begin{split}
 x_i(t_n) & =\frac{R_i(t_{n-1})}{1 + R_i(t_{n-1})}\\
          & =\mathbf{\frac{e_i.S(t_{n-1}).x(t_{n-1})}{e_i.S(t_{n-1}).x(t_{n-1})+e_i.S^{tr} (t_{n-1}).x(t_{n-1})}}
          \end{split}
 \end{equation}
If any peer $i$ does not upload anything in the network at time $t_{n-1}$, then $\mathbf{e_i.S(t_{n-1}).x(t_{n-1}})=0$. But if it  download something from the network at this time, then $\mathbf{e_i.S^{tr}(t_{n-1}).x(t_{n-1}}) \neq 0$ only if $\mathbf{x(t_{n-1})} \neq 0$.  Therefore, to make the denominator in (\ref{eq4.2}) nonzero for zero uploading and nonzero downloading, let us replace $\mathbf{ e_i.S^{tr} (t_{n-1}).x(t_{n-1})}$ by  $\alpha \mathbf{ e_i.S^{tr} (t_{n-1}).x(t_{n-1})} +(1-\alpha)\mathbf{ e_i.S^{tr} (t_{n-1}).e}$. Here, $\alpha \in (0, 1)$ is constant and $\mathbf{e}$ is a column  vector with each element as 1. Hence, (\ref{eq4.2}) will be:
\begin{equation}\label{eq4.4}
\begin{split}
x_i(t_n) &=[\mathbf{e_i.S(t_{n-1}).x(t_{n-1})}]/[\mathbf{e_i.S(t_{n-1}).x(t_{n-1})}+\\ & \alpha \mathbf{ e_i.S^{tr} (t_{n-1}).x(t_{n-1})} + (1-\alpha)\mathbf{ e_i.S^{tr} (t_{n-1}).e}].
\end{split}
\end{equation}
SBCI in the above equation is estimated using the transactions, which are happening only at time $t_{n-1}$. If  we consider all the past transactions, then SBCI can be modified as:
\begin{equation}\label{eq4.5}
\begin{split}
x_i(t_n) & = (1-\beta_i(t_{n-1}))x_i(t_{n-1}) + \\ & \beta_i(t_{n-1})[\mathbf{e_i.S(t_{n-1}).x(t_{n-1})}]/[\mathbf{e_i.S(t_{n-1}).x(t_{n-1})}+ \\ & \alpha \mathbf{ e_i.S^{tr} (t_{n-1}).x(t_{n-1})} + (1-\alpha)\mathbf{ e_i.S^{tr} (t_{n-1}).e}].   \end{split} 
\end{equation}
If peer $i$ does not  participate in any transaction at time $t_{n-1}$, then $x_i(t_n)$ should be $x_i(t_{n-1})$. Parameter $\beta_i(t_{n-1})$ can be decided by the fraction of transaction, which are happening at time, $t_{n-1}$, at node $i$, and can be  defined as:
\begin{equation}\label{eq4.6}
\beta_i(t_{n-1})= \begin{cases}
0, &\text{if $A_{u_i}=0$}.\\
\mathbf{\frac{e_i.[S(t_{n-1})+S^{tr}(t_{n-1})].e}{e_i.[S_{comp}(t_{n-1})+{S^{tr}}_{comp}(t_{n-1})].e}}, &{otherwise}
\end{cases}
\end{equation}
Here, $A_{u_i}=\mathbf{e_i.S(t_{n-1}).x(t_{n-1}) + e_i.S^{tr}(t_{n-1}).e}$. The $\mathbf{S_{comp}(t_{n-1})}$ is a complete share matrix with its $ij$ element as the amount of resources shared by peer $i$ to peer $j$, till time $t_{n-1}$. To start the process of sharing, the SBCI vector can be initialized as, $\mathbf{x(0)}=\alpha/(1+\alpha)\mathbf{e}$, later we will see that this choice of initialization will balance the upload and download amounts in the network.

\subsection{Justification For Design Rules}
If any peer $i$, does not upload anything and only download the resources from the network at time $t_{n-1}$, then $\mathbf{e_i.S(t_{n-1}).x(t_{n-1})}=0$ and $\mathbf{e_i.S^{tr}(t_{n-1}).e} \neq 0$, hence, from  (\ref{eq4.5}),   \[x_i(t_n)=(1-\beta_i(t_{n-1}))x_i(t_{n-1})\] Let it did not upload anything in the network till time $t_n$, and  started downloading the resource first time at time $t_m$, then from (\ref{eq4.6}), $\beta_i(t_m)=1$, hence \[x_i(t_n)=(1-\beta_i(t_{n-1}))(1-\beta_i(t_{n-2}))...(1-\beta_i(t_m))x_i(t_m)=0. \] Therefore, \textbf{if any peer $i$, only downloads from the network then its SBCI  will be zero.}\par 

At time  $t_{n-1}$, if any peer $i$ uploads only to the free-riders, i.e., peers who only download without uploading anything in the network, then $\mathbf{e_i.S(t_{n-1}).x(t_{n-1})}=0$, if it does not download anything at this time, $t_{n-1}$, then $\mathbf{e_i.S^{tr}(t_{n-1}).e}=0$. Therefore, $A_{u_i}=0$, and hence from  (\ref{eq4.6}), $\beta_i(t_{n-1})=0$, and from (\ref{eq4.5}) \[x_i(t_n)=x_i(t_{n-1})\]Therefore, \textbf{uploading to the free-riders will not increase the SBCI.}\par 

At time $t_{m-1}$, if any peer $i$, only uploads the resources in the network (at least one of the downloader should be other than free-rider) and does not download anything from it  then, $\mathbf{e_i.S(t_{m-1}).x(t_{m-1})} \neq 0$ and $\alpha \mathbf{e_i.S^{tr}(t_{m-1}).x(t_{m-1})} +(1-\alpha)\mathbf{e_i.S^{tr}(t_{m-1}).e}=0$. Hence from  (\ref{eq4.5}), \[x_i(t_m)=(1-\beta_i(t_{m-1})) x_i(t_{m-1}) + \beta_i(t_{m-1})\] Let it is first time when the peer $i$ makes any transaction  in the network, then from (\ref{eq4.6}), $\beta_i(t_{m-1})=1$. Hence, 
\[x_i(t_{m})= 0.x_i(t_m{-1}) + \beta_i(t_{m-1})=\beta_i(t_{m-1})=1\] Now, at time $t_m$, if it does not participate in any transaction then \[x_i(t_{m+1})=x_i(t_{m})=1\] if it uploads to only free-riders and does not download anything  then again \[x_i(t_{m+1})=x_i(t_{m})=1\] if it uploads to at least one of the peer other than free-rider without downloading anything then,
\[x_i(t_{m+1})=(1-\beta_i(t_{m})) x_i(t_{m}) + \beta_i(t_{m})\]
\[= (1-\beta_i(t_{m})) 1 + \beta_i(t_{m})=1.\]
Hence, from mathematical induction, we can say that this is true for any $n$ thus, $x_i(t_{n})=1$\\ Therefore, \textbf{if any peer $i$, only uploads to the network (at least once to other than free-rider) then its SBCI will be 1.}\par 
If any peer $i$, uploads the resources to  non-free-rider peer, at  time $t_{n-1}$, then $\mathbf{e_i.S(t_{n-1}).x(t_{n-1})} \neq 0$. If it does not download anything at this  time then, $\alpha\mathbf{e_i.S^{tr}(t_{n-1}).x(t_{n-1})} + (1-\alpha)\mathbf{e_i.S^{tr}(t_{n-1}).e}=0$. Hence, from  (\ref{eq4.5}),\[x_i(t_{n})=(1-\beta_i(t_{n-1})) x_i(t_{n-1}) + \beta_i(t_{n-1})\] It is a convex combination of $1$ and $x_i(t_{n-1})$ hence, 
\[x_i(t_{n-1}) < x_i(t_n) < 1 \hspace{6mm}\forall \beta_i(t_{n-1}) \in (0,1)\] Therefore, \textbf{uploading to the peer other than free-rider will always increase the SBCI.}\par 
If any peer $i$, downloads the resource from the network at time  $t_{n-1}$, then $\mathbf{e_i.S^{tr}(t_{n-1}).e} \neq 0$, hence $A_u \neq 0$, therefore, from  (\ref{eq4.6}), $\beta_i(t_{n-1}) > 0$. If it does not upload anything in the network at this time, then $\mathbf{e_i.S(t_{n-1}).x(t_{n-1})} = 0$, hence from  (\ref{eq4.5}) \[x_i(t_{n})=(1-\beta_i(t_{n-1})) x_i(t_{n-1}) + \beta_i(t_{n-1}).0 \]
\[=(1-\beta_i(t_{n-1})) x_i(t_{n-1})\] hence,
\[x_i(t_{n}) <  x_i(t_{n-1})\]Therefore, \textbf{ download will always decrease the SBCI.}\par 
It can be concluded from the above discussion that high contributions will lead to high SBCI. Now, observing directly  the (\ref{eq4.5}), if peers will upload the resources to  high SBCI peers then, they will earn more SBCI. Therefore, \textbf{peers will be motivated to upload the resources to high contributing peers.}\par 
It can also be observed from  (\ref{eq4.5}) that peers will lose less SBCI, if they will download from a low SBCI peer. Therefore, \textbf{peers will be motivated to download from low contributing peers.}
\begin{figure}
\begin{center}
\begin{tikzpicture}[->,>=stealth',shorten >=1pt,auto,node distance=3cm, thick,main node/.style={circle,draw,font=\sffamily\Large\bfseries}]

  \node[main node] (1) {1};
  \node[main node] (2) [right of=1] {2};
  \node[main node] (3) [below right of=2] {3};
  \node[main node] (4) [below left of=3] {4};
  \node[main node] (5) [left of=4] {5};

  \path[every node/.style={font=\sffamily\small}]
    (1) edge node  {100} (2)
        edge node[left] {200} (3)
        
    (2) edge [bend left] node [left] {100} (5)
        
    (3) edge [bend right] node[right] {100} (2)
        edge [bend left] node[left] {200} (4)
    (4) edge node [right] {100} (1)
    
    (5) edge node  {100} (4)
        edge [bend left] node [left] {200}  (1) ;
\end{tikzpicture}
\caption{Upload and download at each peer at time  $t=0$}\label{fig4.1}
\end{center}
\end{figure}
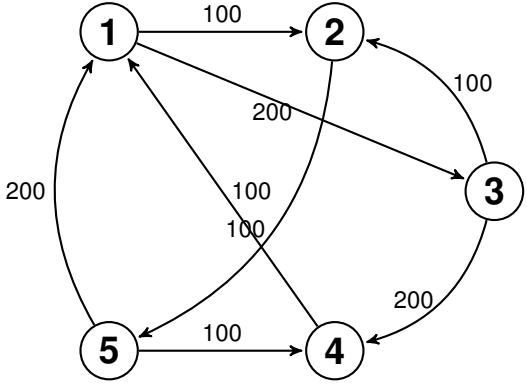
\par Let us understand the SBCI and its computation through an example. Let there be five peers A, B, C, D and E in a P2P network as shown in Fig. \ref{fig4.1}. If $\alpha= 0.9$, then initial SBCI of all the peers will be $\alpha/(1+\alpha)=0.4737$. At time $t=0$, let they share the resources as shown in figure, i.e., $S_{12}(0)=100, S_{13}(0)=200, S_{25}(0)=100, S_{32}(0)=100, S_{34}(0)=200, S_{41}(0)=100, S_{51}(0)=200, S_{54}(0)=100$ and all others are zero. Since, it is initial step, hence, for all $i$, $\beta_i(0)=1$. Using  (\ref{eq4.5}), SBCI vector at time $t=1$, can be calculated as, $\mathbf{x(1)}=[0.4737, 0.3103, 0.5745, 0.2308, 0.7297]^t$. \par Now, let peer $1$ needs the data amount of 100 units and all the four peers responded to his query, then peer 1 will select the peer with least SBCI as an uploader, in this case, peer 4 has least SBCI. After this transaction, let SBCI vector is updated at $t=2$. For $t=1$,  $S_{41}(1)=100$ and all others are zero. Hence for this time, $\beta_1(1)=1/7,\beta_2(1)=\beta_3(1)=\beta_5(1)=0$ and $\beta_4(1)=1/5$. Hence, updated SBCI  vector will be, $\mathbf{x(2)}=[0.4060, 0.3103, 0.5745, 0.3846, 0.7297]^t$. 
\subsection{Justification For Fairness}
\begin{lemma}\label{lemma1}
At any time  $t_{n-1}$, if upload and download at each peer is same and SBCI vector, $\mathbf{x(t_{n-1})}=\alpha/(1+\alpha)\mathbf{e}$,  then  SBCI vector, $\mathbf{x(t_n)}=\mathbf{x(t_{n-1})}$.
\end{lemma}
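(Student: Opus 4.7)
The plan is to evaluate the update formula (\ref{eq4.5}) directly under the two hypotheses and show that each component reduces to $\alpha/(1+\alpha)$, independent of the value of $\beta_i(t_{n-1})$. So first I would split the peers into two groups: those for which $\beta_i(t_{n-1})=0$ (no participation in the round), for whom the conclusion is immediate since (\ref{eq4.5}) collapses to $x_i(t_n)=x_i(t_{n-1})=\alpha/(1+\alpha)$; and those that do participate, where the ratio term must be shown to equal $\alpha/(1+\alpha)$.

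For the participating peers, I would translate the balance hypothesis into the algebraic statement $\mathbf{e_i.S(t_{n-1}).e} = \mathbf{e_i.S^{tr}(t_{n-1}).e}$ (row sum equals column sum of $\mathbf{S(t_{n-1})}$ at index $i$), and denote this common value by $U_i$. Substituting $\mathbf{x(t_{n-1})}=\frac{\alpha}{1+\alpha}\mathbf{e}$ pulls the scalar $\frac{\alpha}{1+\alpha}$ out of the inner products, so that
\begin{equation*}
\mathbf{e_i.S(t_{n-1}).x(t_{n-1})} = \mathbf{e_i.S^{tr}(t_{n-1}).x(t_{n-1})} = \tfrac{\alpha}{1+\alpha}\,U_i.
\end{equation*}

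Next I would plug these into the fraction inside (\ref{eq4.5}). The numerator becomes $\frac{\alpha}{1+\alpha}U_i$, while the denominator becomes
\begin{equation*}
\tfrac{\alpha}{1+\alpha}U_i \;+\; \alpha\cdot\tfrac{\alpha}{1+\alpha}U_i \;+\; (1-\alpha)U_i \;=\; U_i\cdot\frac{\alpha+\alpha^{2}+(1-\alpha)(1+\alpha)}{1+\alpha} \;=\; U_i,
\end{equation*}
so the fraction equals $\frac{\alpha}{1+\alpha}$ exactly. Then (\ref{eq4.5}) gives
\begin{equation*}
x_i(t_n)=(1-\beta_i(t_{n-1}))\cdot\tfrac{\alpha}{1+\alpha}+\beta_i(t_{n-1})\cdot\tfrac{\alpha}{1+\alpha}=\tfrac{\alpha}{1+\alpha},
\end{equation*}
which matches $x_i(t_{n-1})$ regardless of the value of $\beta_i(t_{n-1})$.

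There is essentially no hard step; the only thing to be careful about is the interpretation of the hypothesis (per-peer upload amount equals per-peer download amount, i.e.\ equality of the $i$-th row sum and $i$-th column sum of $\mathbf{S(t_{n-1})}$) and the cancellation in the denominator that makes the $(1-\alpha)$ term combine cleanly with the $\alpha$ terms. Once those are set up, the proof is a one-line algebraic simplification, and it simultaneously justifies the choice of initialization $\mathbf{x(0)}=\frac{\alpha}{1+\alpha}\mathbf{e}$ as a fixed point of the balanced regime.
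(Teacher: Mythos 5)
Your proof is correct and follows essentially the same route as the paper: substitute $\mathbf{x(t_{n-1})}=\frac{\alpha}{1+\alpha}\mathbf{e}$ and the per-peer balance $\mathbf{e_i.S(t_{n-1}).e}=\mathbf{e_i.S^{tr}(t_{n-1}).e}$ into (\ref{eq4.5}), observe that the denominator simplifies so the fraction equals $\frac{\alpha}{1+\alpha}$, and conclude via the convex combination. Your separate treatment of the $\beta_i(t_{n-1})=0$ case is a slightly more careful handling of the degenerate (non-participating, $U_i=0$) peers than the paper's, which divides by $T_i(t_{n-1})$ without comment, but the argument is otherwise the same.
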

\begin{proof}
Let upload and download for any peer $i$ at time $t_{n-1}$ be $T_i(t_{n-1})$, then
\[\mathbf{e_i.S_i(t_{n-1}).e=e_i.S^{tr}_i(t_{n-1}).e}=T_i(t_{n-1}).\] Since, $\mathbf{ x(t_{n-1})}=\alpha/(1+\alpha)\mathbf{e}= a\mathbf{e}$, here, $a=\alpha/(1+\alpha)$, hence from (\ref{eq4.5}),
\begin{multline*}\
x_i(t_n) =(1-\beta_i(t_{n-1}))a + \\  \beta_i(t_{n-1})[a\mathbf{e_i.S(t_{n-1}).e}]/[a\mathbf{e_i.S(t_{n-1}).e}+\\ \alpha a\mathbf{ e_i.S^{tr} (t_{n-1}).e} + (1-\alpha)\mathbf{ e_i.S^{tr} (t_{n-1}).e}]
\end{multline*}
\begin{multline*}\
\hspace{5mm} =(1-\beta_i(t_{n-1}))a + \\  \beta_i(t_{n-1})[aT_i(t_{n-1})]/[T_i(t_{n-1})(a+\alpha a + (1-\alpha)]
\end{multline*}
\[\hspace{5mm} =(1-\beta_i(t_{n-1}))a + \beta_i(t_{n-1})a/(a(1+\alpha)  + (1-\alpha))\] Put $a=\alpha/(1+\alpha)$, hence
\[x_i(t_n) =(1-\beta_i(t_{n-1}))a + \beta_i(t_{n-1})a=a \hspace{3mm} \forall i\]
\end{proof}

\begin{lemma}\label{lemma2}
If SBCI vector at any two successive time instances, $t_{n-1}$ and $t_n$, is same and lie on vector $\mathbf{e}$, then upload and download at  time $t_{n-1}$ will be same in each peer.
\end{lemma}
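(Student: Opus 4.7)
The plan is to invert the computation used in Lemma~\ref{lemma1}. Write $\mathbf{x(t_{n-1})} = \mathbf{x(t_n)} = a\mathbf{e}$ for some scalar $a$, and let $U_i := \mathbf{e_i\cdot S(t_{n-1})\cdot e}$ and $D_i := \mathbf{e_i\cdot S^{tr}(t_{n-1})\cdot e}$ denote the total upload and download of peer $i$ over the interval. Because the SBCI vector is proportional to $\mathbf{e}$, the weighted sums in~(\ref{eq4.5}) collapse to $\mathbf{e_i\cdot S(t_{n-1})\cdot x(t_{n-1})} = aU_i$ and $\mathbf{e_i\cdot S^{tr}(t_{n-1})\cdot x(t_{n-1})} = aD_i$, turning the scalar identity $x_i(t_n) = a$ into
\[
a = (1-\beta_i(t_{n-1}))\,a + \beta_i(t_{n-1})\,\frac{aU_i}{aU_i + (\alpha a + 1 - \alpha)D_i}.
\]

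I would then split on $\beta_i(t_{n-1})$. When $\beta_i(t_{n-1})=0$, the definition~(\ref{eq4.6}) together with $a>0$ forces $A_{u_i}=aU_i+D_i=0$, and hence $U_i=D_i=0$ trivially. When $\beta_i(t_{n-1})>0$, cancel the common $\beta_i(t_{n-1})$, clear denominators and simplify to obtain the linear relation
\[
(\alpha a + 1 - \alpha)\,D_i = (1-a)\,U_i.
\]

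Finally, I would pin down the value of $a$ and conclude. Summing this identity over all peers and invoking the global conservation $\sum_i U_i = \sum_i D_i$ (every byte transferred is counted once as an upload and once as a download) shows that whenever any transaction takes place the two coefficients must coincide, i.e.\ $\alpha a + 1 - \alpha = 1 - a$, yielding $a = \alpha/(1+\alpha)$. Substituting back makes both coefficients equal to $1/(1+\alpha)$, and therefore $U_i = D_i$ for every peer $i$, as required; the edge case with no transactions at all is degenerate and gives $U_i=D_i=0$ vacuously.

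The main obstacle is precisely that the hypothesis only says the fixed SBCI vector lies on the direction $\mathbf{e}$ without specifying the multiplier, so one cannot immediately read off $U_i = D_i$ from the per-peer equation alone -- the coefficients of $U_i$ and $D_i$ agree only at the special value $a = \alpha/(1+\alpha)$. The network-wide conservation $\sum_i U_i = \sum_i D_i$ is the clean device that selects this value and ties Lemma~\ref{lemma2} back to the equilibrium identified in Lemma~\ref{lemma1}.
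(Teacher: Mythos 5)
Your proposal is correct and follows essentially the same route as the paper: reduce (\ref{eq4.5}) under $\mathbf{x}=a\mathbf{e}$ to the per-peer relation $(\alpha a+1-\alpha)D_i=(1-a)U_i$, sum over all peers (the paper does this by pre-multiplying with $\mathbf{e^{tr}}$) and use $\sum_i U_i=\sum_i D_i$ with $T\neq 0$ to force $a=\alpha/(1+\alpha)$, then substitute back to get $U_i=D_i$ for every $i$. Your explicit handling of the $\beta_i(t_{n-1})=0$ and no-transaction cases is a minor tightening of the paper's "for nonzero $a\beta_i(t_{n-1})$" shortcut, not a different argument.
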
 
\begin{proof}
Let $\mathbf{x(t_n)=x(t_{n-1})}=a\mathbf{e}$, where $a$ is any constant, then from  (\ref{eq4.5})
\begin{multline*}\
a =(1-\beta_i(t_{n-1}))a + \\  \beta_i(t_{n-1})[a\mathbf{e_i.S(t_{n-1}).e}]/[a\mathbf{e_i.S(t_{n-1}).e}+\\ \alpha a\mathbf{ e_i.S^{tr} (t_{n-1}).e} + (1-\alpha)\mathbf{ e_i.S^{tr} (t_{n-1}).e}].
\end{multline*}
Manipulating  above, we get
\begin{multline*}\
a\beta_i(t_{n-1}) =  a\beta_i(t_{n-1})[\mathbf{e_i.S(t_{n-1}).e}]/[a\mathbf{e_i.S(t_{n-1}).e}+\\ \alpha a\mathbf{ e_i.S^{tr} (t_{n-1}).e} + (1-\alpha)\mathbf{ e_i.S^{tr} (t_{n-1}).e}].
\end{multline*}
For nonzero $a\beta_i(t_{n-1})$,
\[a\mathbf{e_i.S(t_{n-1}).e}+(a\alpha+1-\alpha)\mathbf{ e_i.S^{tr} (t_{n-1}).e}=\mathbf{e_i.S(t_{n-1}).e}.\] Solving,
\begin{equation}\label{eq4.7}
(a\alpha+1-\alpha)\mathbf{ e_i.S^{tr} (t_{n-1}).e}=(1-a)\mathbf{e_i.S(t_{n-1}).e} \hspace{3mm}\forall i
\end{equation}
 Since, $i=1,2,...,N$, hence this set of $N$ equations can be written in the form of matrix as follows,
\[(a\alpha+1-\alpha)\mathbf{S^{tr} (t_{n-1}).e}=(1-a)\mathbf{S(t_{n-1}).e}\] Pre-multiplying by $\mathbf{e^{tr}}$ on both sides,
\[(a\alpha+1-\alpha)\mathbf{e^{tr}.S^{tr} (t_{n-1}).e}=(1-a)\mathbf{e^{tr}.S(t_{n-1}).e}\] for any matrix, $\mathbf{S(t_{n-1})}$, $\mathbf{e^{tr}.S(t_{n-1}).e}$ will be the sum of all of its elements, hence $\mathbf{e^{tr}.S(t_{n-1}).e}=\mathbf{e^{tr}.S^{tr}(t_{n-1}).e}=T$ hence,
\[(a\alpha+1-\alpha)T=(1-a)T\] since $T \neq 0$ hence, 
\[a=\frac{\alpha}{1+\alpha}\] Substituting the value of $a$ in (\ref{eq4.7})
\[(\frac{\alpha^2}{1+\alpha}+1-\alpha)\mathbf{ e_i.S^{tr} (t_{n-1}).e}=(1-\frac{\alpha}{1+\alpha})\mathbf{e_i.S(t_{n-1}).e}\] or
\[(1+\alpha)\mathbf{ e_i.S^{tr} (t_{n-1}).e}=(1+\alpha)\mathbf{e_i.S(t_{n-1}).e}\] since $\alpha \in (0,1)$, hence
\[\mathbf{ e_i.S^{tr} (t_{n-1}).e}=\mathbf{e_i.S(t_{n-1}).e} \hspace{4mm} \forall i\] Hence, upload and download at time $t_{n-1}$ will be same in each peer $i$.

\end{proof}
% needed in second column of first page if using \IEEEpubid
%\IEEEpubidadjcol

\section{Analysis of Algorithm}\label{analysis}
\subsection{Implementation in Distributed System}
SBCI of each peer can be calculated distributively as shown in  Algorithm \ref{algo4.1}. Each peer's SBCI can be calculated  and  managed by some other peer in the network. We call it index manager and the peer whose SBCI is being calculated by this peer is called its daughter peer. The index manager peer can be located using distributed hash table (DHT) such as Chord \cite{chord}, CAN \cite{can}, Pastry \cite{pastry} and Tapestry \cite{tapestry}. Each peer $i$ will send the values of resources uploaded and downloaded to and from other peer $j$ to the index manager of peer $j$. An index manager peer  will collect the values of resources uploaded and downloaded by its daughter peer $k$, to other peers. Each index manager will locate the index manager of peer $j$ and will receive the current SBCI,  $x_j(t_{n-1})$, of peer $j$.\par Now each index manager possesses  all the things to calculate  the SBCI of its daughter peer using (\ref{eq4.5}). The $\beta_k(t_{n-1})$ can be calculated using (\ref{eq4.6}). If $A_u=0$ then it is zero, otherwise it is just a ratio of the current transaction amount to the total transaction amount made by peer $k$, till time $t_{n-1}$. The total amount of transactions can be updated by adding the current amount of transaction with the previous total amount of transactions.  
\subsection{Message Overhead, Storage Capacity and Time Complexity}
In this method, the SBCI is calculated directly while in other similar approaches \cite{global}, \cite{bci} iterative calculations are required. Therefore, the total number of messages required to calculate the SBCI, in this method will be $I_1$ and $I_2$ times lesser than \cite{global} and \cite{bci} respectively. Where $I_1$ and $I_2$  is the number of iterations  required to converge the algorithm in \cite{global} and \cite{bci} respectively.\par In this algorithm, index manager needs to store only two information about its daughter peer, i.e., current SBCI and total amount of transaction till $t_{n-1}$. While in \cite{global}, \cite{bci}, all transaction history of its daughter peer, i.e., amount of transaction, ID of peer with whom it transacted and whether it was upload or download, are required to be stored. Therefore, the required amount of storage is reduced very much. \par Time complexity of algorithm for one update can be calculated directly from (\ref{eq4.6}). It will be $O(N)$ per peer which  is same as in  \cite{global} and \cite{bci}.

\begin{algorithm}
\caption{For Updating the SBCI of Peers}\label{algo4.1}
\begin{algorithmic}[1]
\State \textbf{Input:} Amount of upload and download of peers
\State \textbf{Output:} SBCI with index managers
\Procedure{}{}
\For { each peer $i$ }
\ForAll {peer $j$, who is selected as source peer}
\State Download the resource 
\State Send the value of resource to the index manager of peer $j$;
\EndFor \textbf{all}
\ForAll {peer $j$, who selected peer $i$ as source peer}
\State Upload the resource 
\State Send the value of resource to the index manager of peer $j$;
\EndFor \textbf{all}
\If { Peer $i$ is index manager of peer $k$}
\ForAll{ peer $j$, who  transacted with peer $k$ }
\State Receive the value of resource uploaded $S_{kj}(t_{n-1})$; 
\State Receive the value of resource downloaded $S_{jk}(t_{n-1})$;
\If{$t=0$}
\State $\backslash\backslash$ Initialization of parameters
\State Set $\mathbf{x(0)}=(\alpha/(1+\alpha))\mathbf{e}$;
\Else
\State Locate $j^{th}$ peer's index manager; 
\State Receive the current SBCI, $x_j(t_{n-1})$ of peer $j$ from these index managers ;
\EndIf
\EndFor \textbf{all}
\State $\backslash\backslash$ Initialization of the amount of total transactions
\State Set $Ttr_k(0)=0$
\State Compute
\State $A_{u_k}=\mathbf{e_k.S(t_{n-1}).x(t_{n-1}) + e_k.S^{tr}(t_{n-1}).e}$
\If{$A_{u_k}=0$}
\State $\beta_k(t_{n-1}) =0$
\Else 
\State $\delta Ttr_k(t_{n-1})={\mathbf{ e_k.(S(t_{n-1})+S^{tr}(t_{n-1})).e}}$
\State $Ttr_k(t_{n-1})=Ttr_k(t_{n-2})+   \delta Ttr_k(t_{n-1})$
\State $\beta_k(t_{n-1}) =\frac{\delta Ttr_k(t_{n-1})}{Ttr_k(t_{n-1})}$
\EndIf
\State  Compute $x_k(t_n)$ using  (\ref{eq4.5})
\State  Save $x_k(t_n)$
\State  Save $Ttr_k(t_{n-1})$
\EndIf
\EndFor
\EndProcedure
\end{algorithmic}
\end{algorithm}
\section{Transaction Procedure for Maximum Efficiency}\label{procedure}
\subsection{Simple Procedure For Peer Selection}
All the peers are rational and aware of the fact that, if they will share their resources with peer having high SBCI, then their SBCI will be higher, and if they will download from a low SBCI peer then they will lose less SBCI. Therefore, the simple peer selection procedure for any peer $i$ is to download from low SBCI peer and to upload to high SBCI peer, as far as possible, as shown in Algorithm \ref{algo4.2}.

\begin{algorithm}
\caption{Simple Procedure for Peer Selection}\label{algo4.2}
\begin{algorithmic}[1]
\Procedure{}{}
\If { Peer $i$ needs a resource}
\State Send the request for resource;
\State Get the SBCI of responding peers from their respective index managers; 
\State Select the source peer having minimum SBCI;
\State Download the resource;
\State Send the value of resource to the index manager of source peer;
\EndIf
\If {Peer $i$ get a request for a resource}
\State Get the SBCI of requesting peer from their respective index managers;
\If{ SBCI of all requesting peer is less than the threshold}
\State Reject all the requesting peers;
\Else
\State Select the peer with maximum SBCI;
\State Upload the resource;
\State Send the value of resource to the index manager of downloading peer;
\EndIf
\EndIf
\EndProcedure
\end{algorithmic}
\end{algorithm}

\subsection{College Admission and The Stability of Marriage Based Approach For Peer Selection }
\subsubsection{Preliminaries}
College Admission and the stability of Marriage is a well-known problem, introduced by Gale and Shapley \cite{stable}. In its most popular variants, there are two disjoint sets of cardinality, $n$. One set is representing the men and the other one is representing the women. Each person has a different order of preference for his or her marriage partner. There are several ways by which one-to-one pairing can be done. But a pairing is said to be stable, if  there is no pair both of whom prefer each other to their actual partner.\par Gale and Shapley \cite{stable} provide the solution and the algorithm for stable pairing. They also proved that there always exists a stable match for such type of problem. In this algorithm, one of the group proposes his or her first preference, another group can reject the proposal or can keep it on hold until they get a better option. If any member from the proposing group get rejected, he or she tries on next preference. This process continues until proposing group is not rejected or rejected by all of his or her preferred  partners.\par If a proposal is given by men, then they get the better preferred partner as compared to any other stable pairing, hence it is called man optimal stable matching, the other way around women optimal stable matching.
\subsubsection{Application in Peer Selection}
We considered the situation where there are many uploaders and many downloaders for a resource. In order to earn the high SBCI, uploader would like to upload the resource to high SBCI peers, thus they have certain preferences for downloaders. On the other hand, for downloaders the resource and the SBCI both matter. Therefore, downloader may prefer the higher bandwidth uploader over low SBCI uploader. Thus, downloder have a different preference order for uploaders.\par In this situation, all uploaders and downloaders preference order can be collected at a certain node. We call it the resource manager node. This node can be found by hashing the resource identifier and finding corresponding root node in DHT network. On this node, the stable marriage algorithm can be used to pair the uploaders and downloaders. A message to each  pair will be sent after pairing, so that they can start the process of transaction. Detail of peer selection procedure in this situation is shown in Algorithm \ref{algo4.3}.  

\begin{algorithm}
\caption{College Admission and The Stability of Marriage Based Approach for Peer Selection}\label{algo4.3}
\begin{algorithmic}[1]
\Procedure{}{}
\If { Peer $i$ needs a resource}
\State Send the request for resource;
\State Get the SBCI of responding peers from their respective index managers; 
\State Learn about the bandwidth of responding peers;
\State Make the order of preference for uploader;
\State Send the order of preference to the resource manager;
\State Get the ID of uploader partner from resource manager;
\State Download the resource;
\State Send the value of resource to the index manager of uploading peer;
\EndIf
\If {Peer $i$ get a request for a resource}
\State Get the SBCI of requesting peers from their respective index managers;
\State Remove the peers, having SBCI  less than the threshold;
\State Make the order of preference according to their SBCI;
\State Send the order of preference to the resource manager;
\State Get the ID of downloader partner from resource manager;
\State Upload the resource;
\State Send the value of resource to the index manager of downloading peer;
\EndIf
\If {Peer $i$ is the resource manager}
\State Get the order of preference from uploaders;
\State Get the order of preference from downloaders;
\State Run the Stable marriage algorithm;
\State Send the ID of partner to each peer;
\EndIf
\EndProcedure
\end{algorithmic}
\end{algorithm}
\section{Experimental Evaluation}\label{experiment}
As in \cite{sfa} and \cite{eval}, we used NetLogo 5.2 \cite{netlogo}, to evaluate the performance of our algorithm. NetLogo is a multiagent programmable modeling environment where we can model different agents and can ask them to perform the task in parallel and independently. It is written mostly in Scala, with some parts in Java.
\subsection{Simulation Setup}
We simulated a typical P2P network with parameters and distributions taken from real world measurements as in \cite{gnutela}, \cite{ton13}. In this network, peers can send a query for the  resource. We assumed that ten percent of peers respond to this query. After selecting the source peer according to the procedure described in Section \ref{procedure}, resource is downloaded. We assumed the amount of resources requested by downloading peers varies randomly between 1 unit  to 255 units. After downloading the resource, SBCI of peer is updated by an index manager using  (\ref{eq4.5}). Any peer whose SBCI is less than the threshold value is rejected and cannot download the  resources from the network. We assumed the threshold value of SBCI to be $\alpha/(1+\alpha)$.   \par The number of nodes in the network is taken as 1000, which is reasonable size. However, the number of nodes can be increased up to any number, but this will not affect the results. Because, evaluation metrics are normalized with respect to the number of nodes. The initial value of SBCI of all the peers are taken as $\alpha/(1+\alpha)$. We conducted the experiment for $\alpha= 0.9,0.6$ and $0.3$. Percentage of free-riders were varied from 10\% to 80\%. 
\par The simulation is performed for three different peer distribution models, i.e., Simple, Adaptive and Extreme Model.\par In Simple Model, free-riders  vary from 10\% - 70\%. These free-riders do not share anything at any point of time in the simulation.\par In Adaptive Model, free-riders vary from 20\% - 60\%. Half of these free-riding peers do not share anything during the whole simulation. Remaining half behave as normal peers till midway of simulation, and thereafter convert
themselves to free-riders.\par 
In Extreme Model, at the beginning of simulation, 10\% peers are free-riders. After completion of every 12.5\% of total transactions, 10\% more peers convert themselves to free-riders. Thus, at the end of simulation, there will be 80\% free-riders. The simulation was run upto 100000 transactions. 
\subsection{Evaluation Metrics}
We plotted the graph between the total upload and download amounts of each peer for all the models. To get the deeper picture, we also calculated the average absolute deviation (AAD) of upload to download ratio from one, in any model as:
\[AAD = (1/N)\sum_{i=1}^{N}|1-\mathbf{e_i.S_{comp}(t_n).e/e_i.{S^{tr}}_{comp}(t_n).e}|.\] 
If upload amount for each peer is same as download amount, then the value of $AAD$ in the network will be zero. The larger value of $AAD$ implies, the larger difference between upload and download and thus, lesser fairness in the network.\par Network is said to be efficient if free-riders are not allowed to download anything, without affecting the transactions between non-free-rider peers. At any time, if SBCI of any cooperative peer is less than the threshold then it will also get rejected. This is not a desired state in the network. Therefore, we calculated the percentage of rejections among cooperative peers, i.e., cooperative peers  rejecting the request of cooperative peers. For efficient algorithm, percentage of rejections must be minimum. \par  For comparison,  we also simulated the GC for its best case \cite{global}, i.e., $\alpha=0.8$ and $\beta= 0.2$. The parameters $\alpha$ and $\beta$ are taken to be same as in \cite{global}. For fair comparison, we kept the threshold value for peer selection as $(2-\alpha(1+\beta))/(2+\alpha(1-\beta))$. We kept maximum value of  threshold in both GC as well as in SBCI. Rest of the settings for GC are same as in SBCI.

\subsection{Simulation Results of Simple Procedure For Peer Selection}

%\lipsum[1-2]
%\begin{figure*}
%  \includegraphics[width=\textwidth,height=4cm]{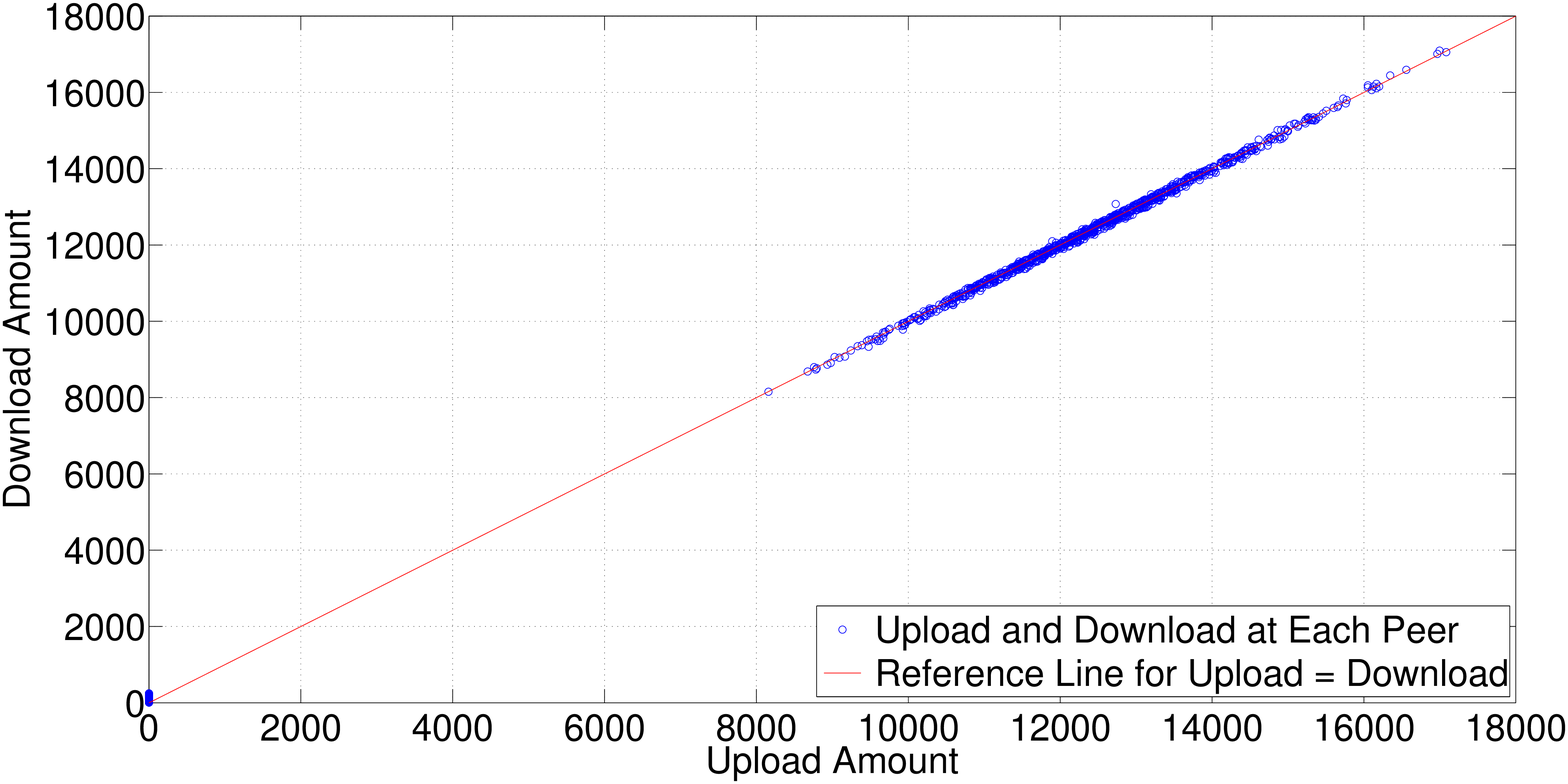}
%  \caption{This is a tiger.}
%\end{figure*}
%\lipsum[3-10]

\begin{figure*}
\centering 
  \subfloat[Free Riders = 10 \%, $\alpha=0.9$]{ \includegraphics[width=6cm,height=4cm,scale=.18]{ten_percent_free_riders_alpha_point_nine}  }
\subfloat[Free Riders = 10 \%, $\alpha=0.6$]{
\includegraphics[width=6cm,height=4cm,scale=.18]
{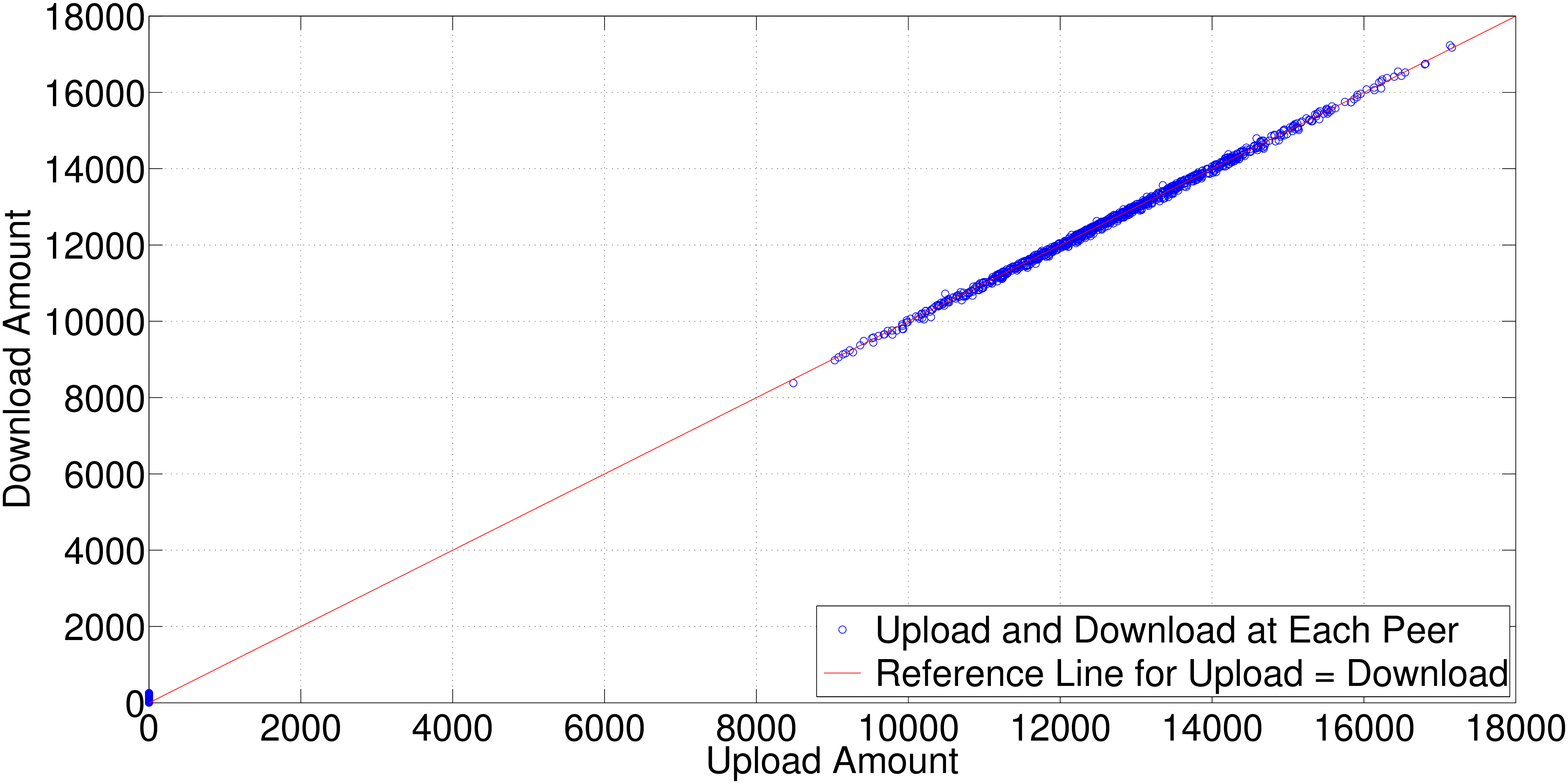} }
\subfloat[Free Riders = 10 \%, $\alpha=0.3$]{
\includegraphics[width=6cm,height=4cm,scale=.18]
{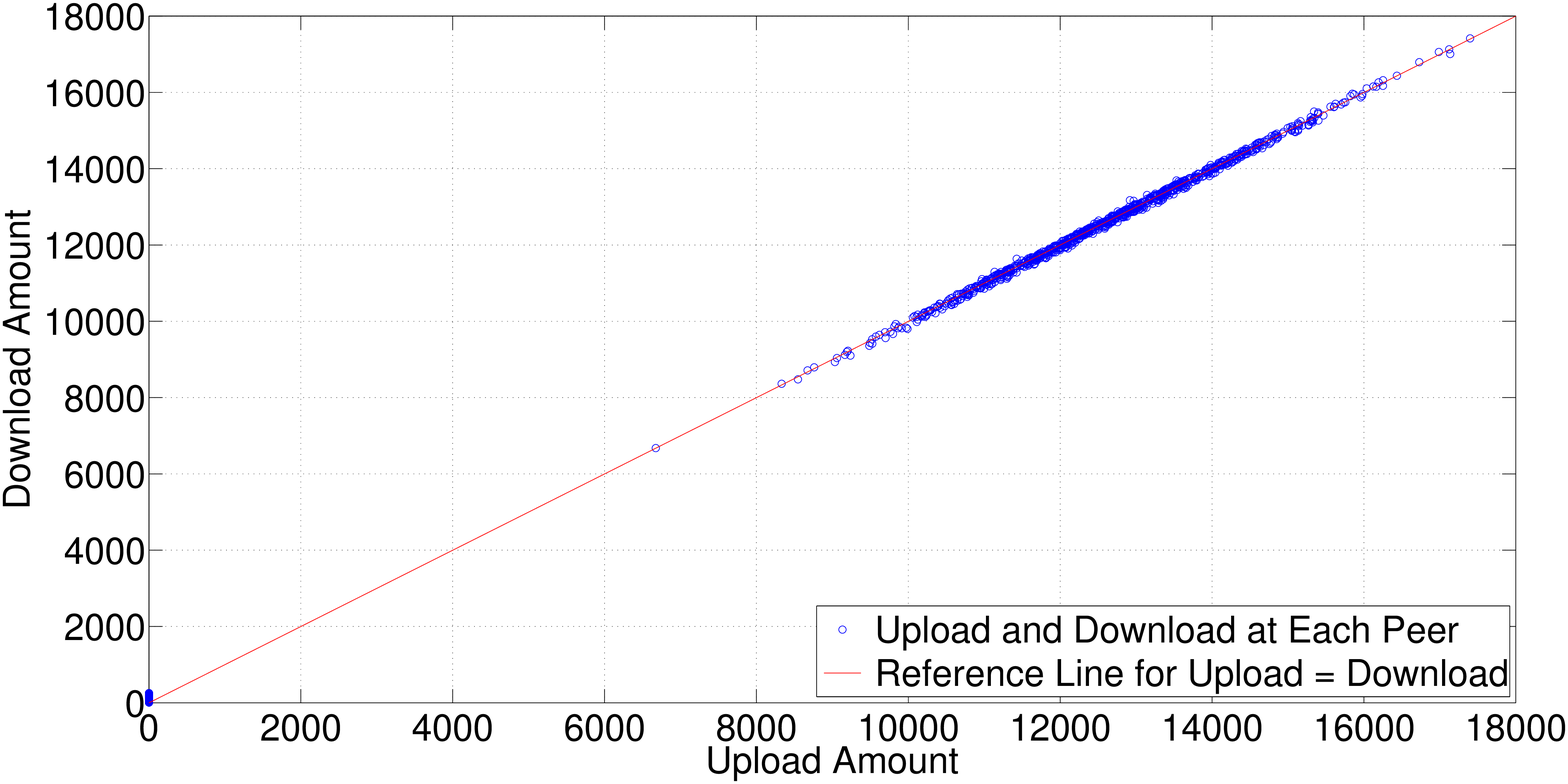}}\\
%\caption{Upload and Download Amount at Each Peer in the Presence of $10\%$ Free Riders for Different values of $\alpha$}\label{ten}
%\end{figure*}
%\lipsum[3-10]

%\begin{figure*}
%\centering
\subfloat[Free Riders = 30 \%, $\alpha=0.9$]{
\includegraphics[width=6cm,height=4cm,scale=.18]
{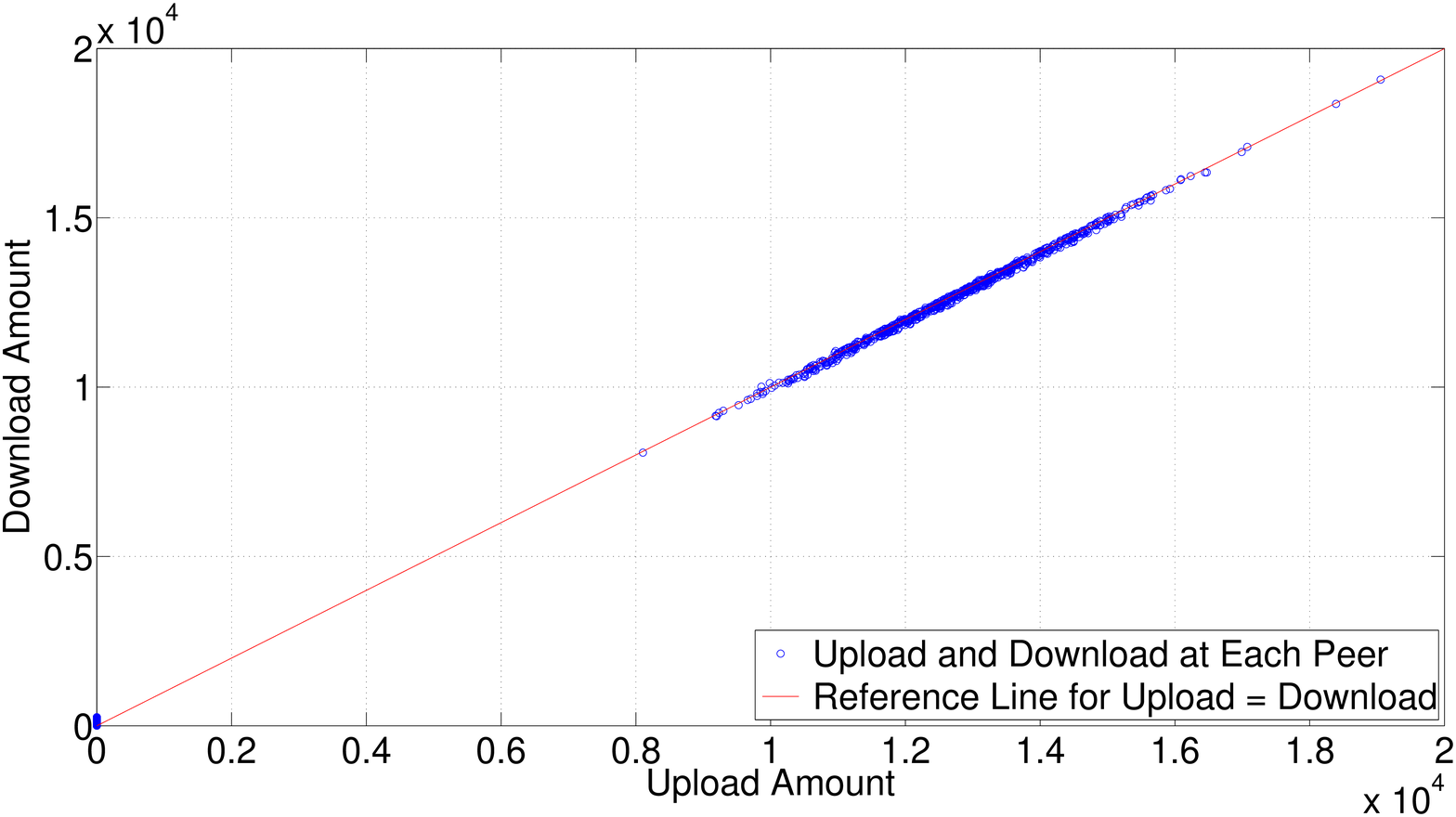} }
\subfloat[Free Riders = 30 \%, $\alpha=0.6$]{
\includegraphics[width=6cm,height=4cm,scale=.18]
{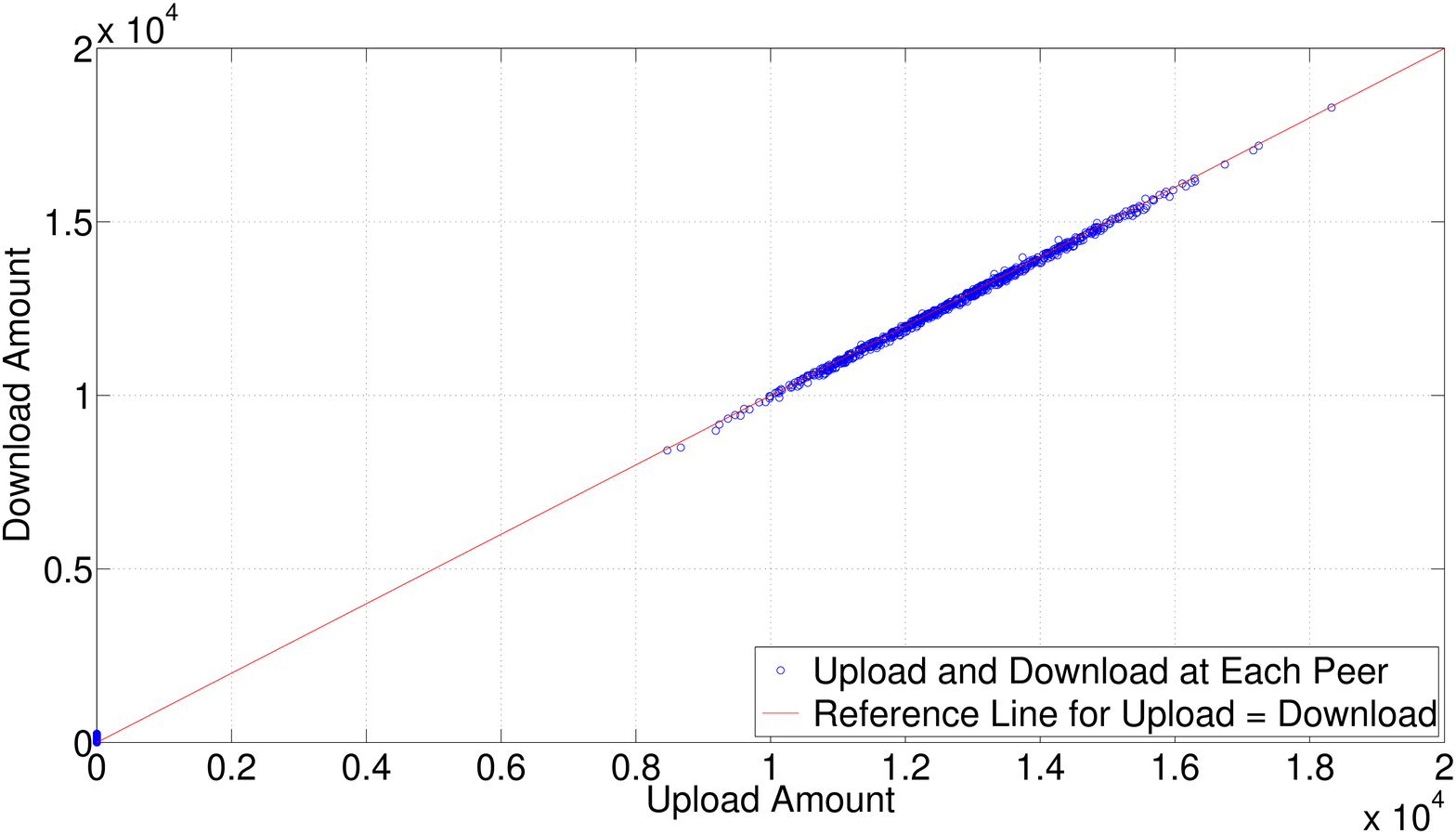}}
\subfloat[Free Riders = 30 \%, $\alpha=0.3$]{
\includegraphics[width=6cm,height=4cm,scale=.18]
{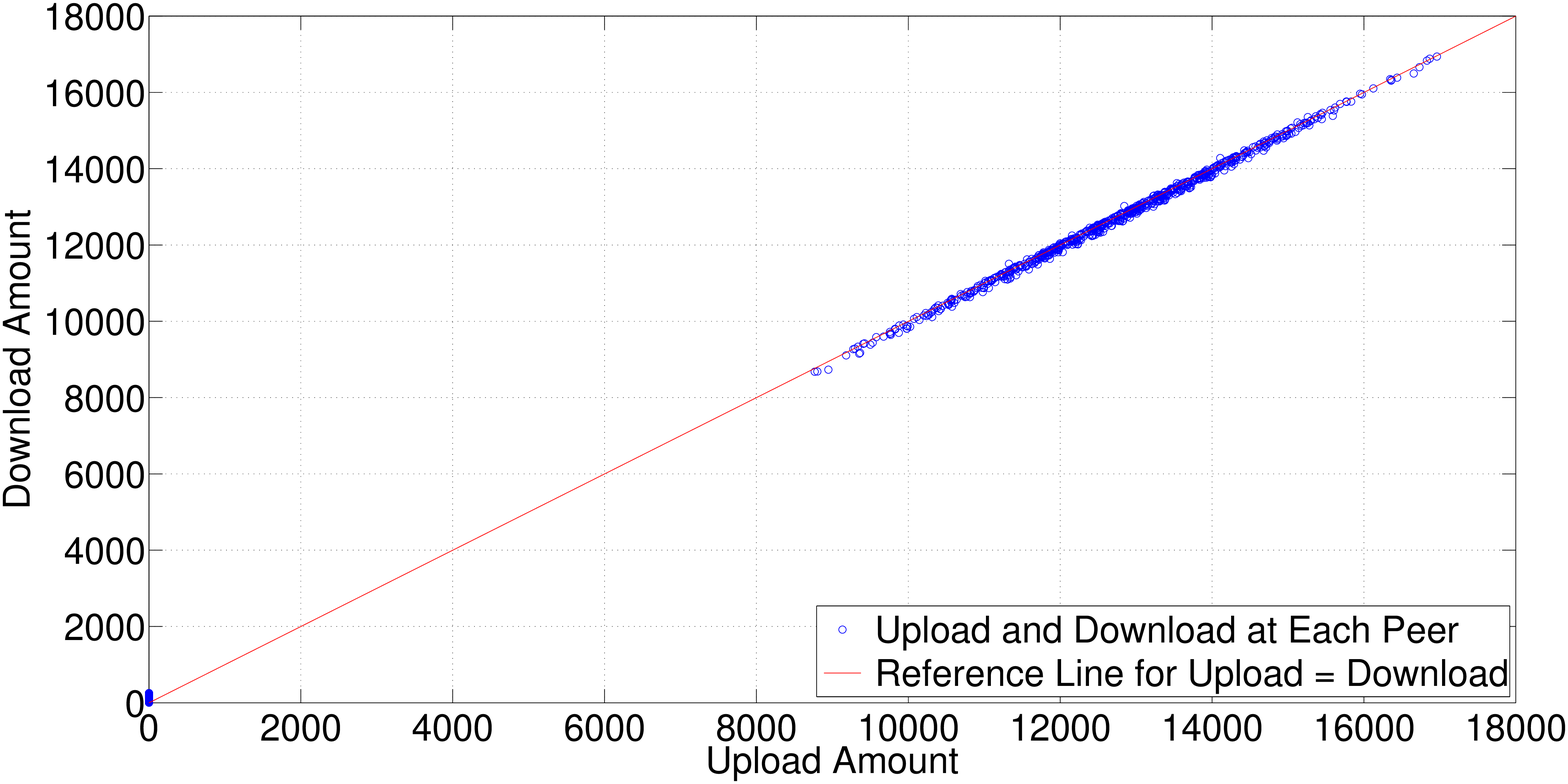} }\\
%\caption{Upload and Download Amount at Each Peer in the Presence of $30\%$ Free Riders for Different values of $\alpha$}\label{thirty}
%\end{figure*}
%\begin{figure*}
%\centering
\subfloat[Free Riders = 50 \%, $\alpha=0.9$]{
\includegraphics[width=6cm,height=4cm,scale=.18]
{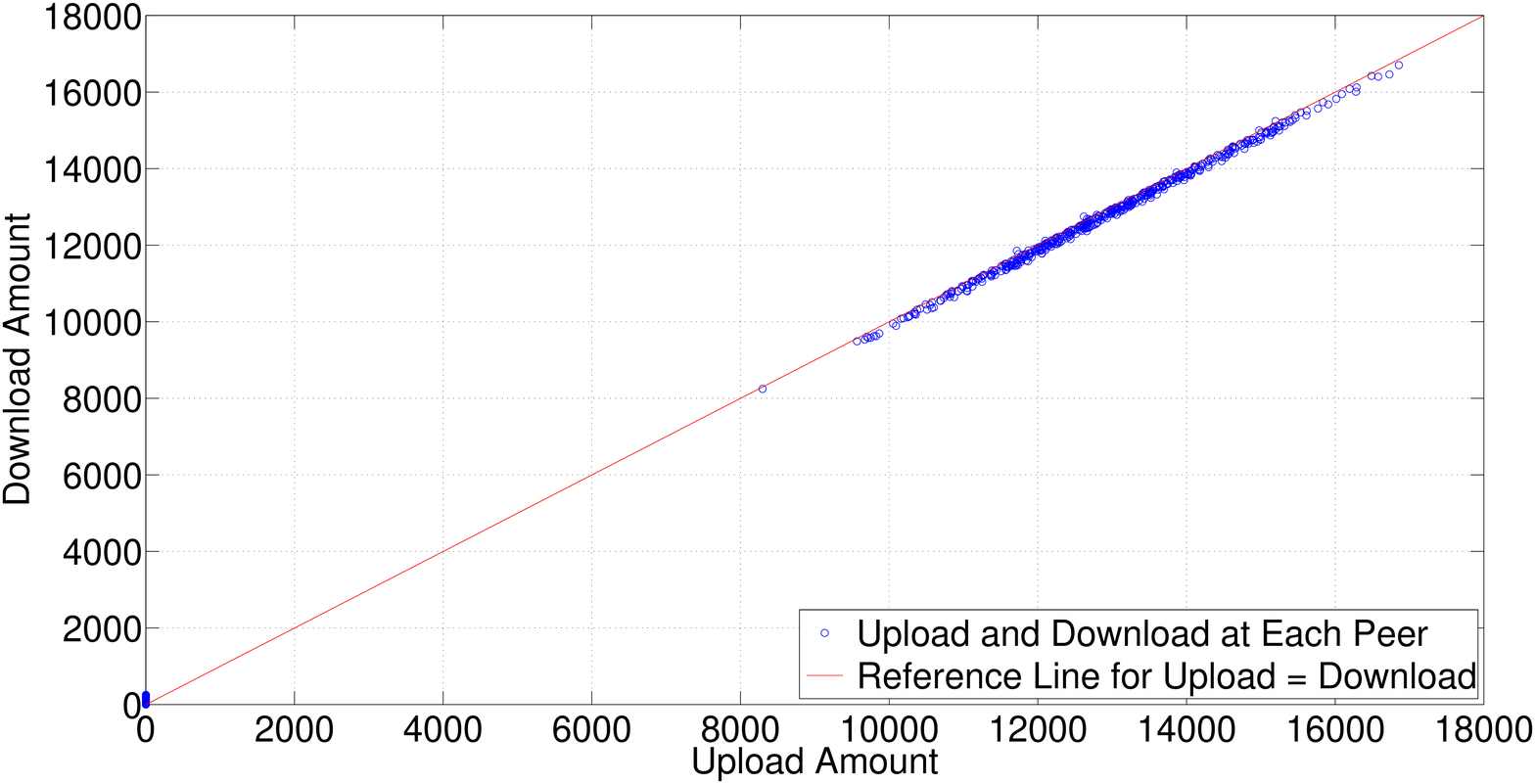}}
\subfloat[Free Riders = 50 \%, $\alpha=0.6$]{
\includegraphics[width=6cm,height=4cm,scale=.18]
{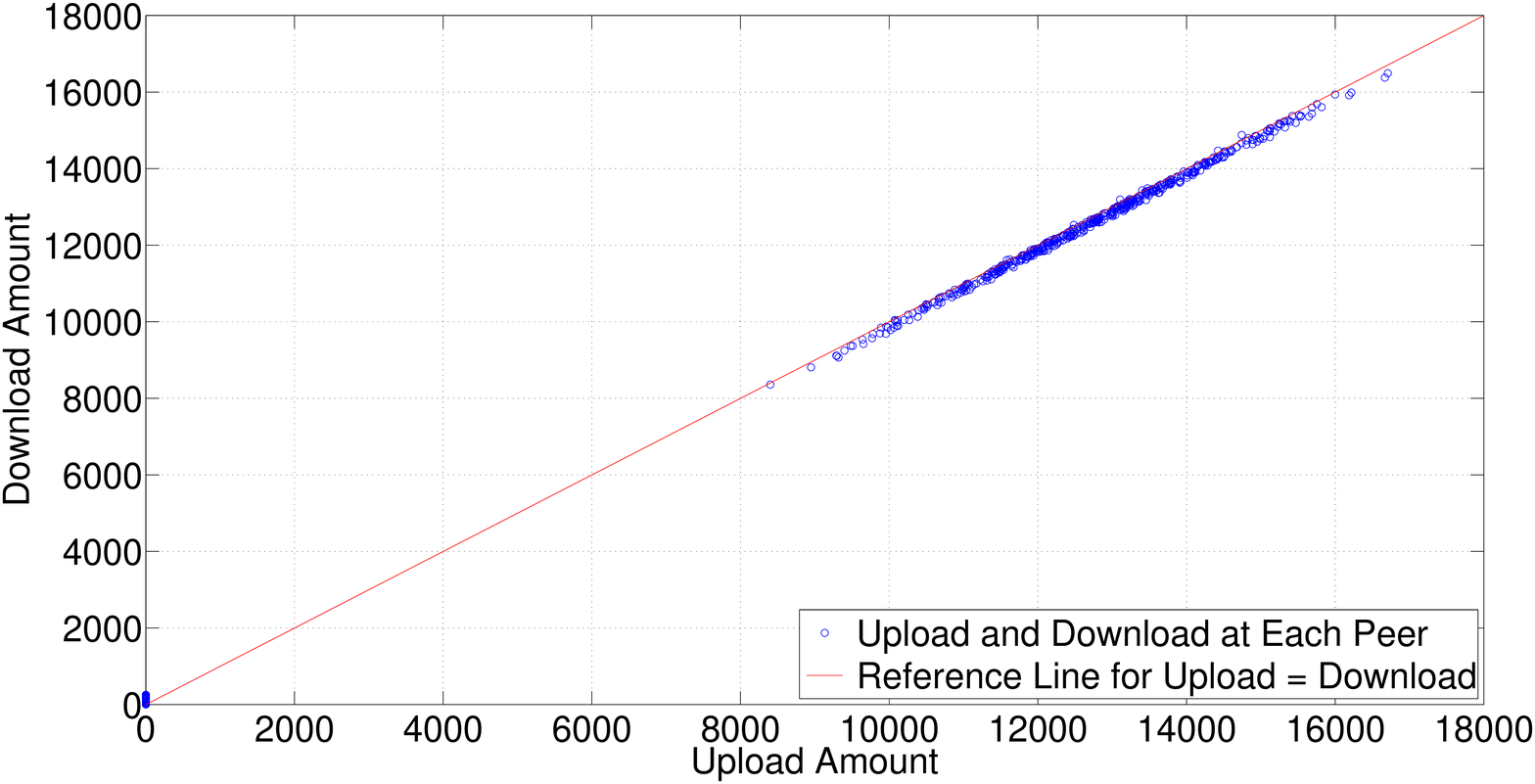}  }
\subfloat[Free Riders = 50 \%, $\alpha=0.3$]{
\includegraphics[width=6cm,height=4cm,scale=.18]
{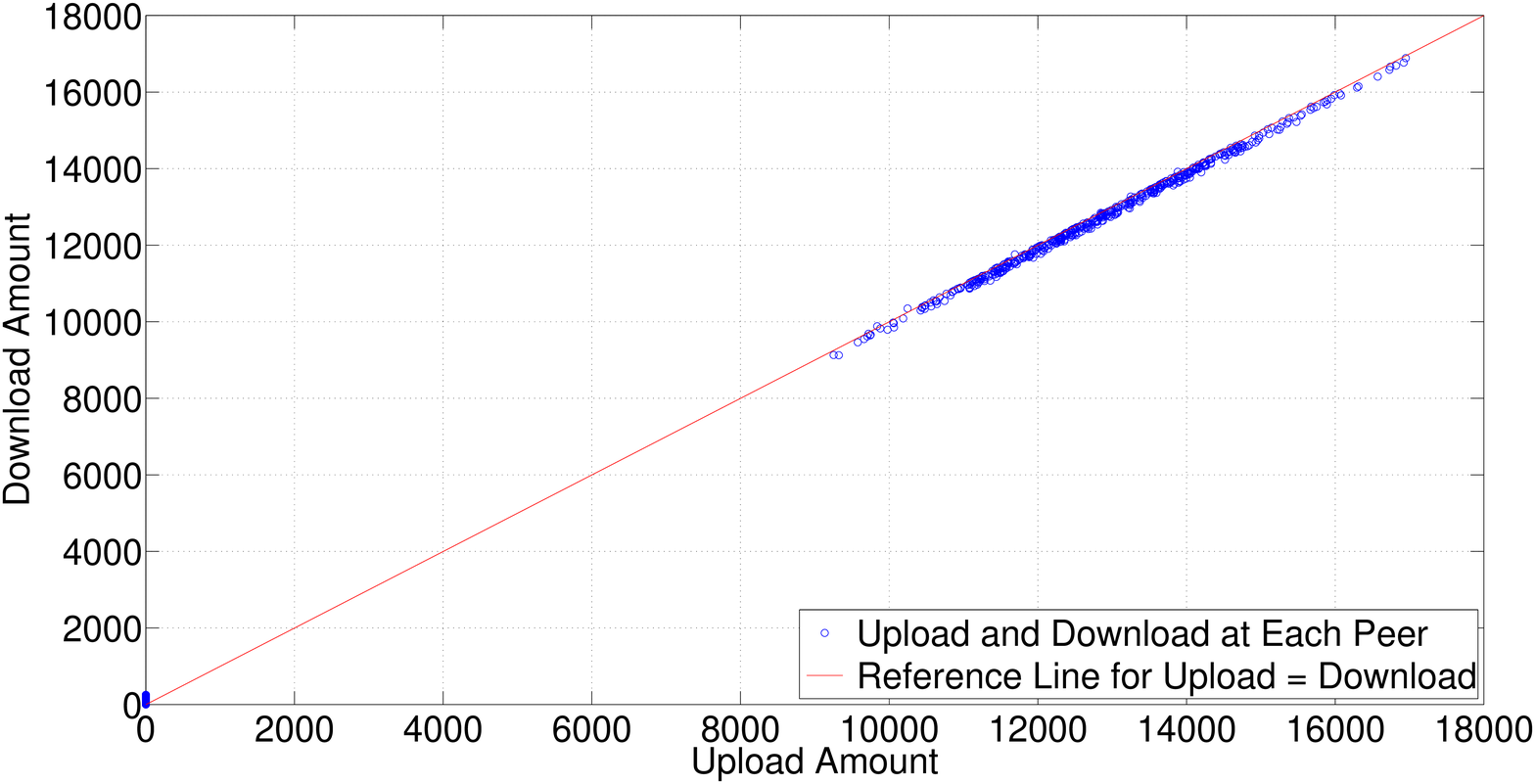}}\\
%\caption{Upload and Download Amount at Each Peer in the Presence of $50\%$ Free-riders for Different values of $\alpha$}\label{fifty}
%\end{figure*}
%\begin{figure*}
%\centering
\subfloat[Free Riders = 70 \%, $\alpha=0.9$]{
\includegraphics[width=6cm,height=4cm,scale=.18]
{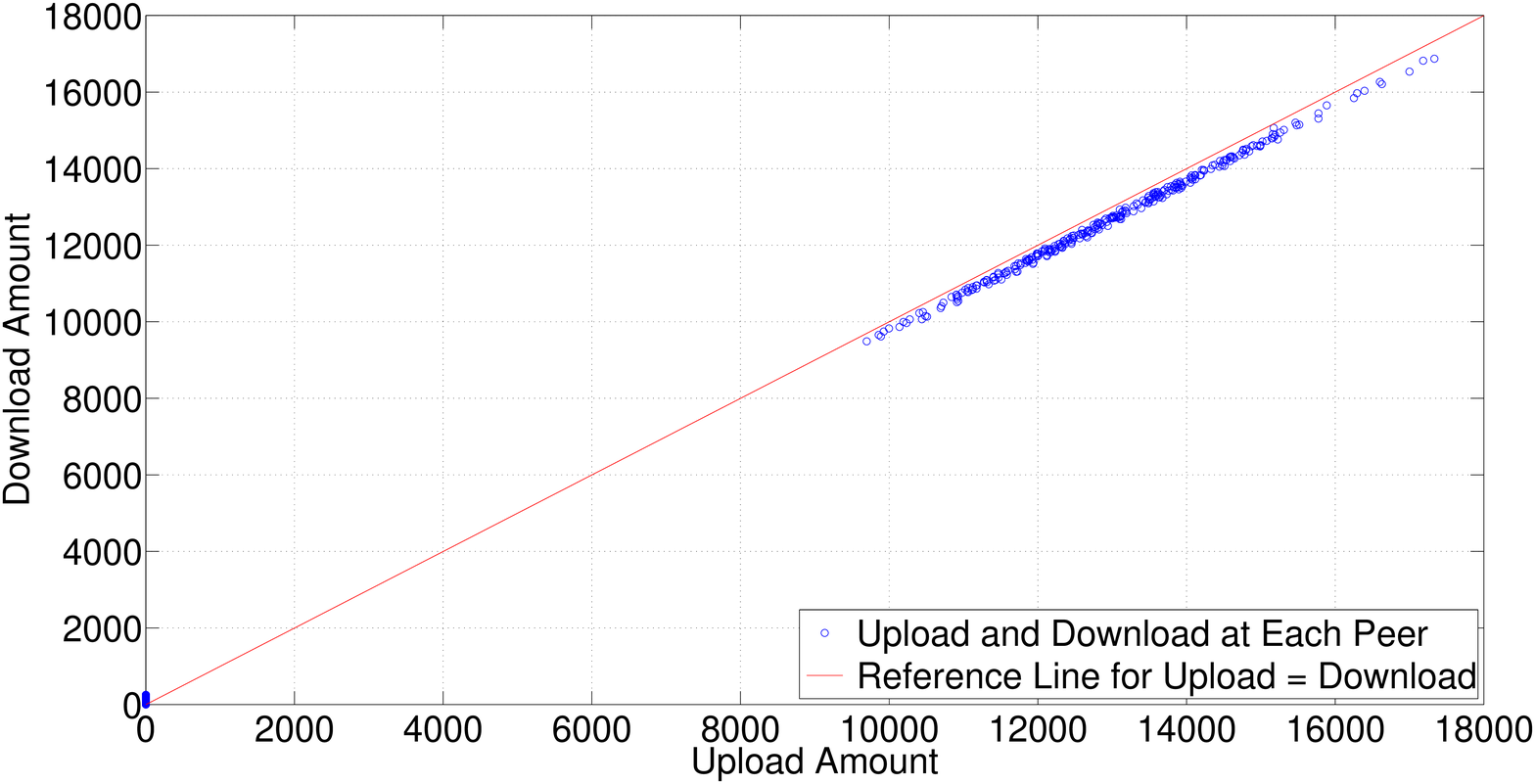} }
\subfloat[Free Riders = 70 \%, $\alpha=0.6$]{
\includegraphics[width=6cm,height=4cm,scale=.18]
{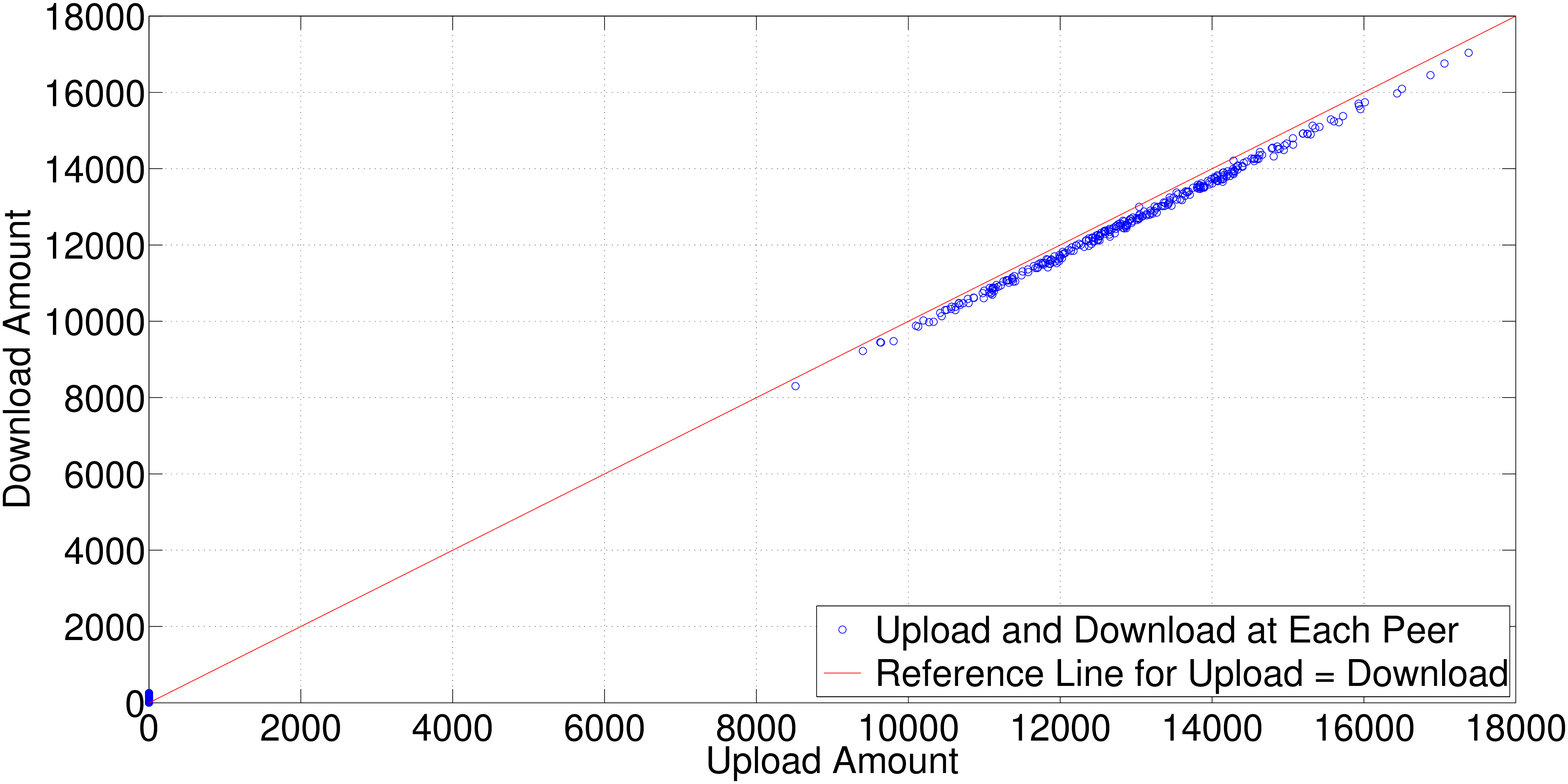}}
\subfloat[Free Riders = 70 \%, $\alpha=0.3$]{
\includegraphics[width=6cm,height=4cm,scale=.18]
{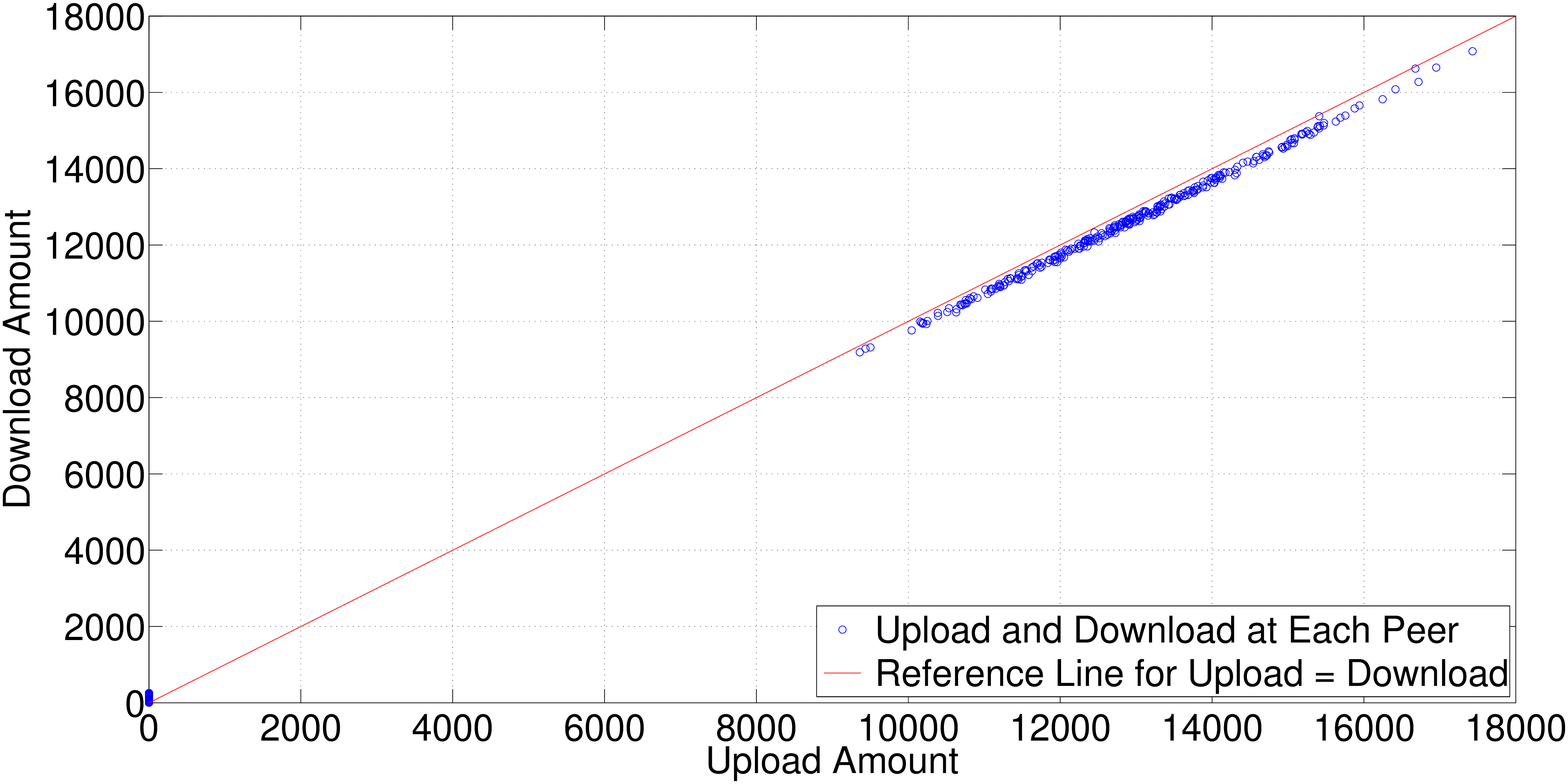} }
 \caption{Upload and Download Amount at Each Peer for SBCI in Simple Model for Simple Procedure of Peer Selection. Free-riders are varied from $10\%-70\%$ and the value of $\alpha$ is taken as $0.9, 0.6$ and 0.3}\label{seventy}
\end{figure*}

\begin{table}
\begin{center}
\caption{ AAD and $\%$ of Rejections for SBCI in Simple Model for Simple Procedure of Peer Selection }\label{table4.1}
\begin{tabular}{ | m{0.5cm} | m{1em}| m{5em}|m{5em}| m{5.5em}|} 
 \hline
  S.N.& $\alpha$ &Free-riders & AAD &$\%$ of Rejections \\[1ex] 
 \hline
 $1$& 0.9  & 10\% &  0.103772&1.817\\
    \hline
 $2$& 0.9  & 30\% &  0.303675&0.127\\
    \hline
 $3$& 0.9  & 50\% &  0.505119&0.023\\
    \hline
 $4$& 0.9  & 70\%&  0.707233&0.008\\
    \hline
 $5$& 0.6  & 10\% & 0.103667&0.064\\
    \hline
 $6$& 0.6  & 30\% &  0.303652&0.009\\
    \hline
 $7$& 0.6  & 50\% &  0.505459&0.006\\
    \hline
 $8$& 0.6  & 70\% &  0.707117&0.003\\
    \hline
 $9$& 0.3  & 10\% &  0.103844&0.009\\
  \hline
  $10$& 0.3  & 30\% &  0.303671&0.002\\
    \hline
 $11$& 0.3  & 50\% &  0.505088&0.001\\
    \hline
 $12$& 0.3  & 70\% &  0.707069&0\\
    \hline
\end{tabular}
\end{center}
\end{table}

We conducted the simulation experiment for simple procedure of peer selection, as explained in Section \ref{procedure}. Bandwidth of all the peers is assumed to be same. For simple model, simulation results for SBCI  are shown in Fig. \ref{seventy}. Corresponding $AAD$ and percentage of rejections among cooperative peers are shown in Table \ref{table4.1}. We can observe from this  figure that in initial transactions, free-riders got some resources after that their SBCI become zero, which disqualify them in taking any resources from the network. For all other peers, upload to download ratio is very close to the reference line, thus algorithm is able to maintain the fairness in the network. We can observe from Table \ref{table4.1} that the percentage of rejections among the cooperative peers are more for higher values of $\alpha$. Because for higher values of $\alpha$, threshold value of SBCI will be higher. But its impact on $AAD$ is not very significant in this model.\par  
\begin{figure*}
\centering 
  \subfloat[Free Riders = 20 \%, $\alpha=0.9$]{ \includegraphics[width=6cm,height=4cm,scale=.18]{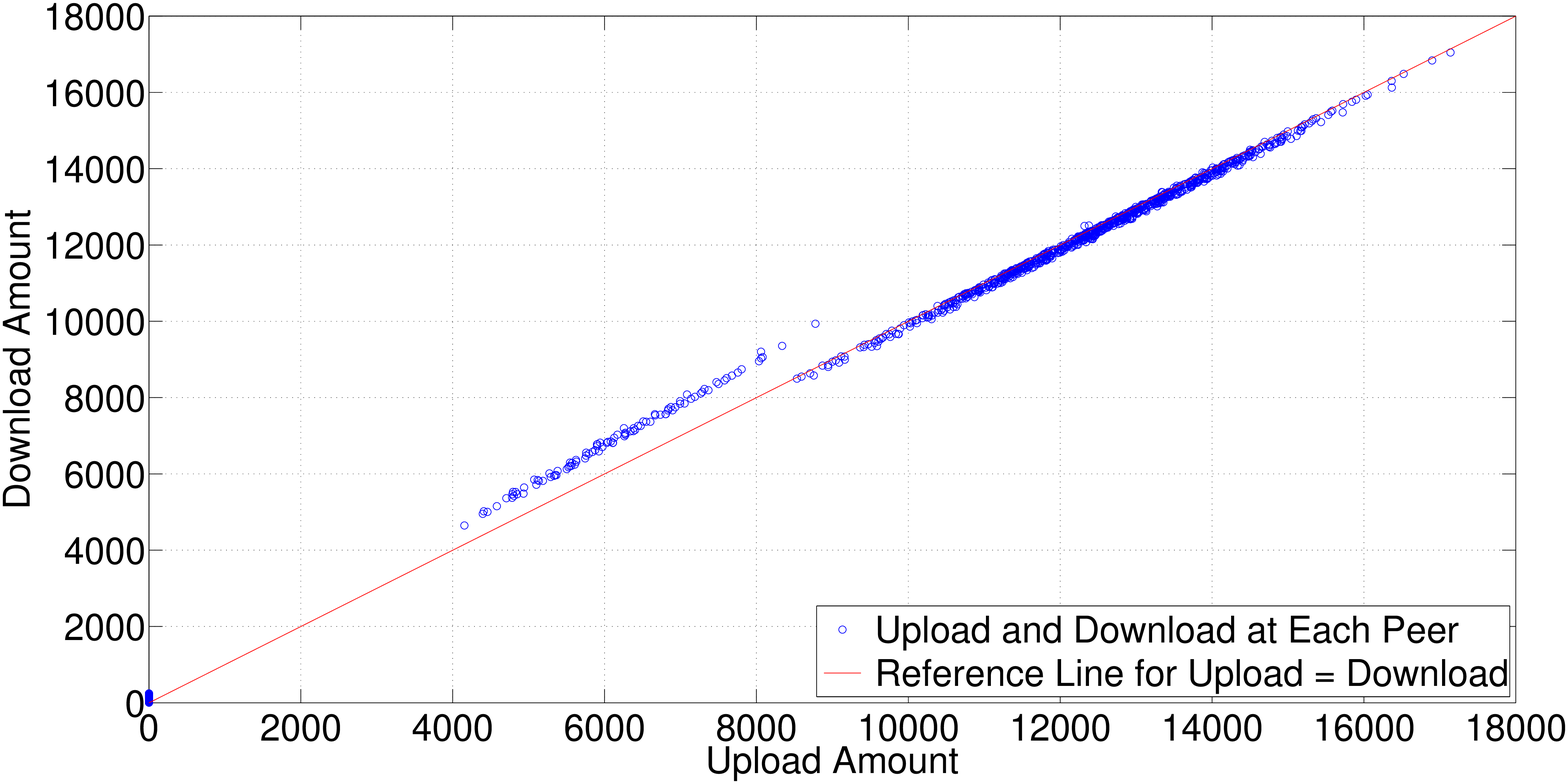}  }
\subfloat[Free Riders = 20 \%, $\alpha=0.6$]{ \includegraphics[width=6cm,height=4cm,scale=.18]{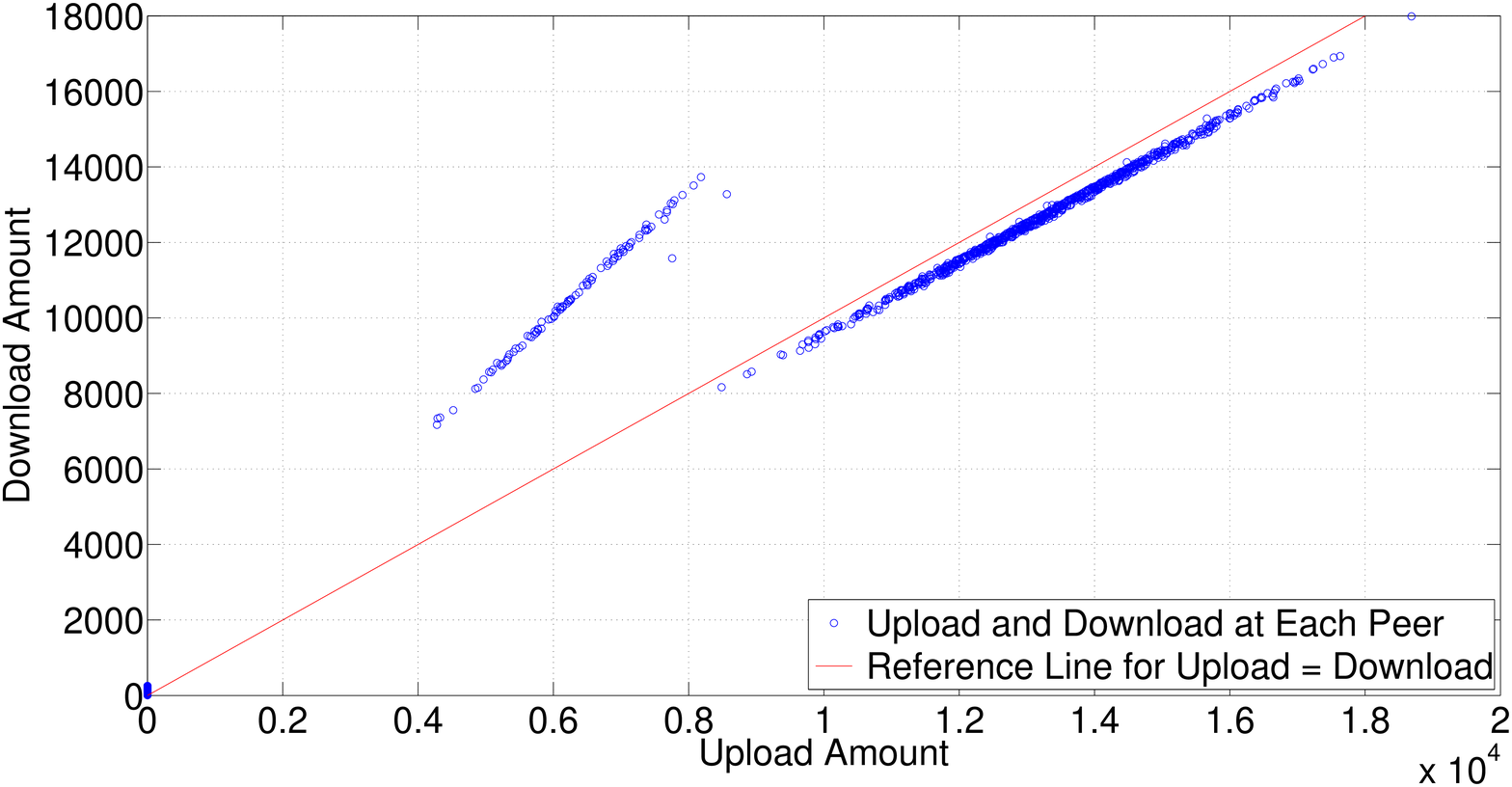}}
\subfloat[Free Riders = 20 \%, $\alpha=0.3$]{ \includegraphics[width=6cm,height=4cm,scale=.18]{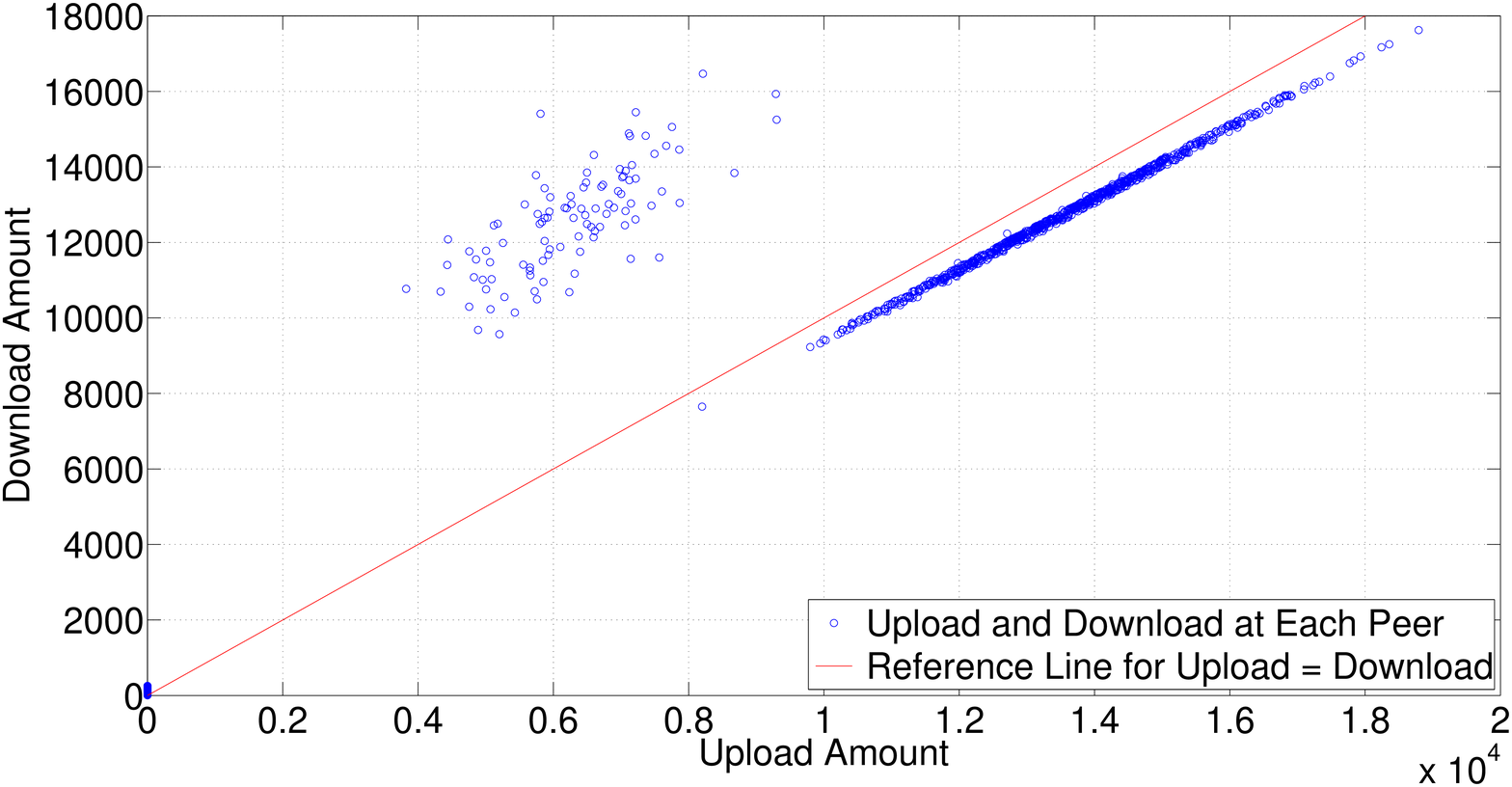}}\\
%\caption{Upload and Download Amount at each peer in the presence of $20\%$ Free-riders for different values of $\alpha$. Half of the free-riders do not share any thing at all and rest stoped sharing in the middle of session.}\label{twenty}
%\end{figure*}
%\begin{figure*}
%\centering 
  \subfloat[Free Riders = 40 \%, $\alpha=0.9$]{ \includegraphics[width=6cm,height=4cm,scale=.18]{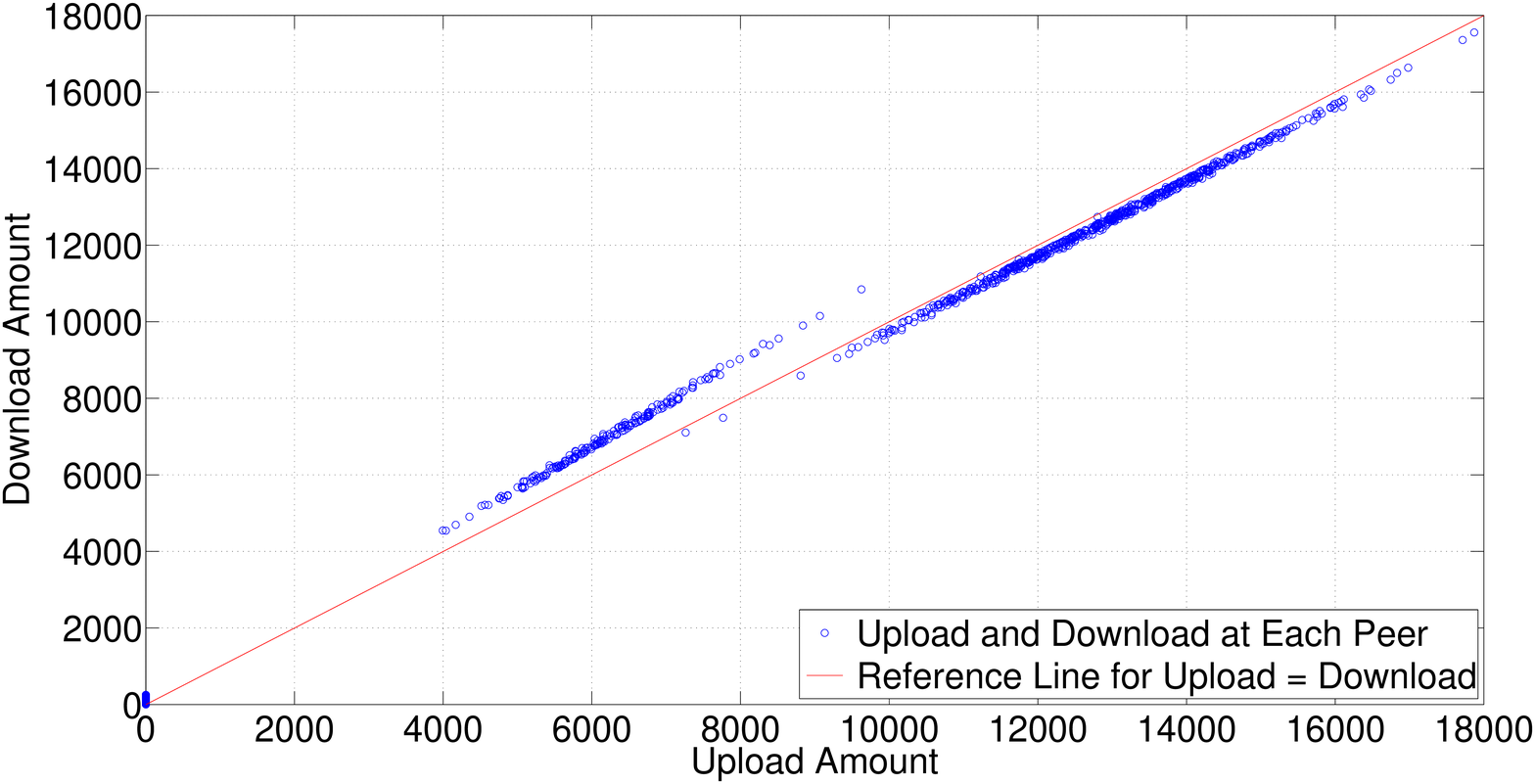}  }
\subfloat[Free Riders = 40 \%, $\alpha=0.6$]{ \includegraphics[width=6cm,height=4cm,scale=.18]{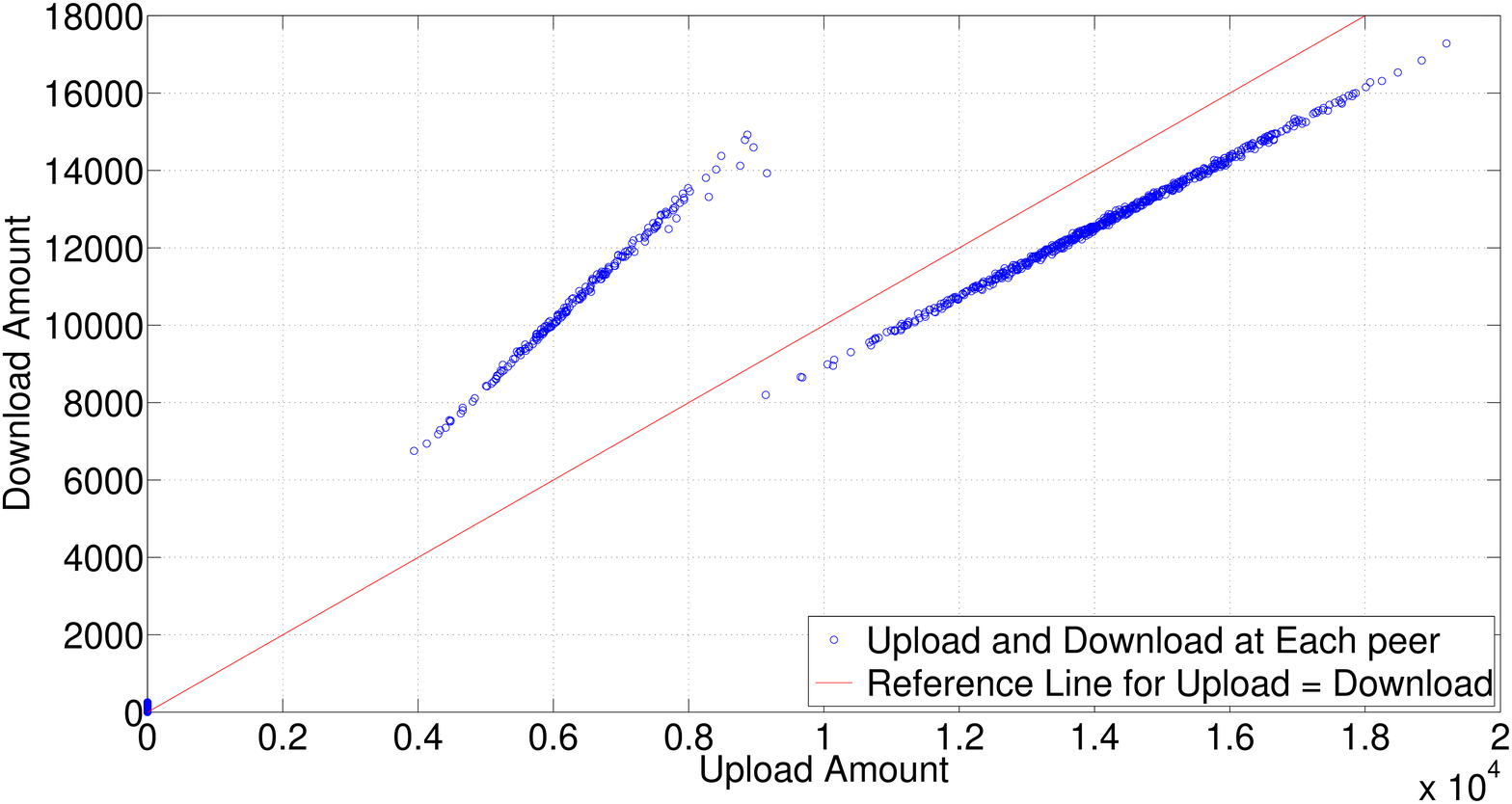}  }
\subfloat[Free Riders = 40 \%, $\alpha=0.3$]{ \includegraphics[width=6cm,height=4cm,scale=.18]{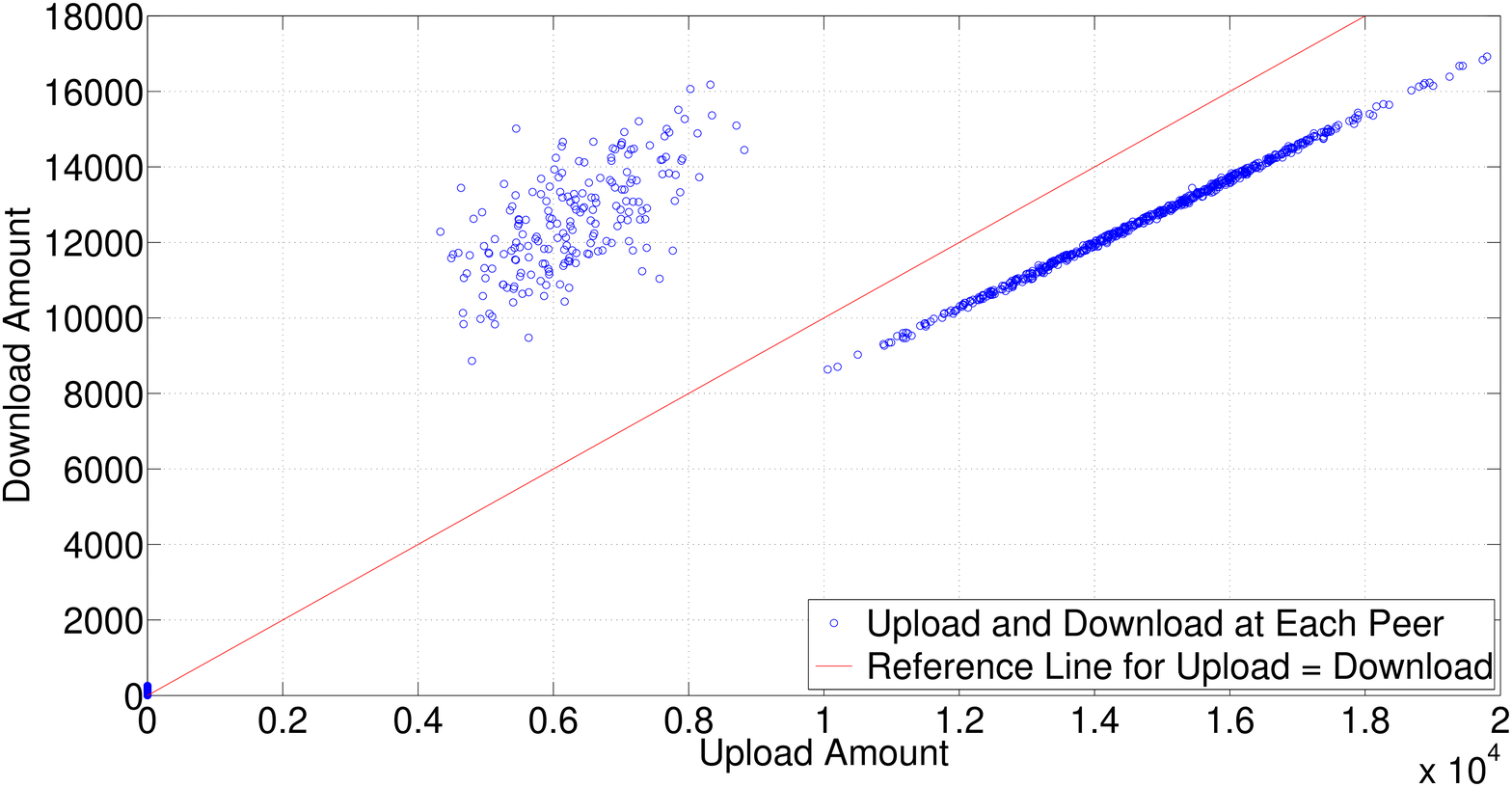}  }\\
%\caption{Upload and Download Amount at each peer in the presence of $40\%$ Free-riders for different values of $\alpha$. Half of the free-riders do not share any thing at all and rest stoped sharing in the middle of session.}\label{forty}
%\end{figure*}
%\begin{figure*}
%\centering 
  \subfloat[Free Riders = 60 \%, $\alpha=0.9$]{ \includegraphics[width=6cm,height=4cm,scale=.18]{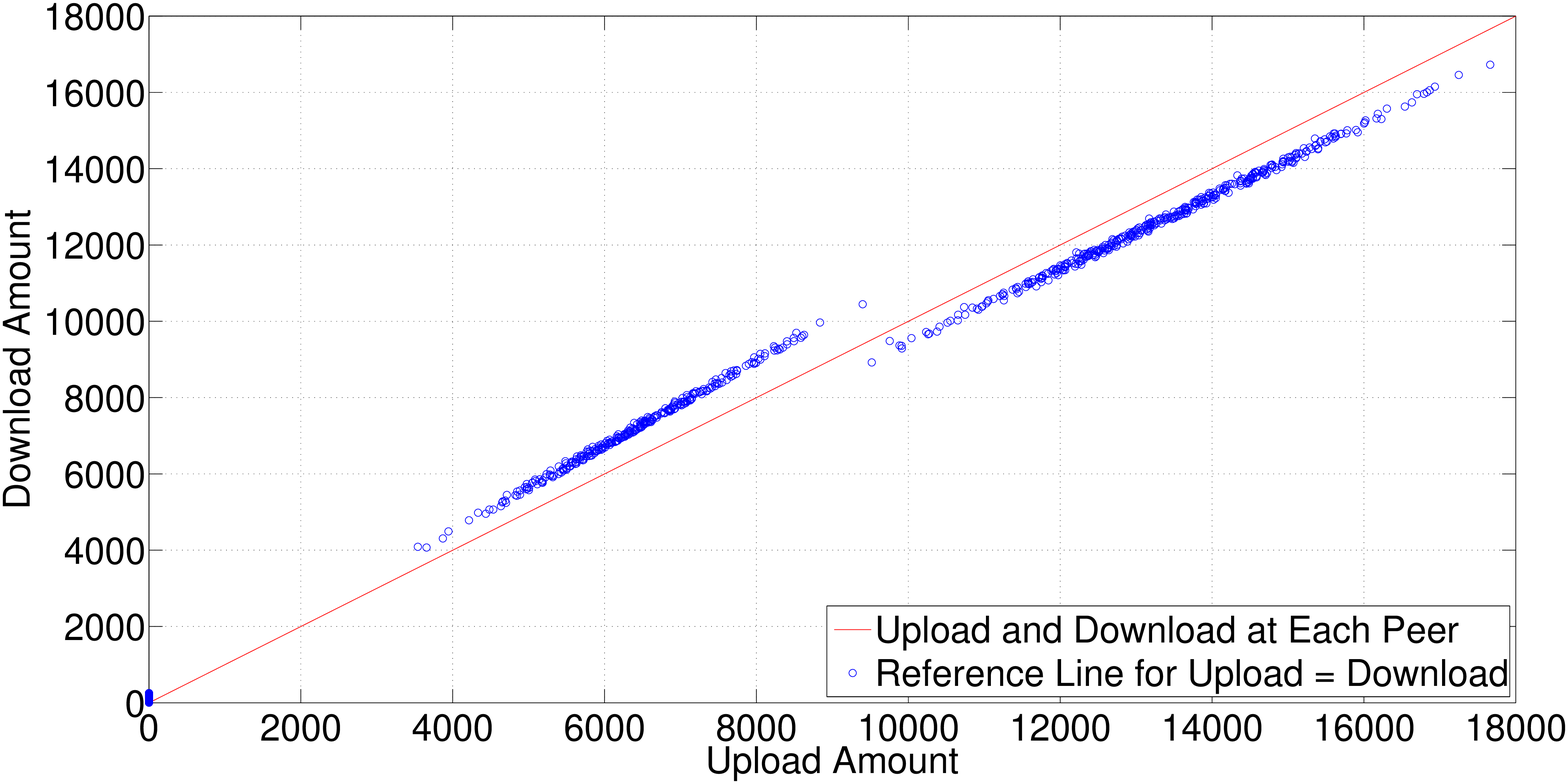}  }
\subfloat[Free Riders = 60 \%, $\alpha=0.6$]{ \includegraphics[width=6cm,height=4cm,scale=.18]{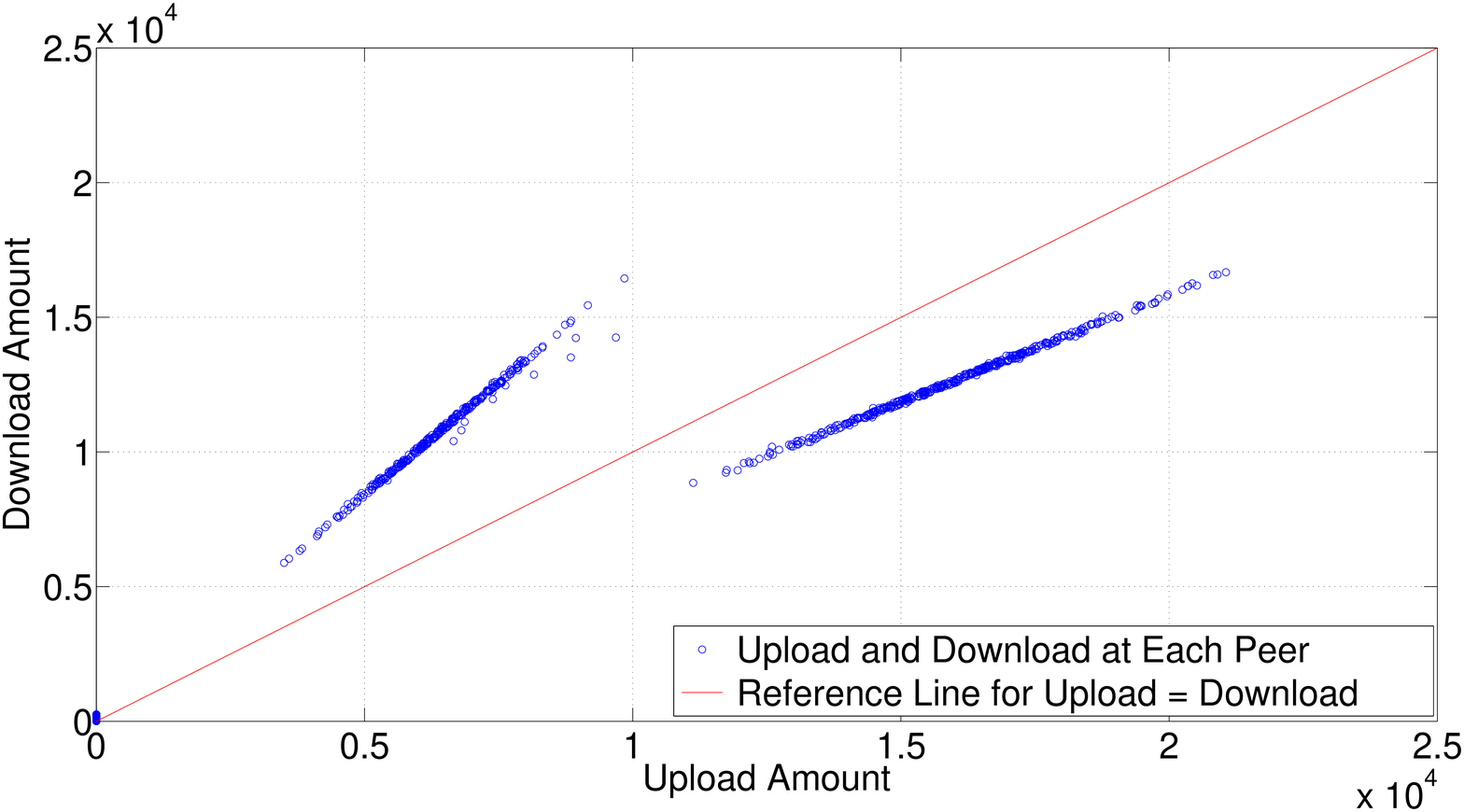}  }
\subfloat[Free Riders = 60 \%, $\alpha=0.3$]{ \includegraphics[width=6cm,height=4cm,scale=.18]{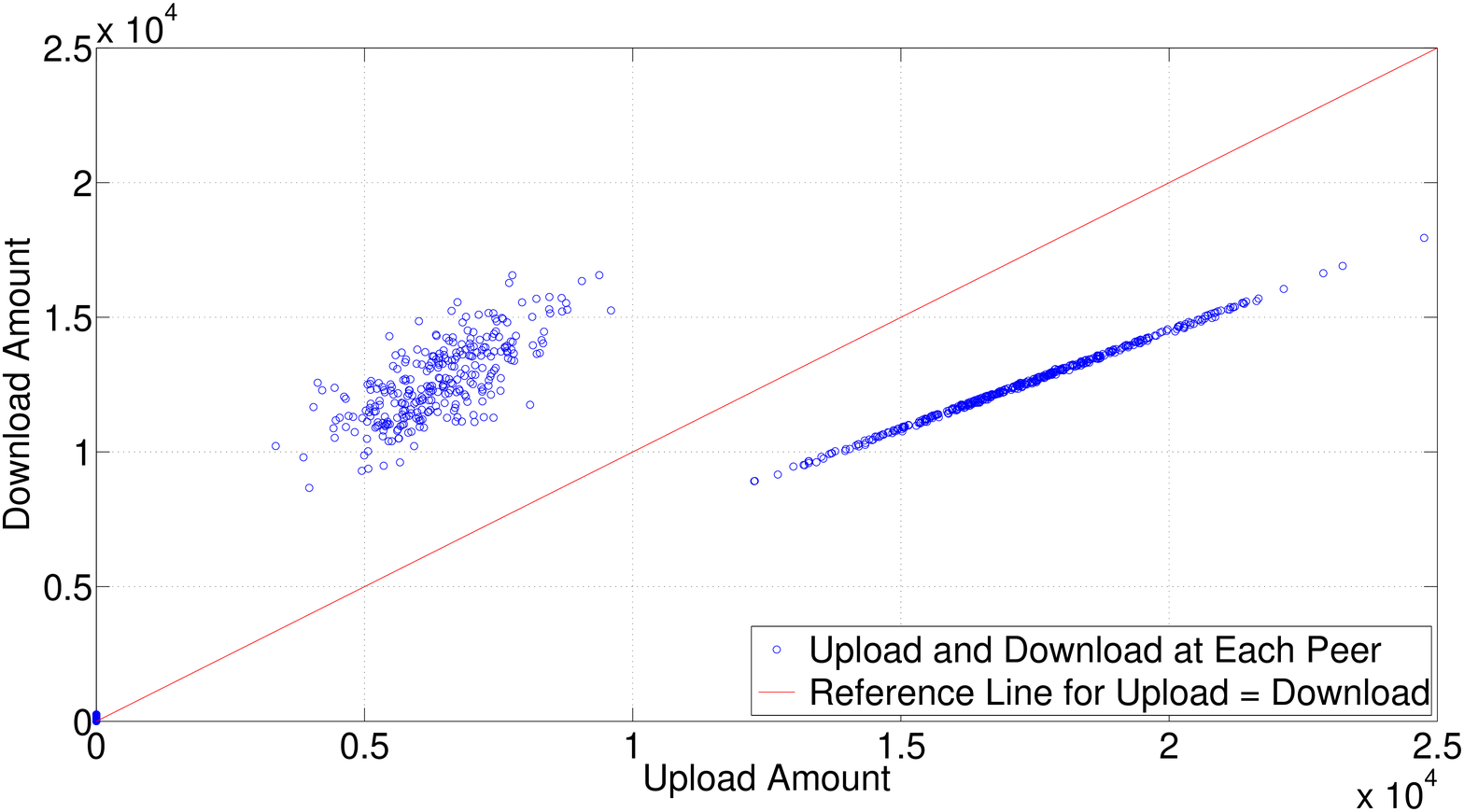}  }
\caption{Upload and Download Amount at Each Peer for SBCI in Adaptive  Model for  Simple Procedure of Peer Selection. Free-riders are varied from $20\% - 60\%$ and the value of $\alpha$ is taken as $0.9, 0.6$ and 0.3}\label{sixty}
\end{figure*}
\begin{table}
\begin{center}
\caption{ AAD and $\%$ of Rejections for SBCI in Adaptive Model for Simple Procedure of Peer Selection}\label{table4.2}
\begin{tabular}{ | m{0.5cm} | m{1em}| m{5em}|m{5em}| m{5.5em}|} 
 \hline
  S.N.& $\alpha$ &Free-riders & AAD &$\%$ of Rejections \\[1ex] 
 \hline
 $1$& 0.9  & 20\% &  0.118469&1.87\\
    \hline
 $2$& 0.9  & 40\% &  0.236766&0.606\\
    \hline
 $3$& 0.9  & 60\% &  0.354868&0.198\\
    \hline
 $4$& 0.6  & 20\%&  0.175055&0.054\\
    \hline
 $5$& 0.6  & 40\% & 0.351099&0.028\\
    \hline
 $6$& 0.6  & 60\% &  0.527092&0.011\\
    \hline
 $7$& 0.3  & 20\% &  0.200618&0.020\\
    \hline
 $8$& 0.3  & 40\% &  0.401786&0.011\\
    \hline
 $9$& 0.3  & 60\% &  0.600657&0.008\\
  \hline
\end{tabular}
\end{center}
\end{table}
In Adaptive Model, free-riders earn the SBCI and thereafter use this SBCI to download maximum resources from the network.  Simulation results for this model are shown in Fig. \ref{sixty}. Corresponding $AAD$ and percentage of rejections among cooperative peers are shown in Table \ref{table4.2}.  We can observe from this figure that  for higher $\alpha$, algorithm performs better. For $\alpha=0.9$, even in the presence of a large number of free-riders, the algorithm is able to balance the upload and download amount in the network. We can also observe from Table \ref{table4.2} that for higher $\alpha$ the percentage of rejection among cooperative peers is higher but corresponding $AAD$ is very less. Thus, impact of $\alpha$ is clearly evident.
\par And finally, we conducted the simulation for SBCI in Extreme Model. Results for upload and download at each peer are  shown in Fig. \ref{eighty}. Corresponding $AAD$ and percentage of rejections among cooperative peers are shown in Table \ref{table4.3}. We can observe from the figure that for $\alpha=0.9$ the algorithm is able to balance the upload and download amount in the network. For $\alpha=0.9$, at the cost of less than 2\% of rejections among the  cooperative peers, algorithm is able to maintain $AAD$ as 0.211228.

\begin{figure*}
\centering 
  \subfloat[Free Riders = 80\%, $\alpha=0.9$]{ \includegraphics[width=6cm,height=4cm,scale=.18]{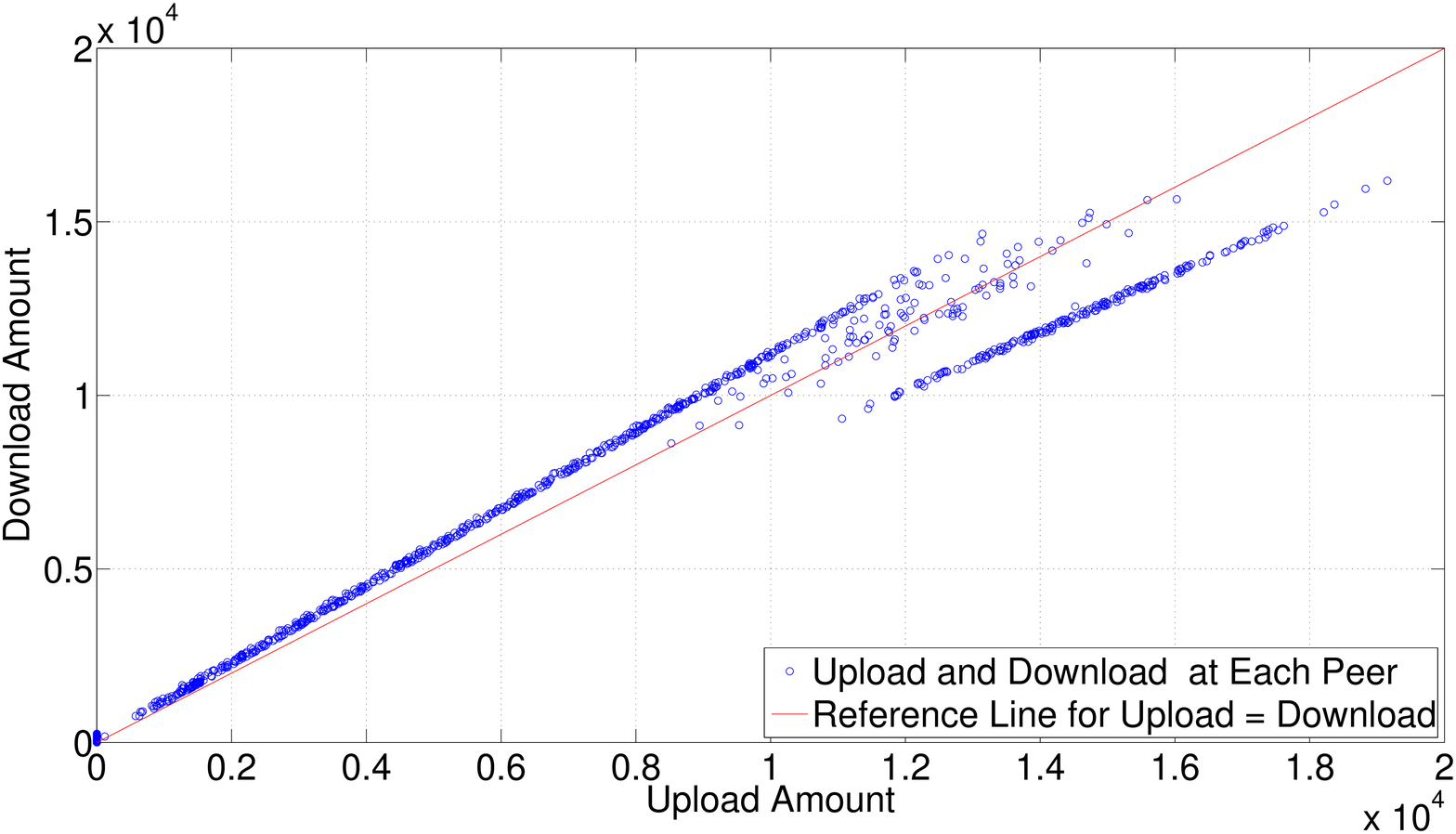}  }
\subfloat[Free Riders = 80\%, $\alpha=0.6$]{ \includegraphics[width=6cm,height=4cm,scale=.18]{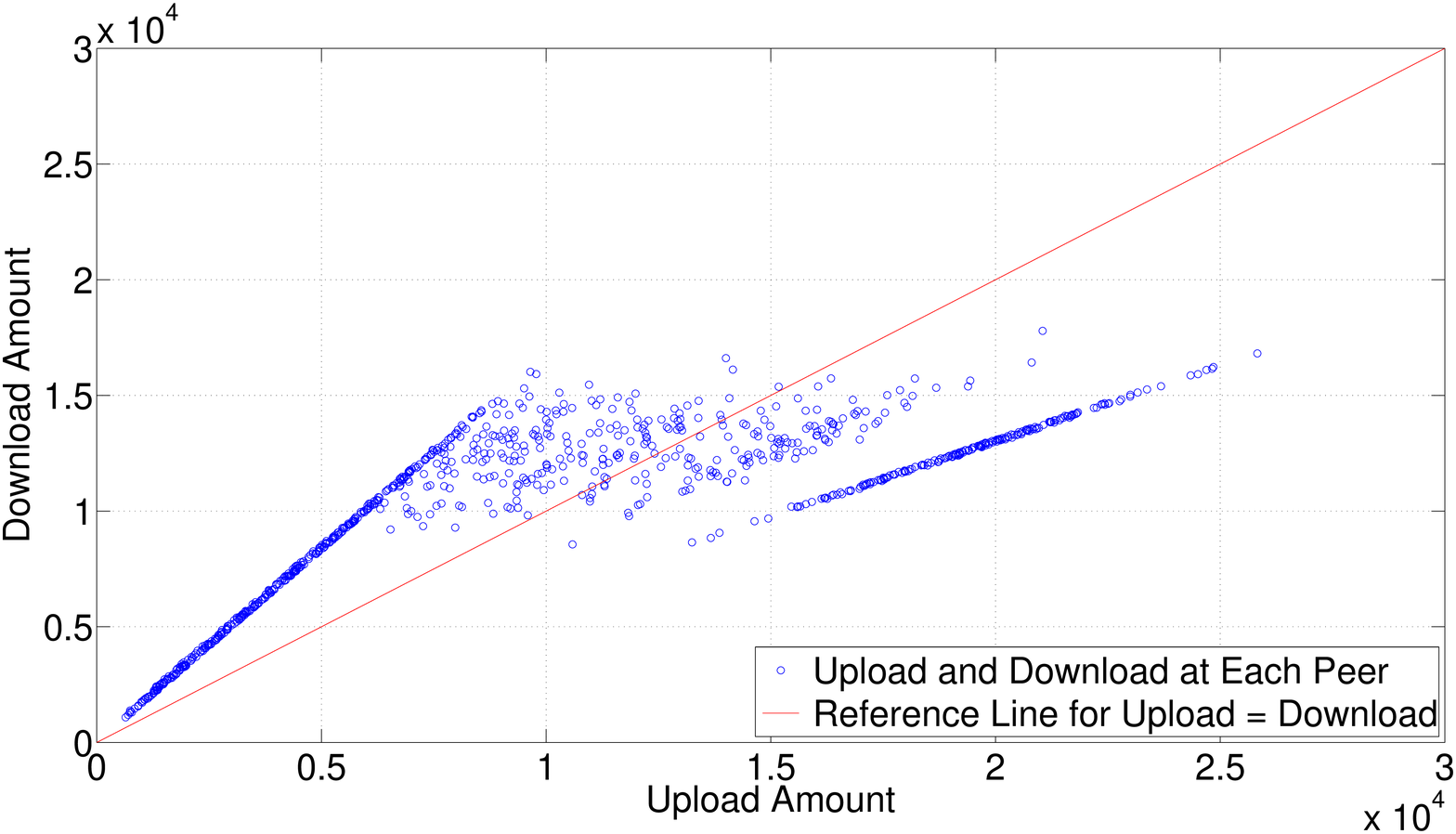}  }
\subfloat[Free Riders = 80\%, $\alpha=0.3$]{ \includegraphics[width=6cm,height=4cm,scale=.18]{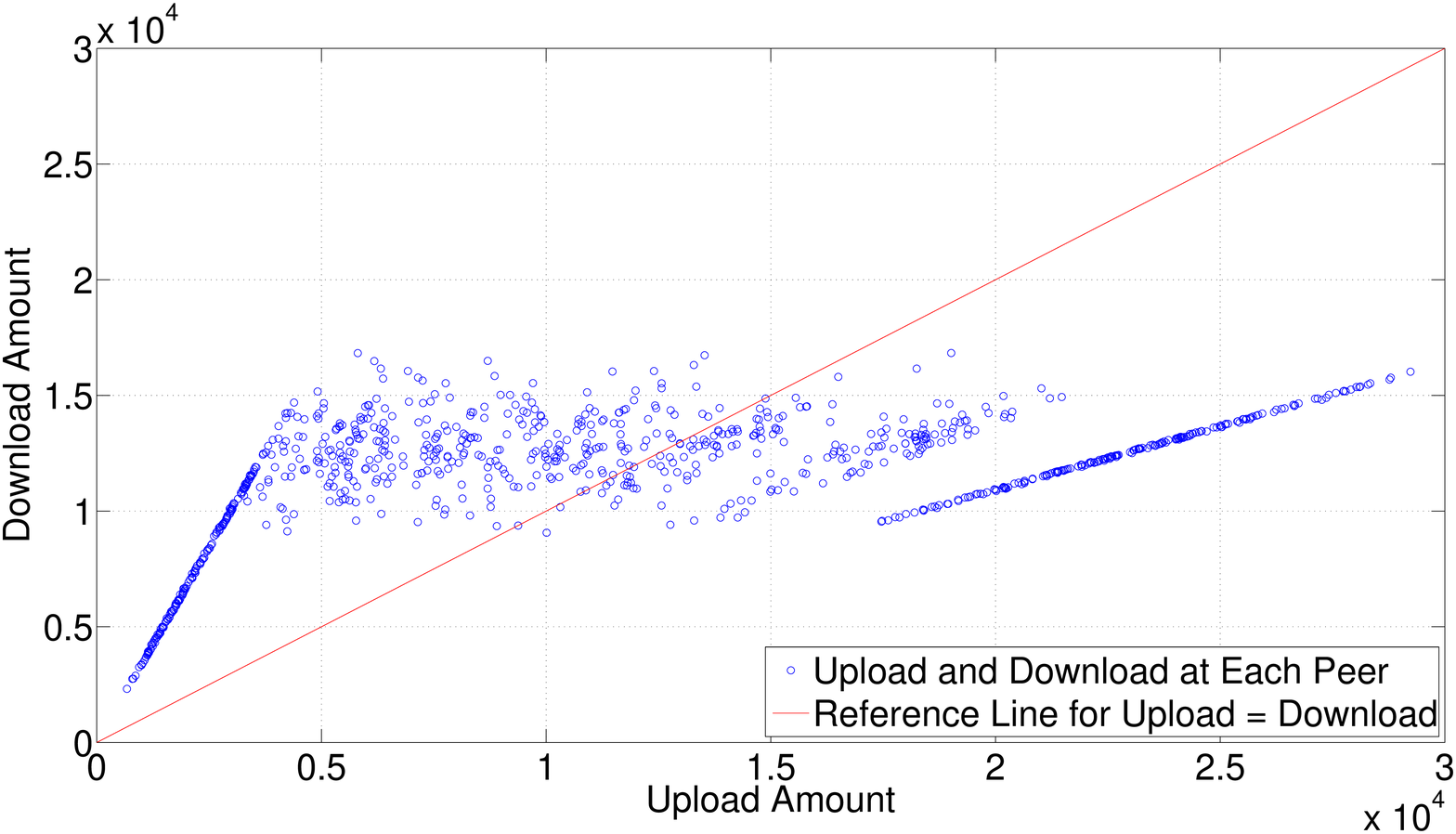}  }
\caption{Upload and Download Amount at Each Peer for SBCI in Extream Model for Simple Procedure of Peer Selection. At the beginning of simulation, 10\% peers are free-riders. After completion of every 12.5\% of total transactions, 10\% more peers convert themselves to free-riders. The value of $\alpha$ is taken as $0.9, 0.6$ and 0.3}\label{eighty}
\end{figure*}

\begin{table}
\begin{center}
\caption{ AAD and $\%$ of Rejections for SBCI in Simple Model for Simple Procedure of Peer Selection }\label{table4.3}
\begin{tabular}{ | m{0.5cm} | m{1em}| m{5em}|m{5em}| m{5.5em}|} 
 \hline
  S.N.& $\alpha$ &Free-riders & AAD &$\%$ of Rejections \\[1ex] 
 \hline
 $1$& 0.9  & 80\% &  0.211228&1.794\\
    \hline
 $2$& 0.6  & 80\% &  0.423116&0.068\\
    \hline
 $3$& 0.3  & 80\% &  0.567303&0.010\\
    \hline
\end{tabular}
\end{center}
\end{table}
We also reported the simulation results of GC for all peer distribution models in Fig. \ref{GC}. Corresponding $AAD$ and percentage of rejections among cooperative peers are reported in Table \ref{table4.6}. We can see from the figure that GC can also balance the upload and download amounts in each peer. In Adaptive Model and in Extreme Model GC can maintain better fairness compared to SBCI but the percentage of rejections among cooperative peers are higher in GC for all the models. Thus, it is less efficient compared to SBCI.   \par

\begin{figure*}
\centering 
  \subfloat[Simple Model,  Free Riders = 30\%]{ \includegraphics[width=6cm,height=4cm,scale=.18]{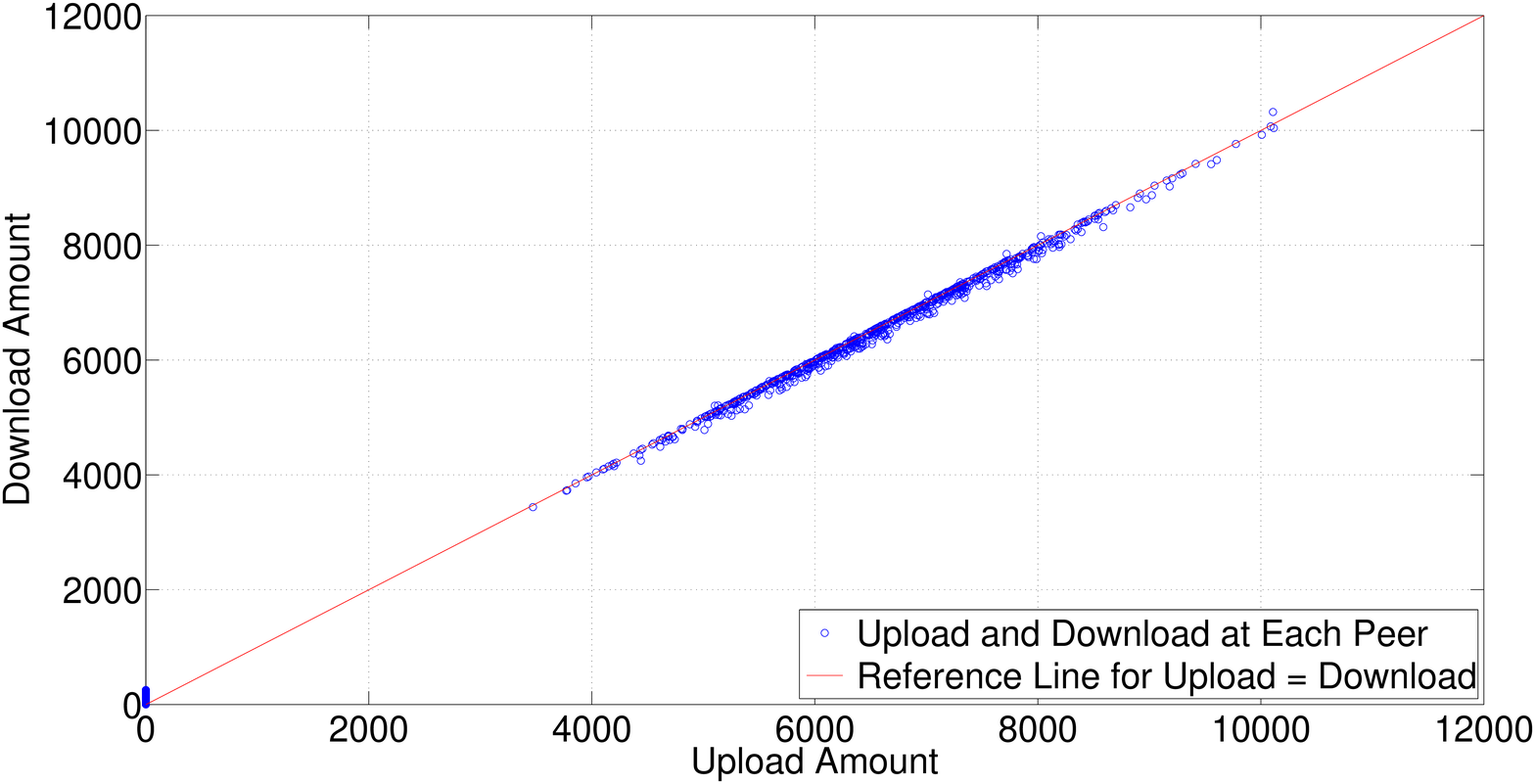}  }
\subfloat[Adaptive Model, Free Riders = 60\%]{ \includegraphics[width=6cm,height=4cm,scale=.18]{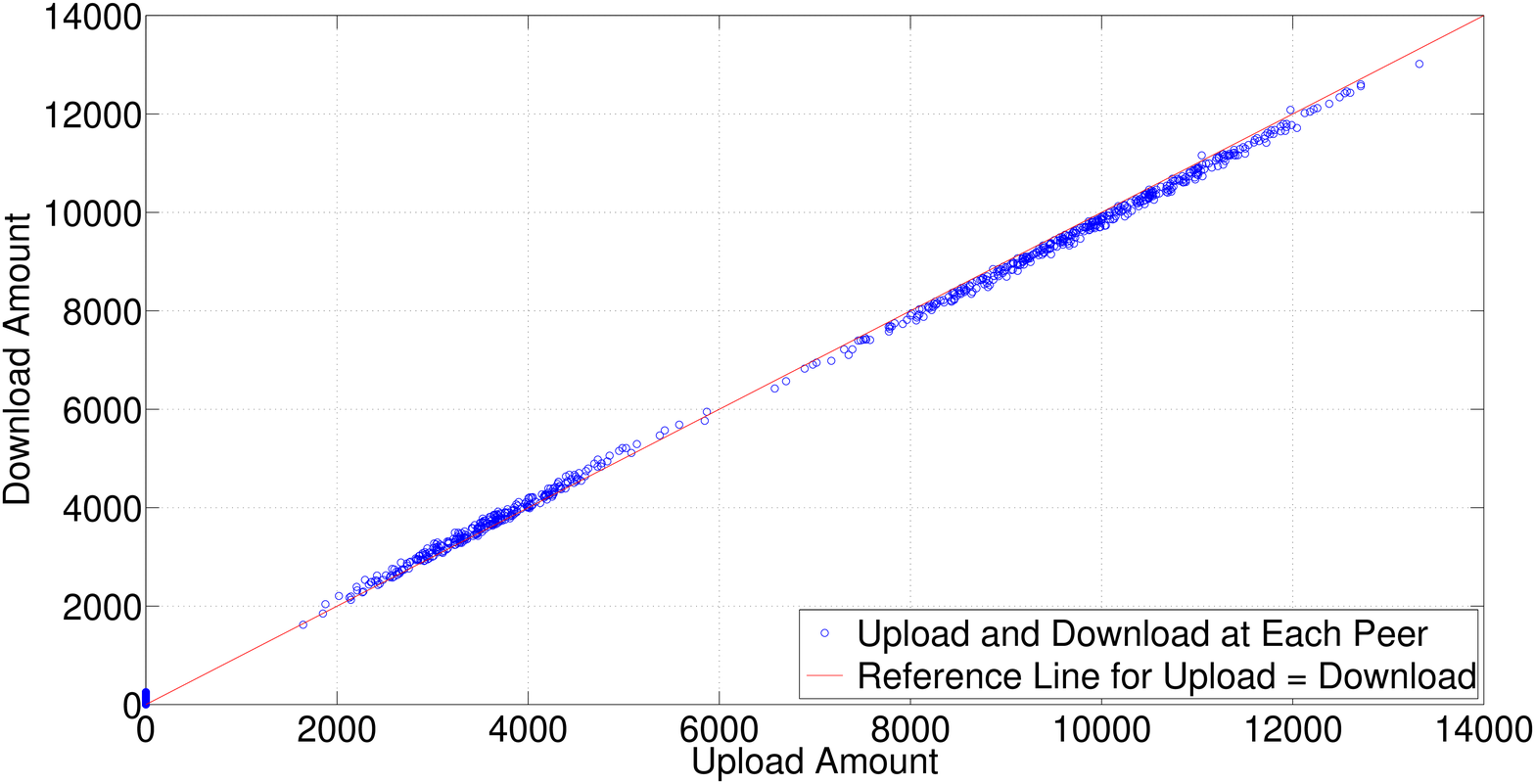}  }
\subfloat[Extreme Model, Free Riders = 80\%]{ \includegraphics[width=6cm,height=4cm,scale=.18]{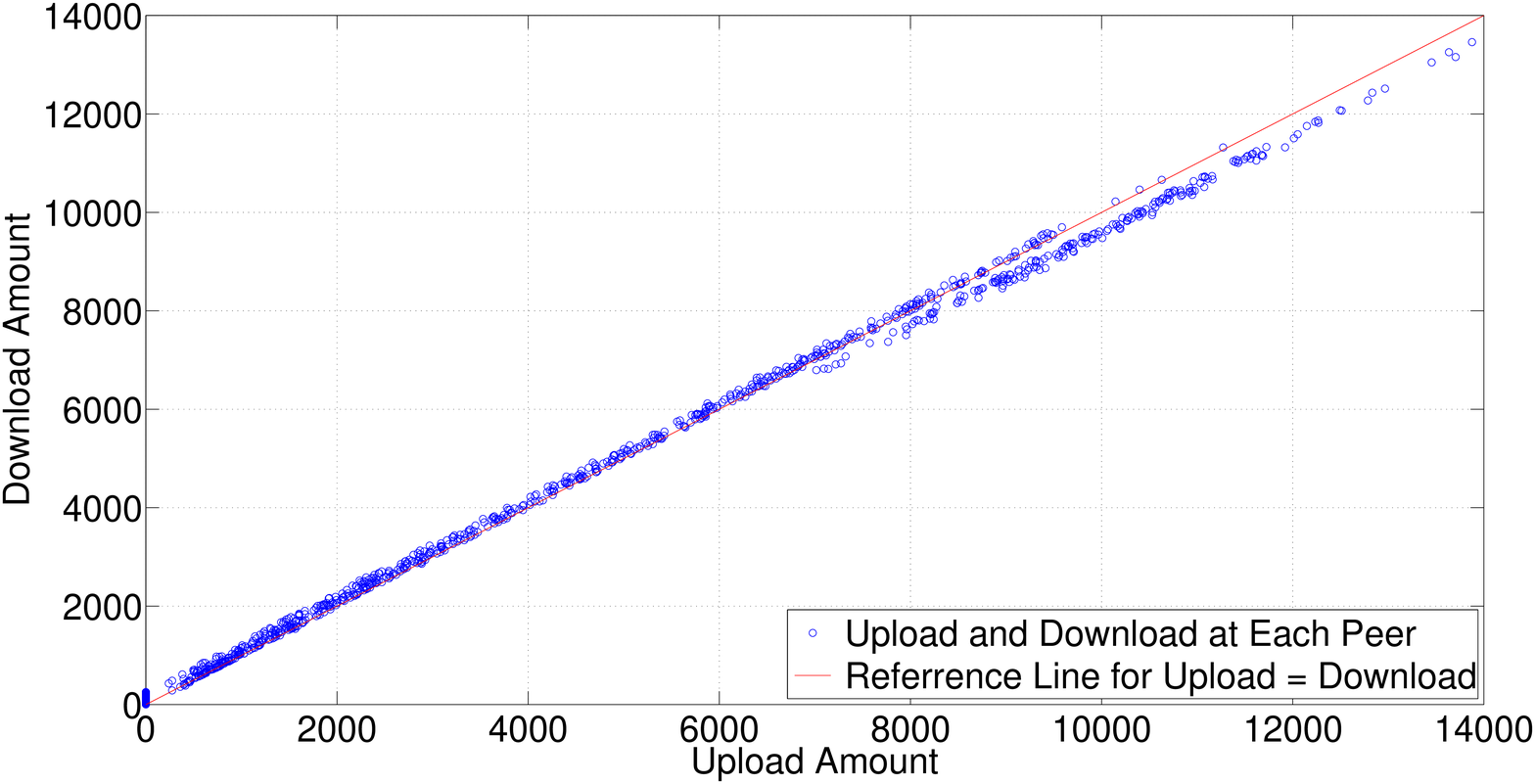}  }
\caption{Upload and Download Amount at Each Peer for best case of GC, i.e., $\alpha=0.8, \beta=0.2$ in different distribution models for Simple Procedure of Peer Selection.}\label{GC}
\end{figure*}

\begin{table}
\begin{center}
\caption{ AAD and $\%$ of Rejections for Different Distribution Model in Best case of GC for Simple Procedure of Peer Selection}\label{table4.6}
\begin{tabular}{ | m{0.5cm} | m{4em}| m{4.5em}|m{2.5em}| m{5.5em}|} 
 \hline
  S.N.& Model &Free-riders & AAD &$\%$ of Rejections \\[1ex] 
 \hline
 $1$& Simple  & 30\% &  0.3059& 33.657 \\
    \hline
 $2$& Adaptive  & 60\% &  0.3146& 16.27 \\
    \hline
 $3$& Extreme  & 80\% &  0.1370& 19.269 \\
    \hline
\end{tabular}
\end{center}
\end{table}

\begin{figure*}
\centering 
  \subfloat[Free Riders = 10\%, $\alpha=0.9$]{ \includegraphics[width=6cm,height=4cm,scale=.18]{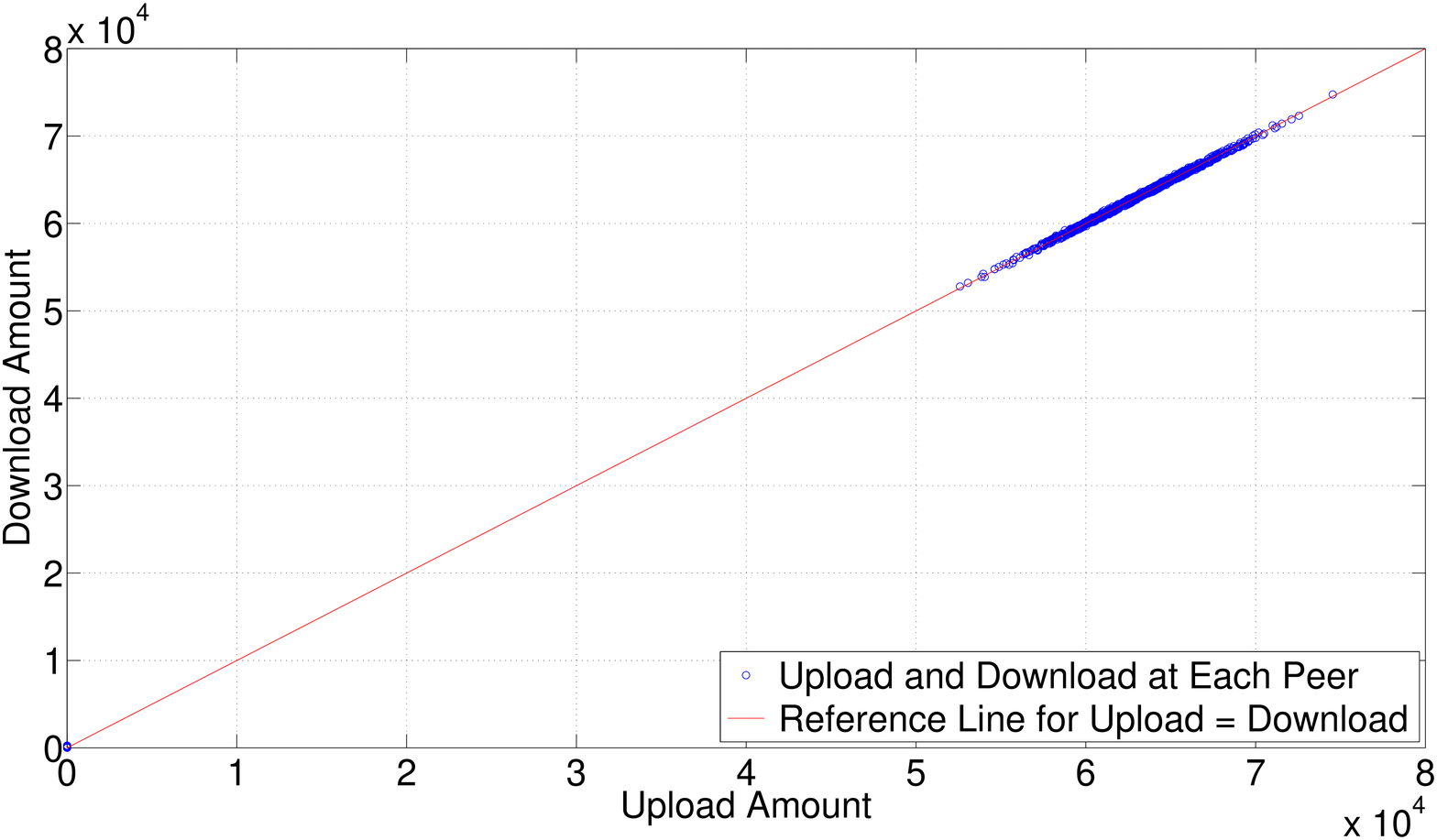}  }
\subfloat[Free Riders = 10\%, $\alpha=0.6$]{
\includegraphics[width=6cm,height=4cm,scale=.18]
{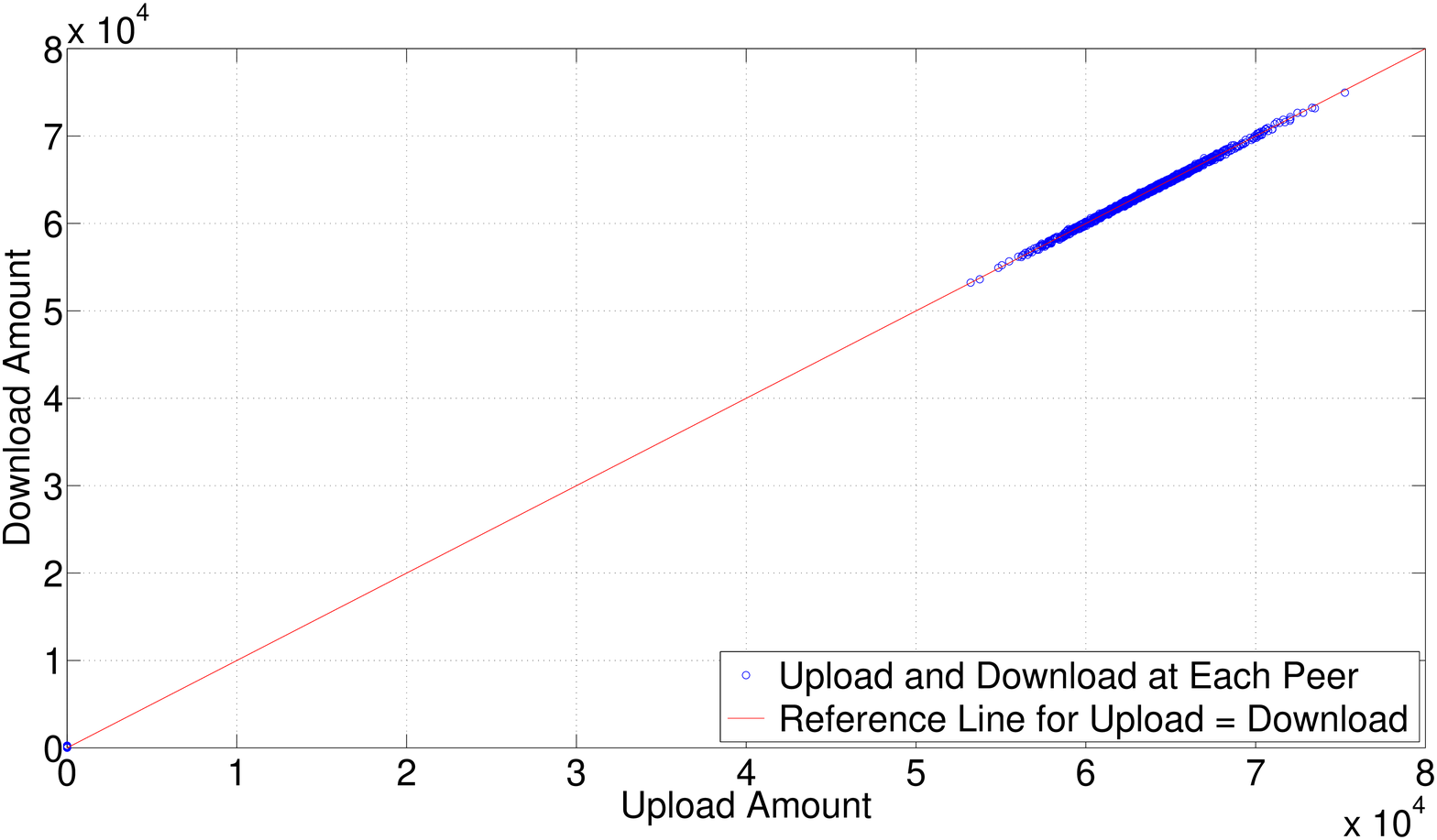} }
\subfloat[Free Riders = 10\%, $\alpha=0.3$]{
\includegraphics[width=6cm,height=4cm,scale=.18]
{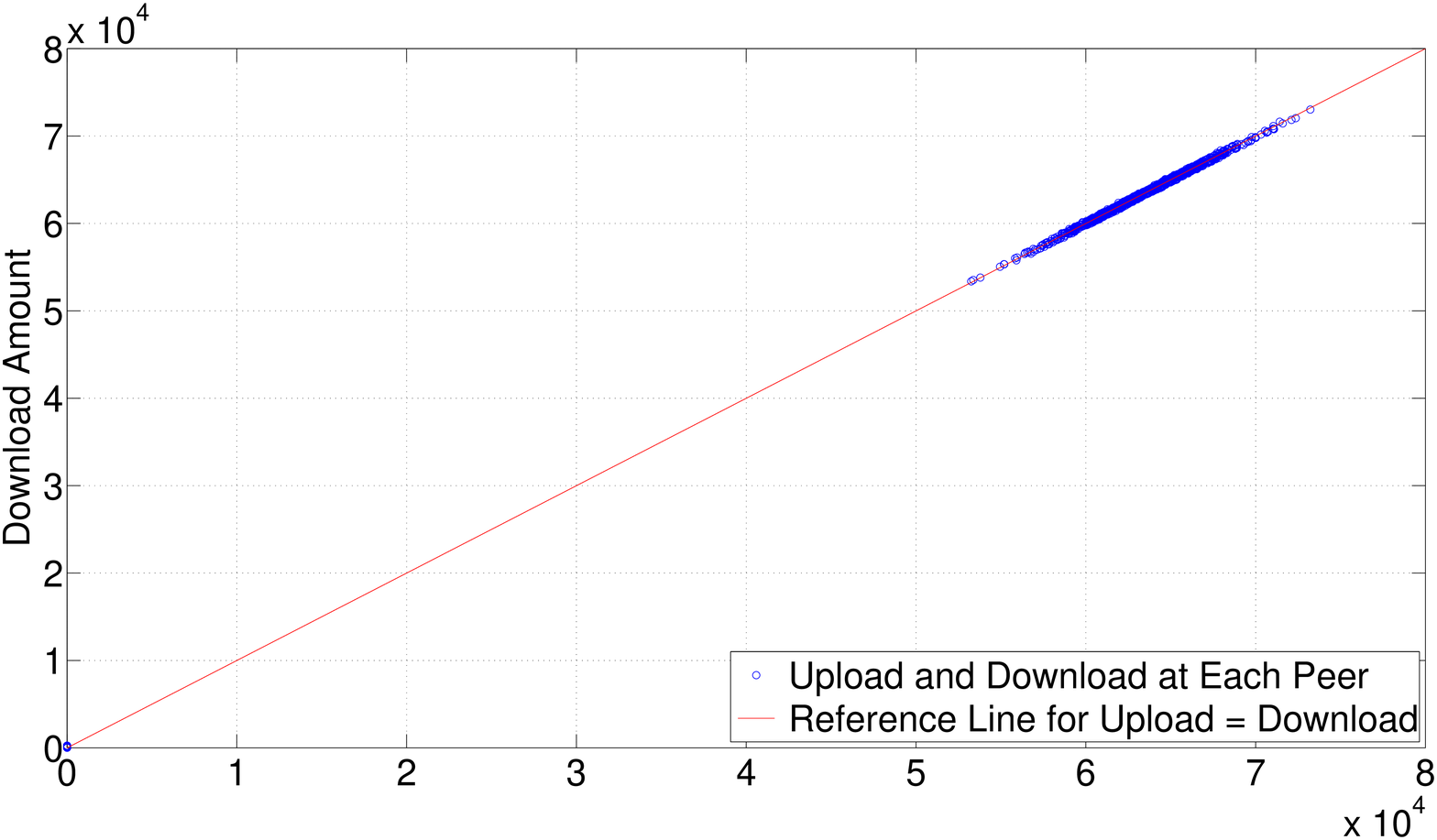}}\\
%\caption{Upload and Download Amount at Each Peer in the Presence of $10\%$ Free-riders for Different values of $\alpha$. Peer selection approach is based on the problem of "College Admission and The Stability of Marriage".}\label{sten}
%\end{figure*}
%\begin{figure*}
%\centering
\subfloat[Free Riders = 30\%, $\alpha=0.9$]{
\includegraphics[width=6cm,height=4cm,scale=.18]
{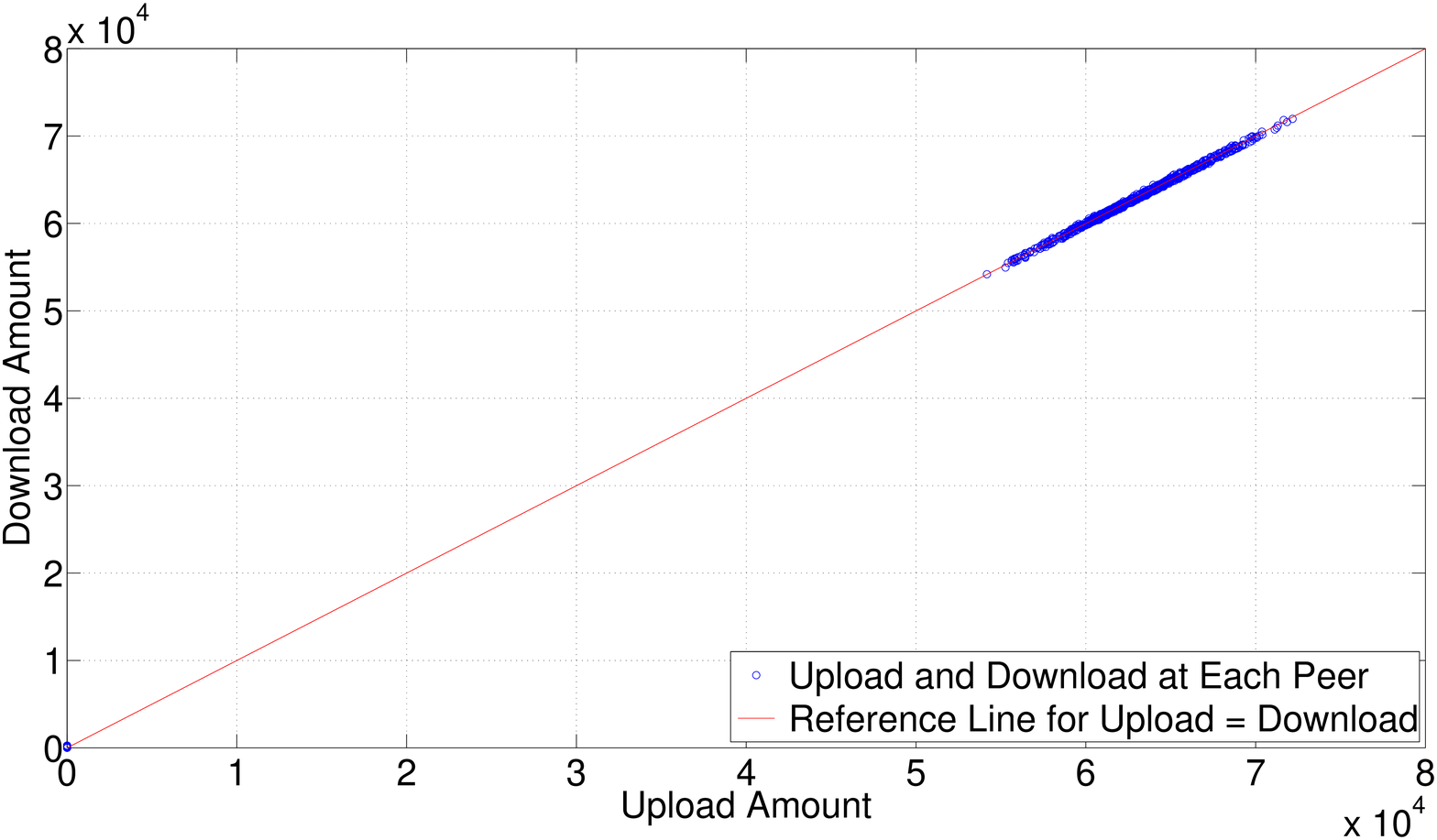} }
\subfloat[Free Riders = 30\%, $\alpha=0.6$]{
\includegraphics[width=6cm,height=4cm,scale=.18]
{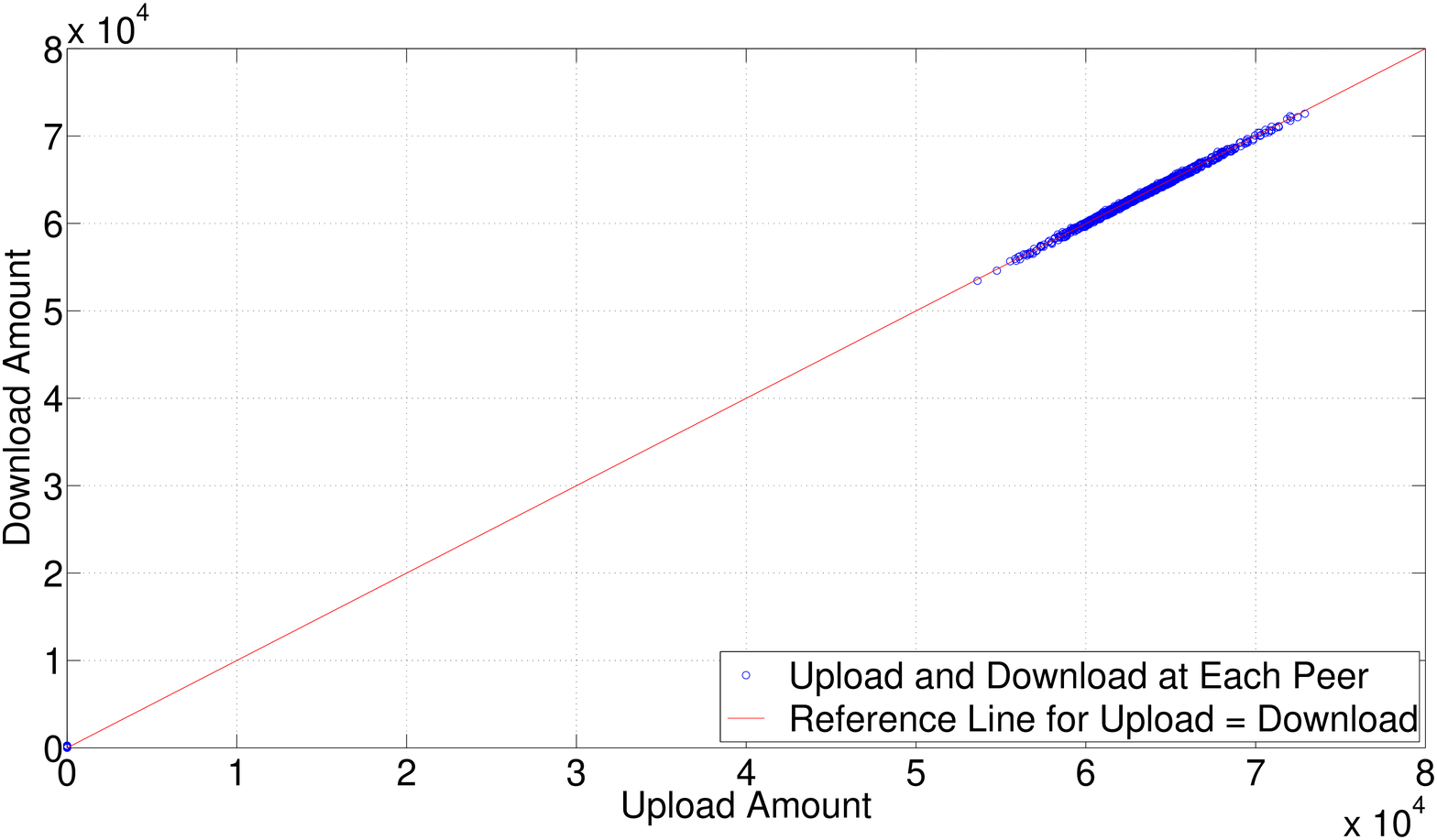}}
\subfloat[Free Riders = 30\%, $\alpha=0.3$]{
\includegraphics[width=6cm,height=4cm,scale=.18]
{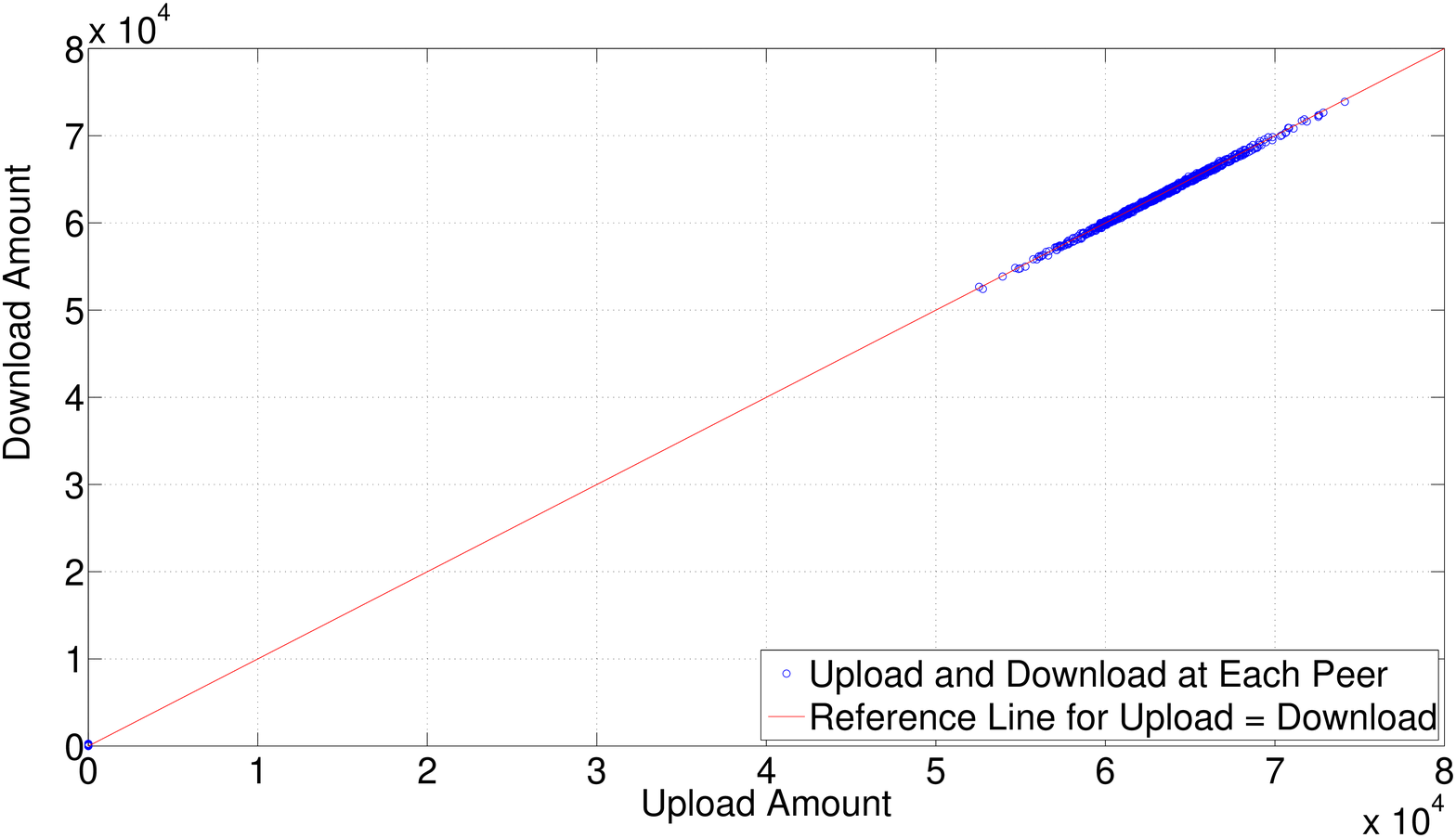} }\\
%\caption{Upload and Download Amount at Each Peer in the Presence of $30\%$ Free-riders for Different values of $\alpha$. Peer selection approach is based on the problem of "College Admission and The Stability of Marriage".}\label{sthirty}
%\end{figure*}
%\begin{figure*}
%\centering
\subfloat[Free Riders = 50\%, $\alpha=0.9$]{
\includegraphics[width=6cm,height=4cm,scale=.18]
{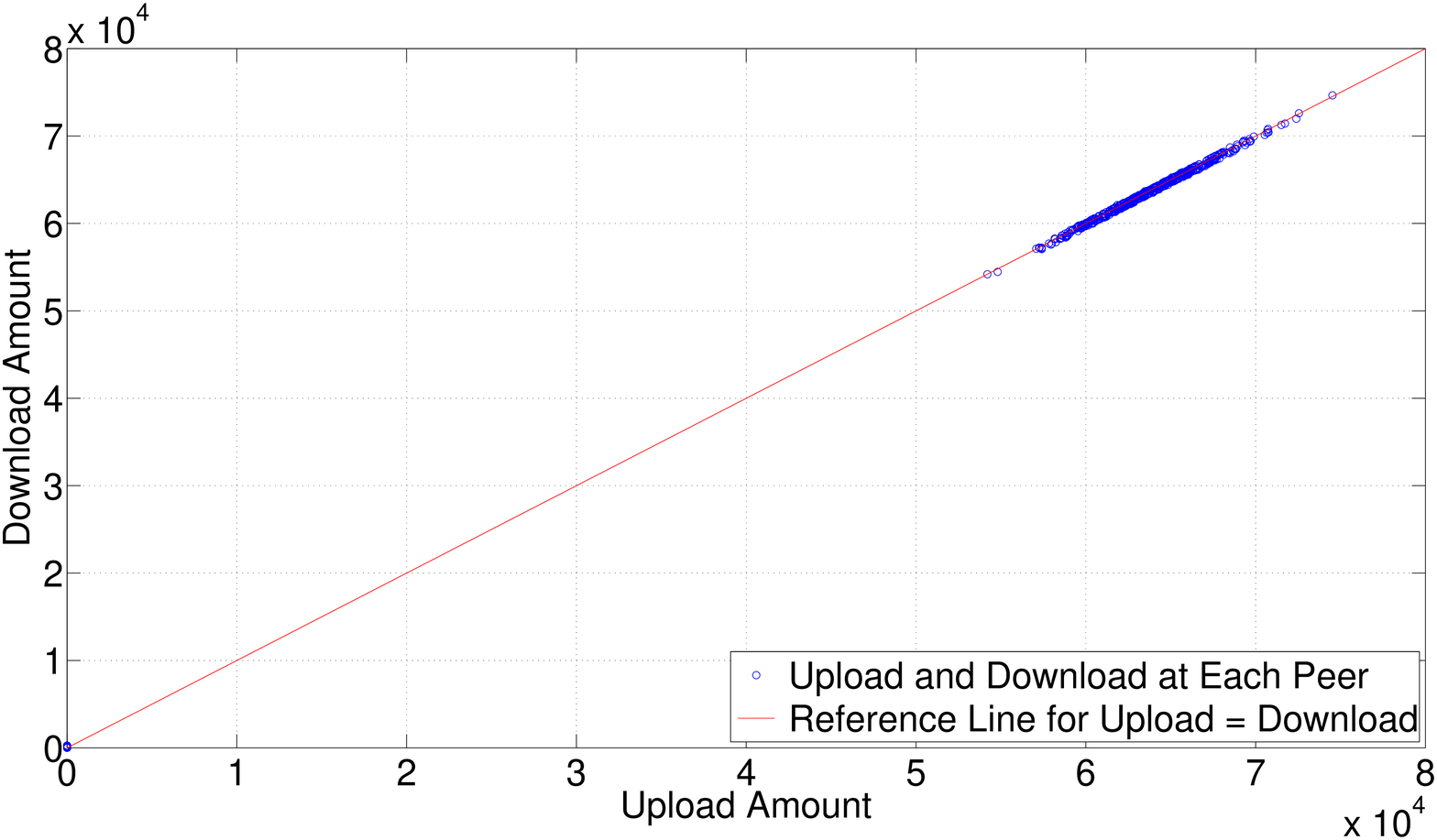}}
\subfloat[Free Riders = 50\%, $\alpha=0.6$]{
\includegraphics[width=6cm,height=4cm,scale=.18]
{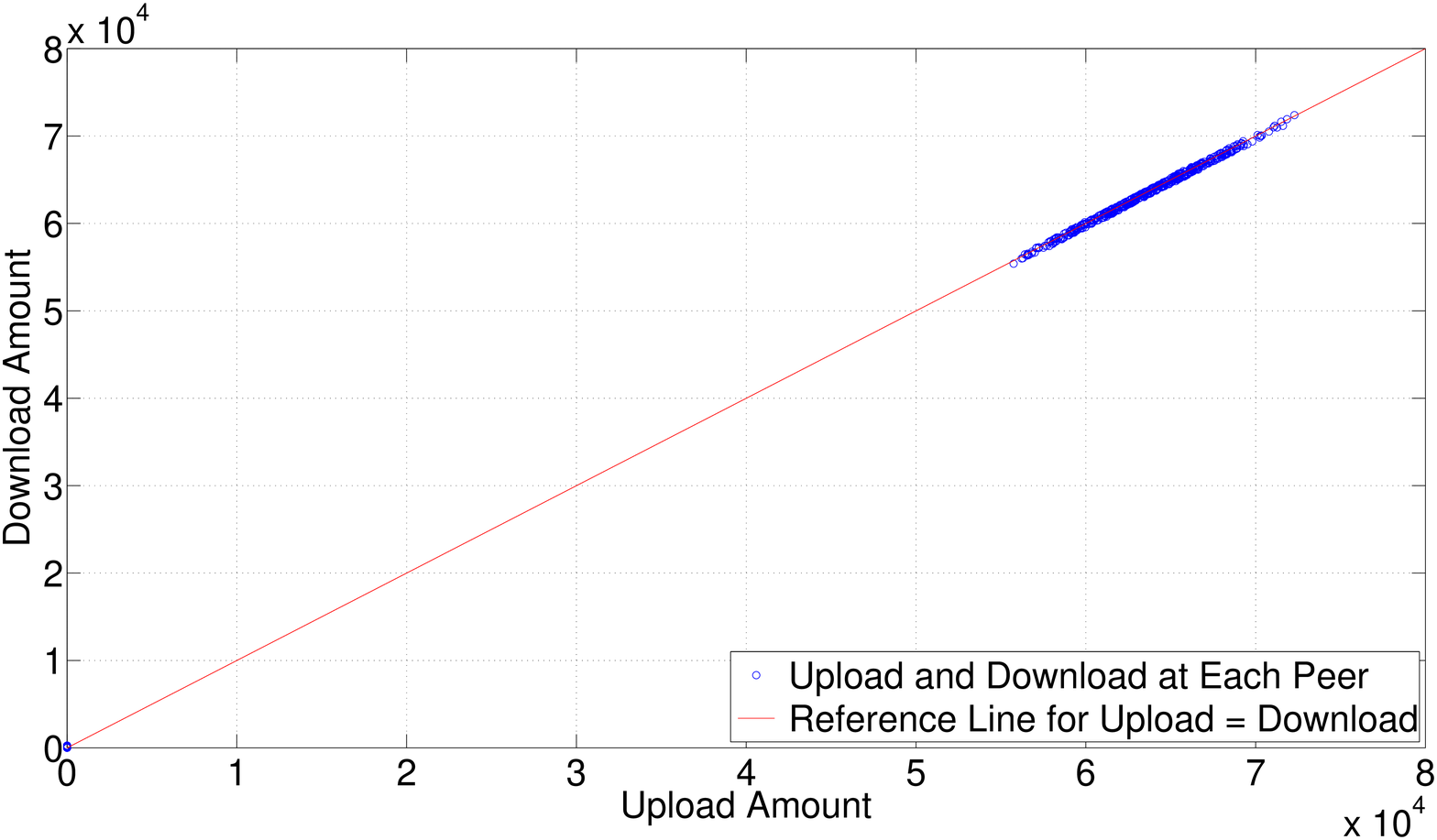}  }
\subfloat[Free Riders = 50\%, $\alpha=0.3$]{
\includegraphics[width=6cm,height=4cm,scale=.18]
{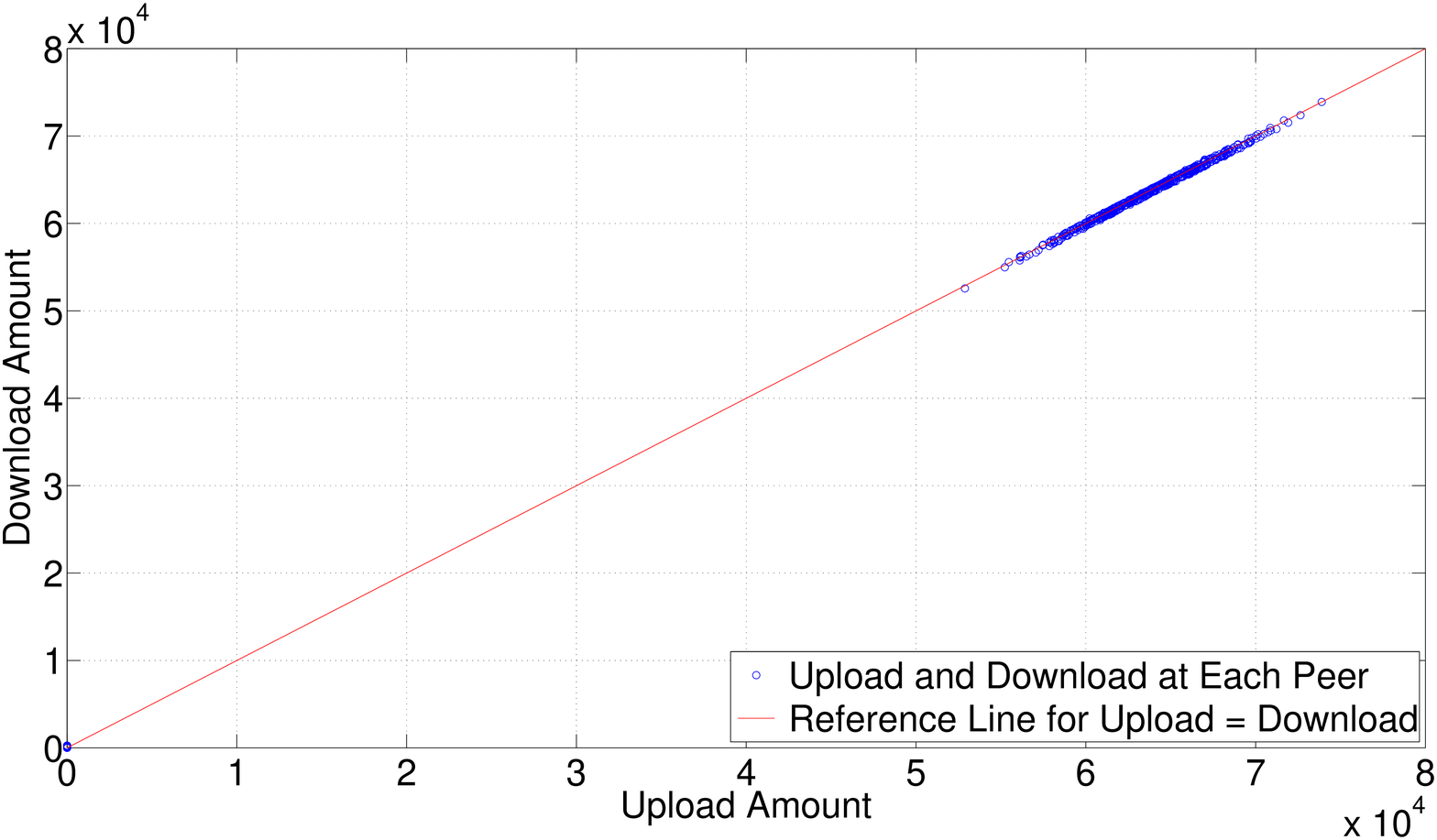}}\\
%\caption{Upload and Download Amount at Each Peer in the Presence of $50\%$ Free-riders for Different values of $\alpha$. Peer selection approach is based on the problem of "College Admission and The Stability of Marriage".}\label{sfifty}
%\end{figure*}
%\begin{figure*}
%\centering
\subfloat[Free Riders = 70\%, $\alpha=0.9$]{
\includegraphics[width=6cm,height=4cm,scale=.18]
{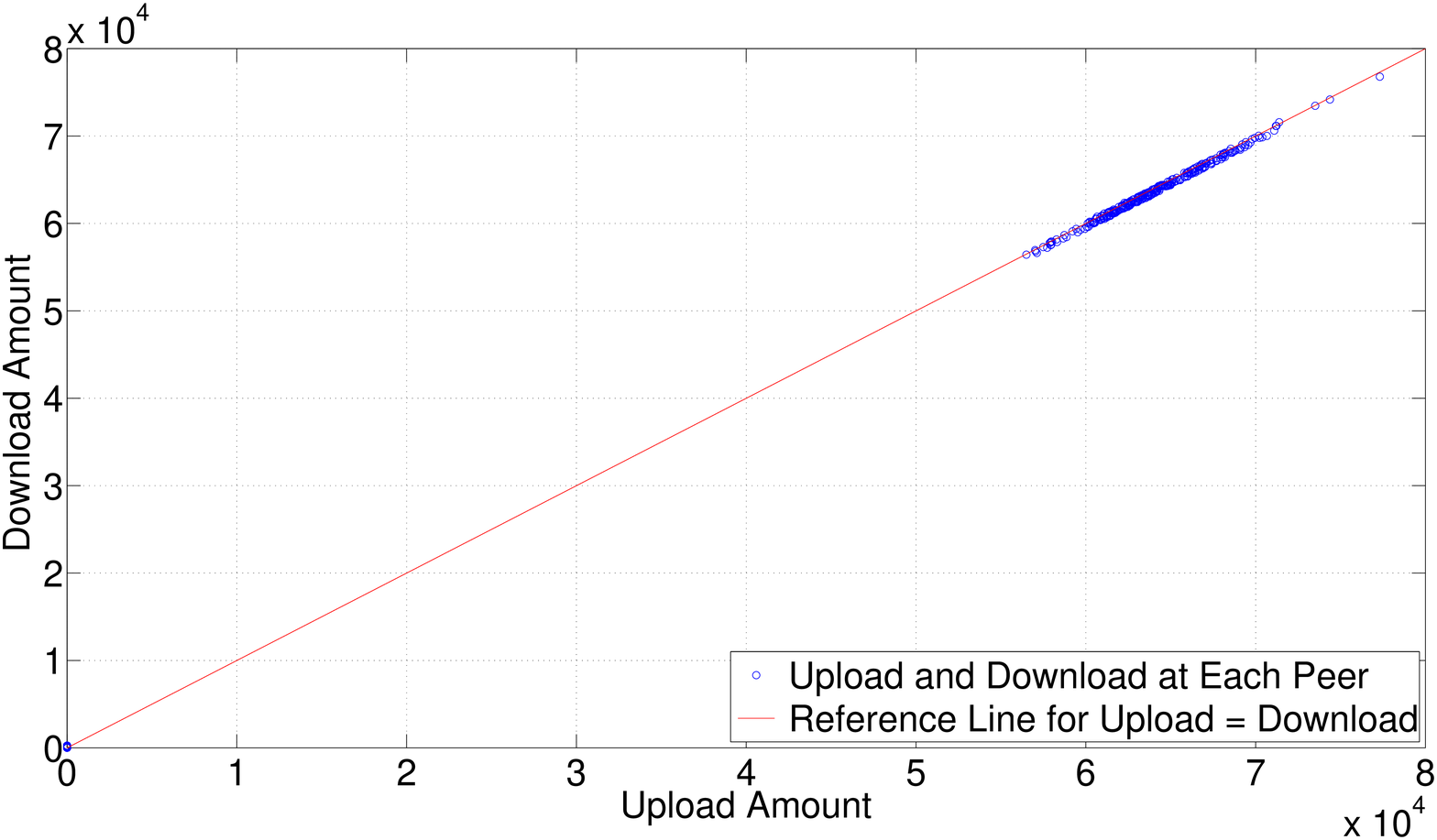} }
\subfloat[Free Riders = 70\%, $\alpha=0.6$]{
\includegraphics[width=6cm,height=4cm,scale=.18]
{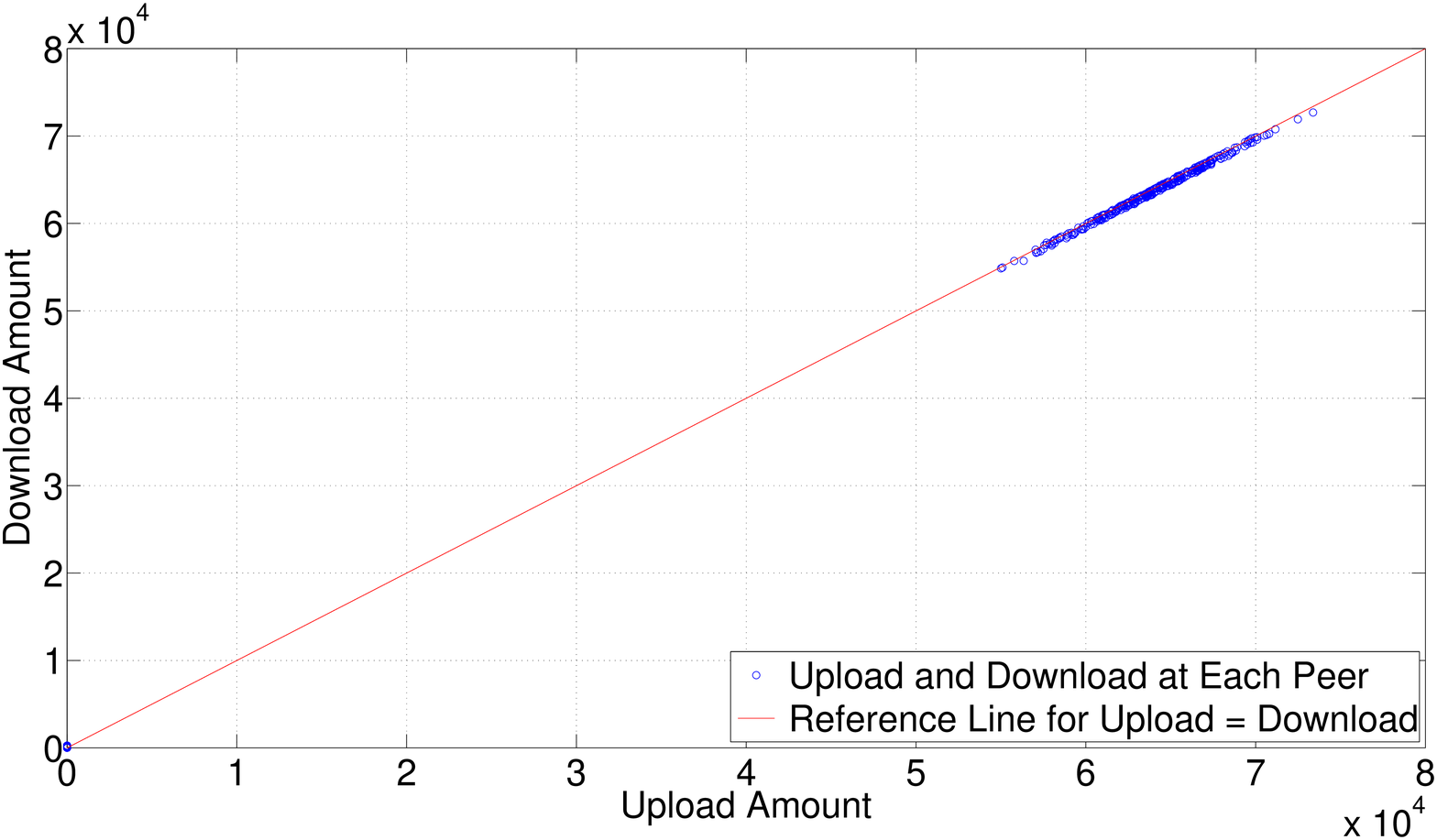}}
\subfloat[Free Riders = 70\%, $\alpha=0.3$]{
\includegraphics[width=6cm,height=4cm,scale=.18]
{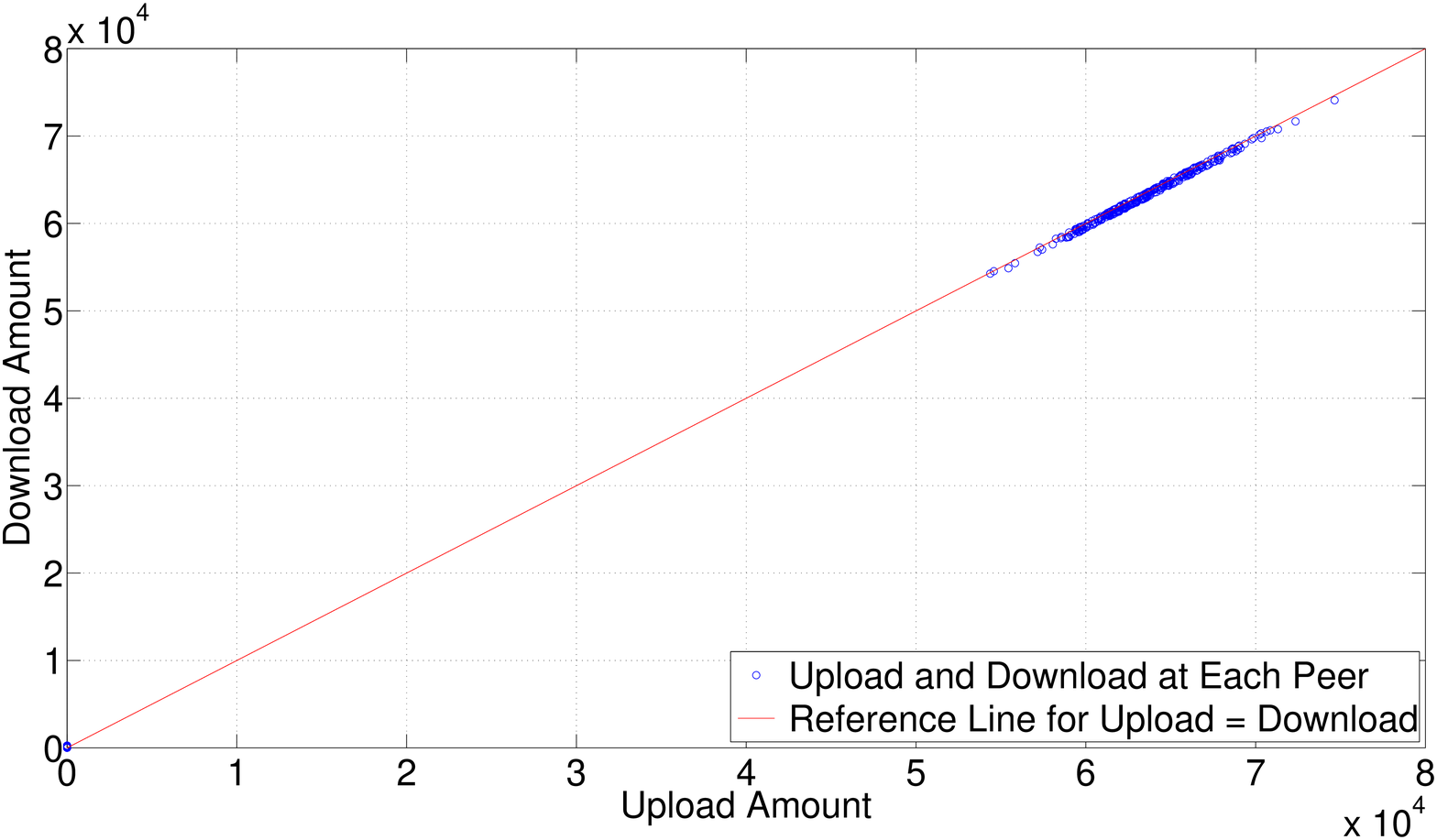} }
\caption{Upload and Download Amount at Each Peer in the Presence of $10\%-70\%$ Free-riders for Different values of $\alpha$. Peer selection approach is based on the problem of "College Admission and The Stability of Marriage". Bandwidth distribution is as type 1.}\label{sseventy}
\end{figure*}

\begin{table}
\begin{center}
\caption{ AAD and $\%$ of Rejections for SBCI in Simple Model for Stable merriage approach with two different bandwith peers }\label{table4.4}
\begin{tabular}{ | m{0.5cm} | m{1em}| m{5em}|m{5em}| m{5.5em}|} 
 \hline
  S.N.& $\alpha$ &Free-riders & AAD &$\%$ of Rejections \\[1ex] 
 \hline
 $1$& 0.9  & 10\% &  0.102265&1.0114\\
    \hline
 $2$& 0.9  & 30\% &  0.301927&0.287\\
    \hline
 $3$& 0.9  & 50\% &  0.501482&0.0244\\
    \hline
 $4$& 0.9  & 70\%&  0.701471&0.0026\\
    \hline
 $5$& 0.6  & 10\% & 0.102312&0.046\\
    \hline
 $6$& 0.6  & 30\% &  0.30196&0.0134\\
    \hline
 $7$& 0.6  & 50\% &  0.50149&0.0034\\
    \hline
 $8$& 0.6  & 70\% &  0.701432&0.0008\\
    \hline
 $9$& 0.3  & 10\% &  0.102367&0.0114\\
  \hline
  $10$& 0.3  & 30\% &0.301967&0.0036\\
    \hline
 $11$& 0.3  & 50\% &  0.501484&0.0016\\
    \hline
 $12$& 0.3  & 70\% &  0.701441&0\\
    \hline
\end{tabular}
\end{center}
\end{table}

%old stable merriage graph

\begin{figure*}
\centering 
  \subfloat[Free Riders = 10\%, $\alpha=0.9$]{ \includegraphics[width=6cm,height=4cm,scale=.18]{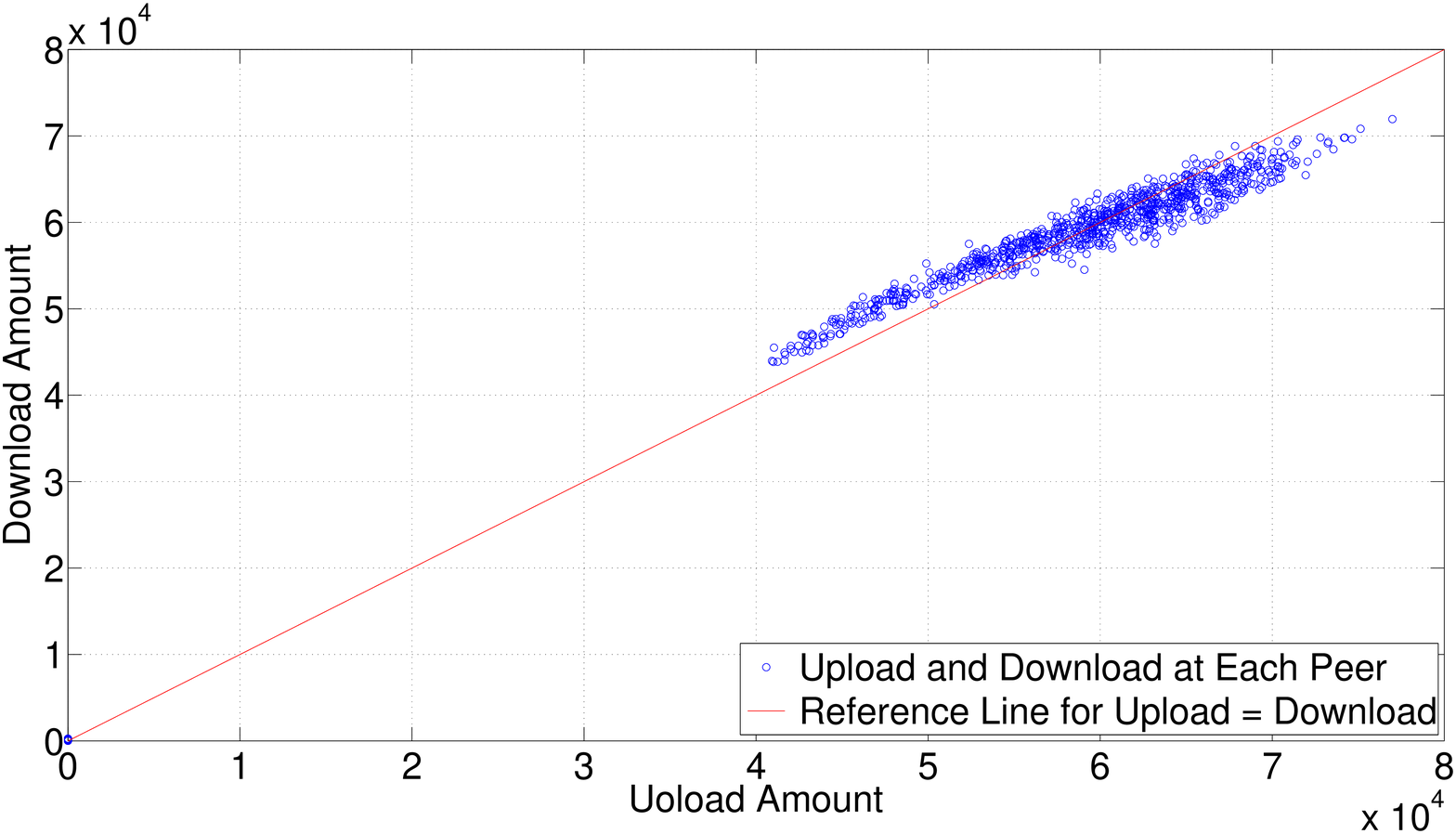}  }
\subfloat[Free Riders = 10\%, $\alpha=0.6$]{
\includegraphics[width=6cm,height=4cm,scale=.18]
{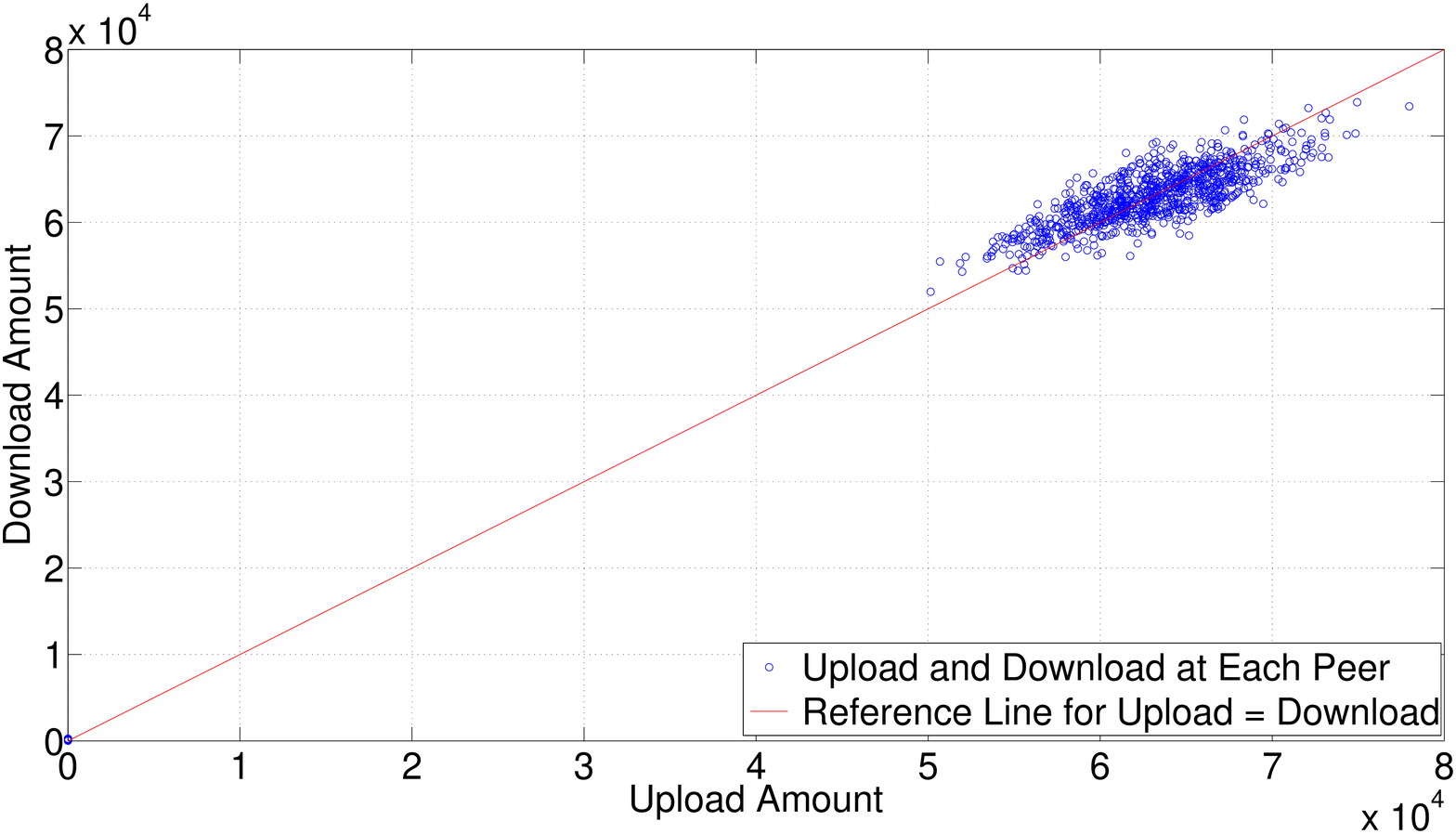} }
\subfloat[Free Riders = 10\%, $\alpha=0.3$]{
\includegraphics[width=6cm,height=4cm,scale=.18]
{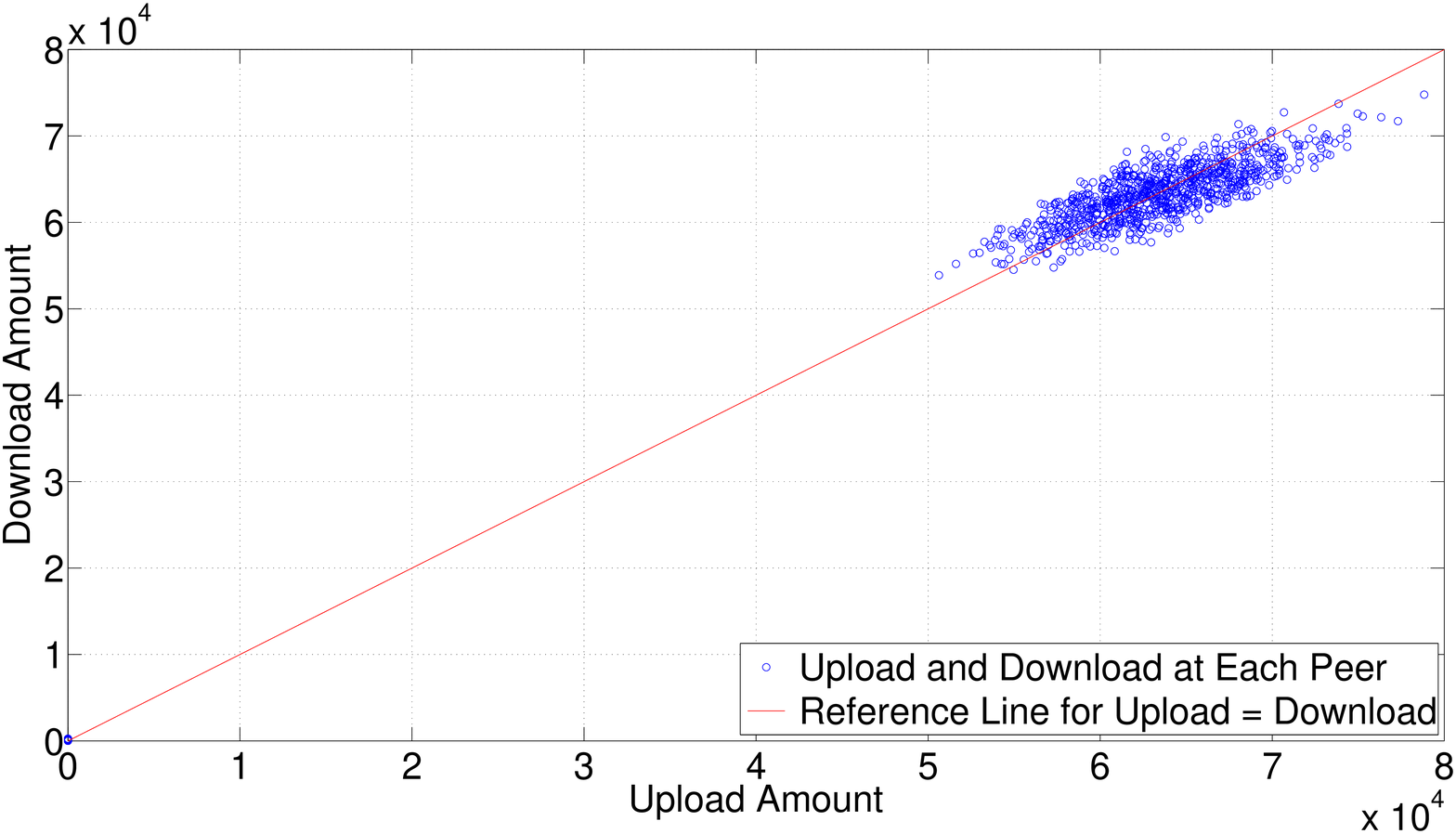}}\\
%\caption{Upload and Download Amount at Each Peer in the Presence of $10\%$ Free-riders for Different values of $\alpha$. Peer selection approach is based on the problem of "College Admission and The Stability of Marriage".}\label{sten}
%\end{figure*}
%\begin{figure*}
%\centering
\subfloat[Free Riders = 30\%, $\alpha=0.9$]{
\includegraphics[width=6cm,height=4cm,scale=.18]
{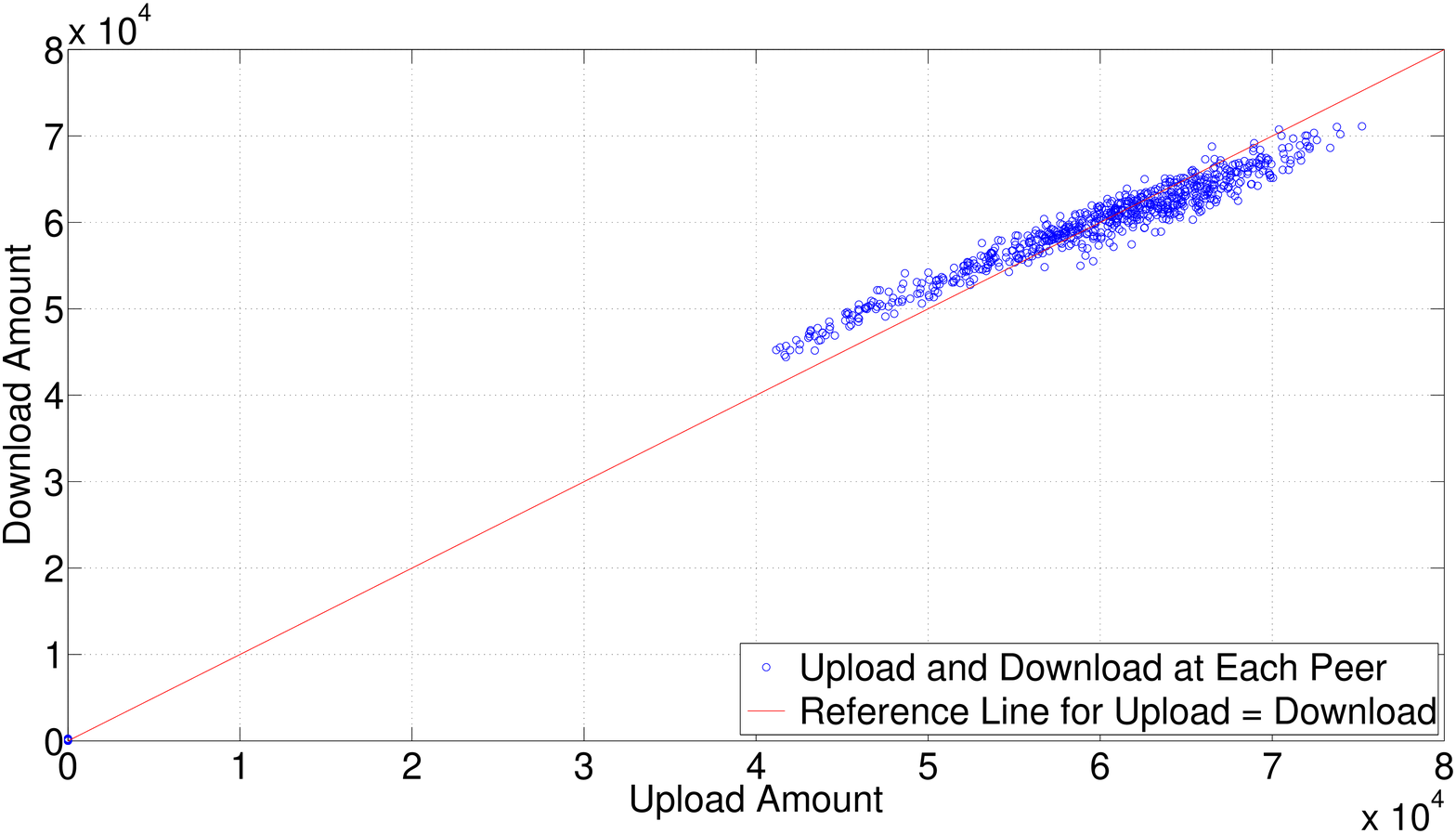} }
\subfloat[Free Riders = 30\%, $\alpha=0.6$]{
\includegraphics[width=6cm,height=4cm,scale=.18]
{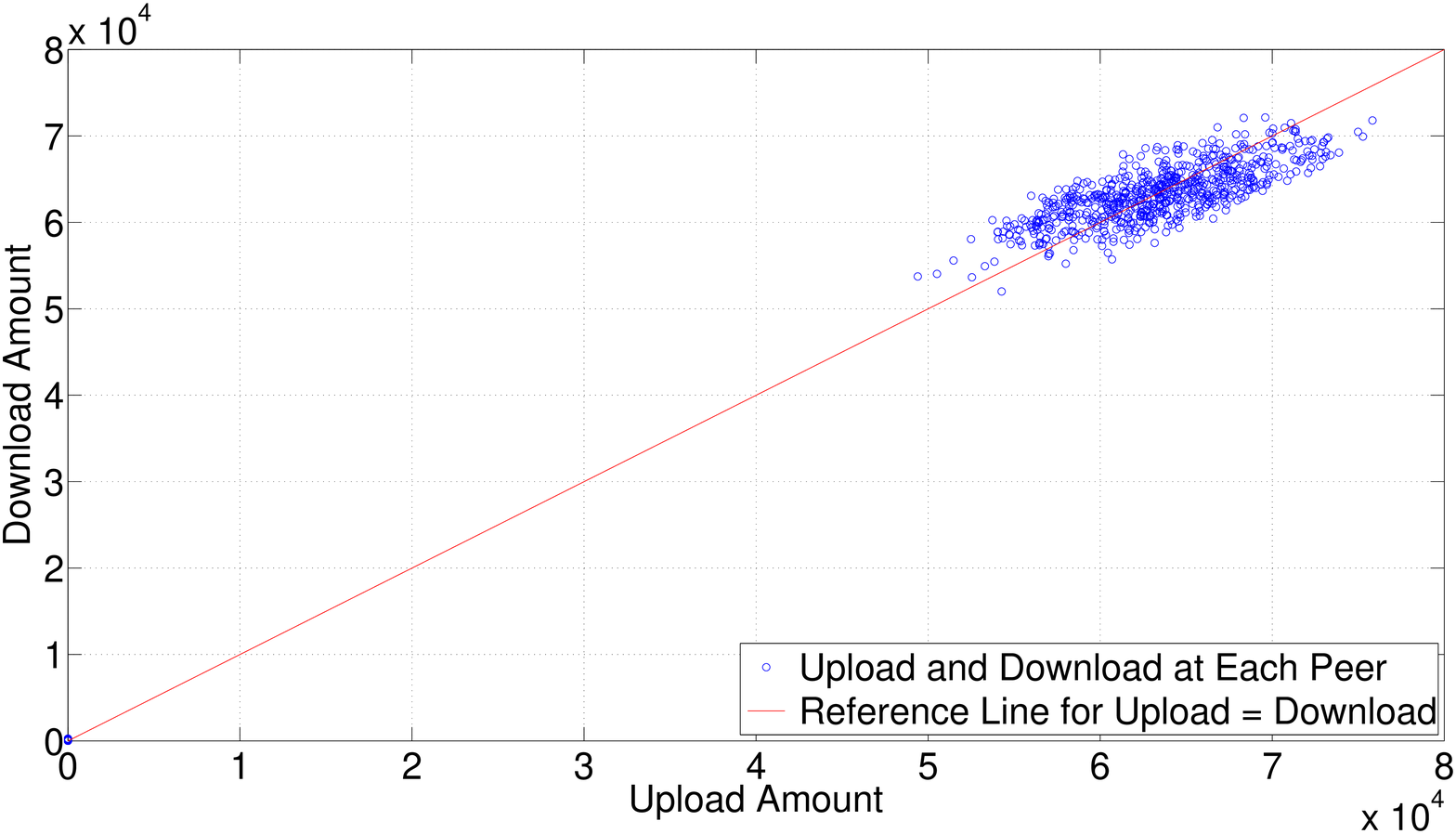}}
\subfloat[Free Riders = 30\%, $\alpha=0.3$]{
\includegraphics[width=6cm,height=4cm,scale=.18]
{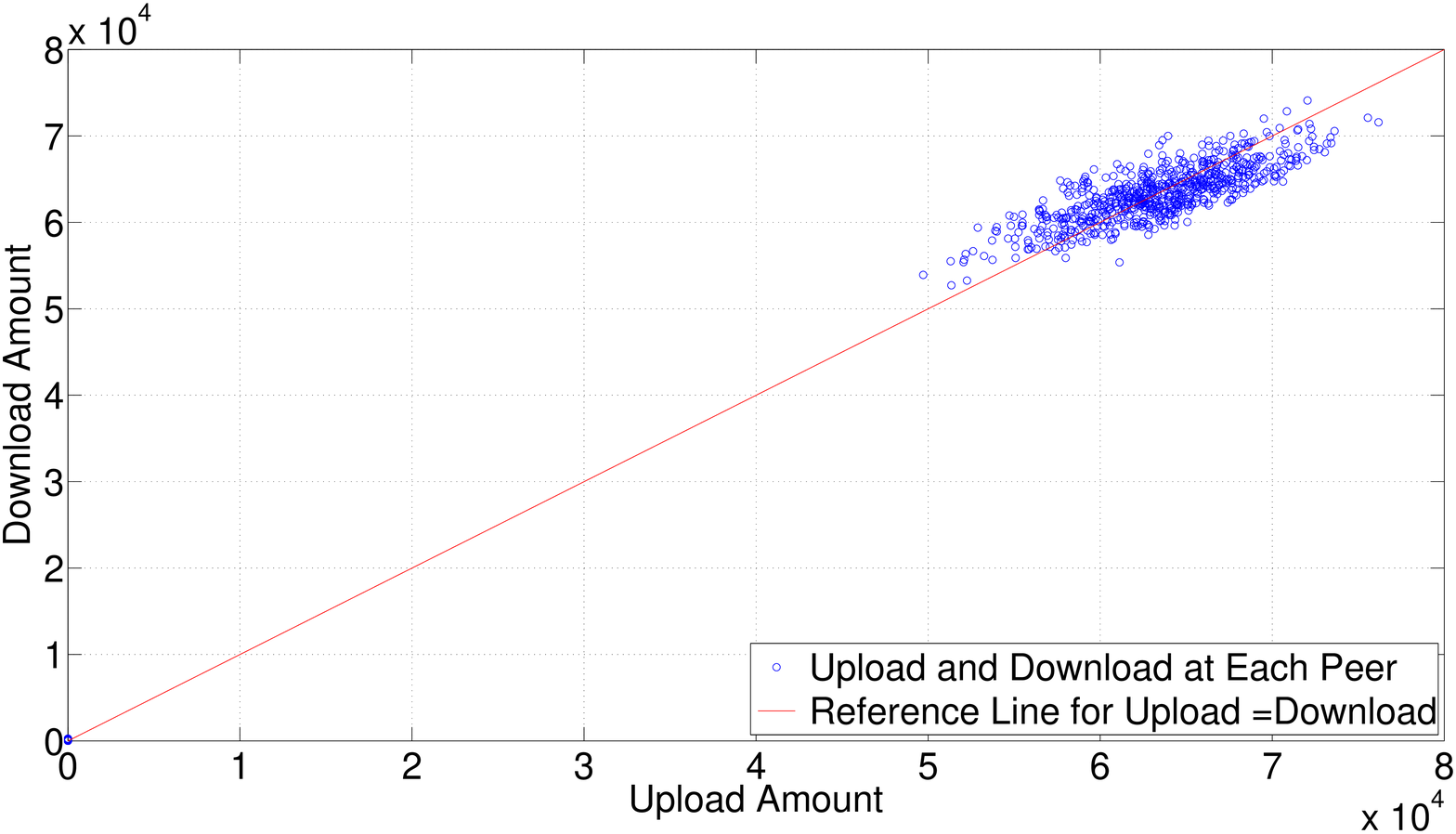} }\\
%\caption{Upload and Download Amount at Each Peer in the Presence of $30\%$ Free-riders for Different values of $\alpha$. Peer selection approach is based on the problem of "College Admission and The Stability of Marriage".}\label{sthirty}
%\end{figure*}
%\begin{figure*}
%\centering
\subfloat[Free Riders = 50\%, $\alpha=0.9$]{
\includegraphics[width=6cm,height=4cm,scale=.18]
{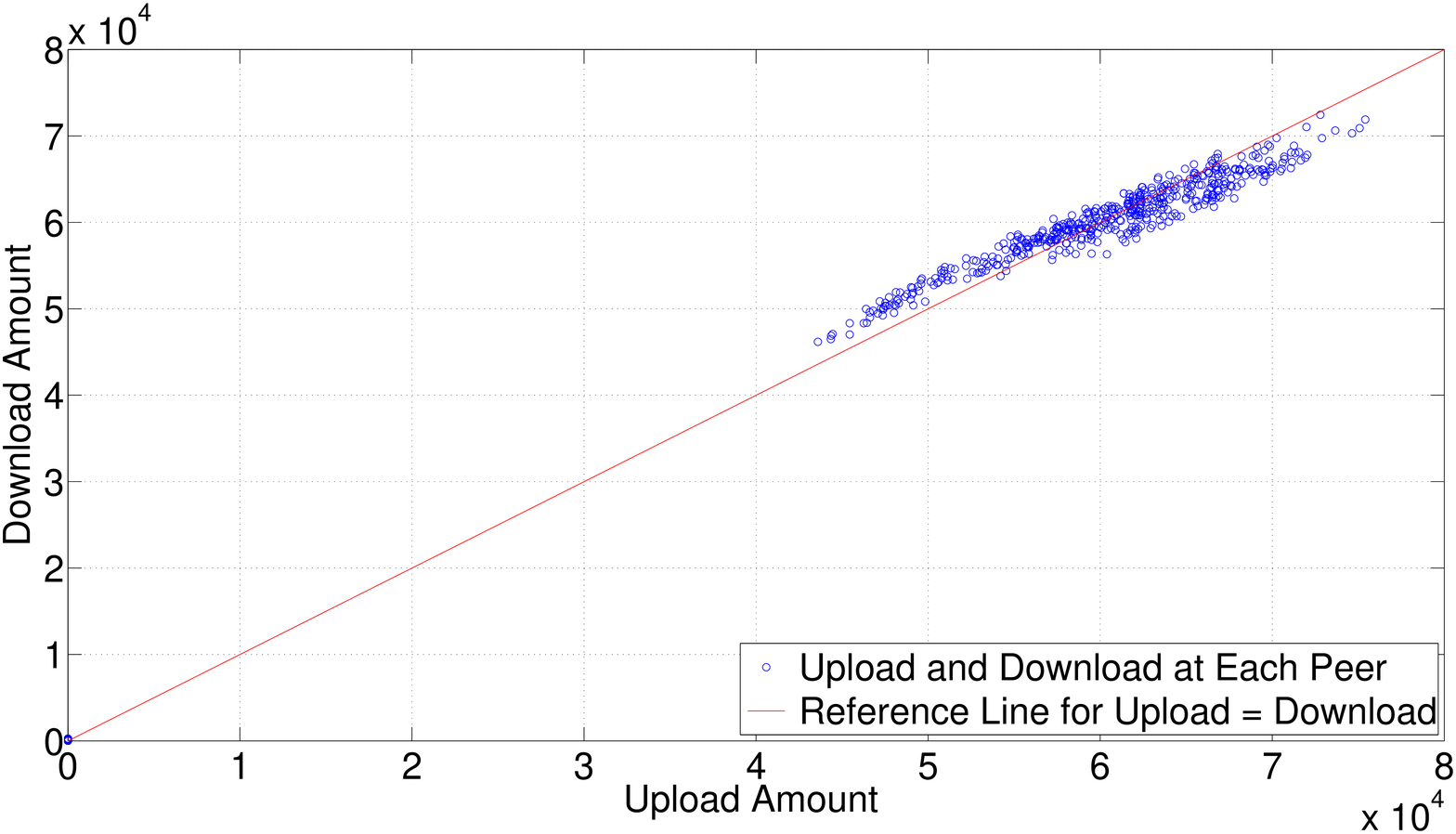}}
\subfloat[Free Riders = 50\%, $\alpha=0.6$]{
\includegraphics[width=6cm,height=4cm,scale=.18]
{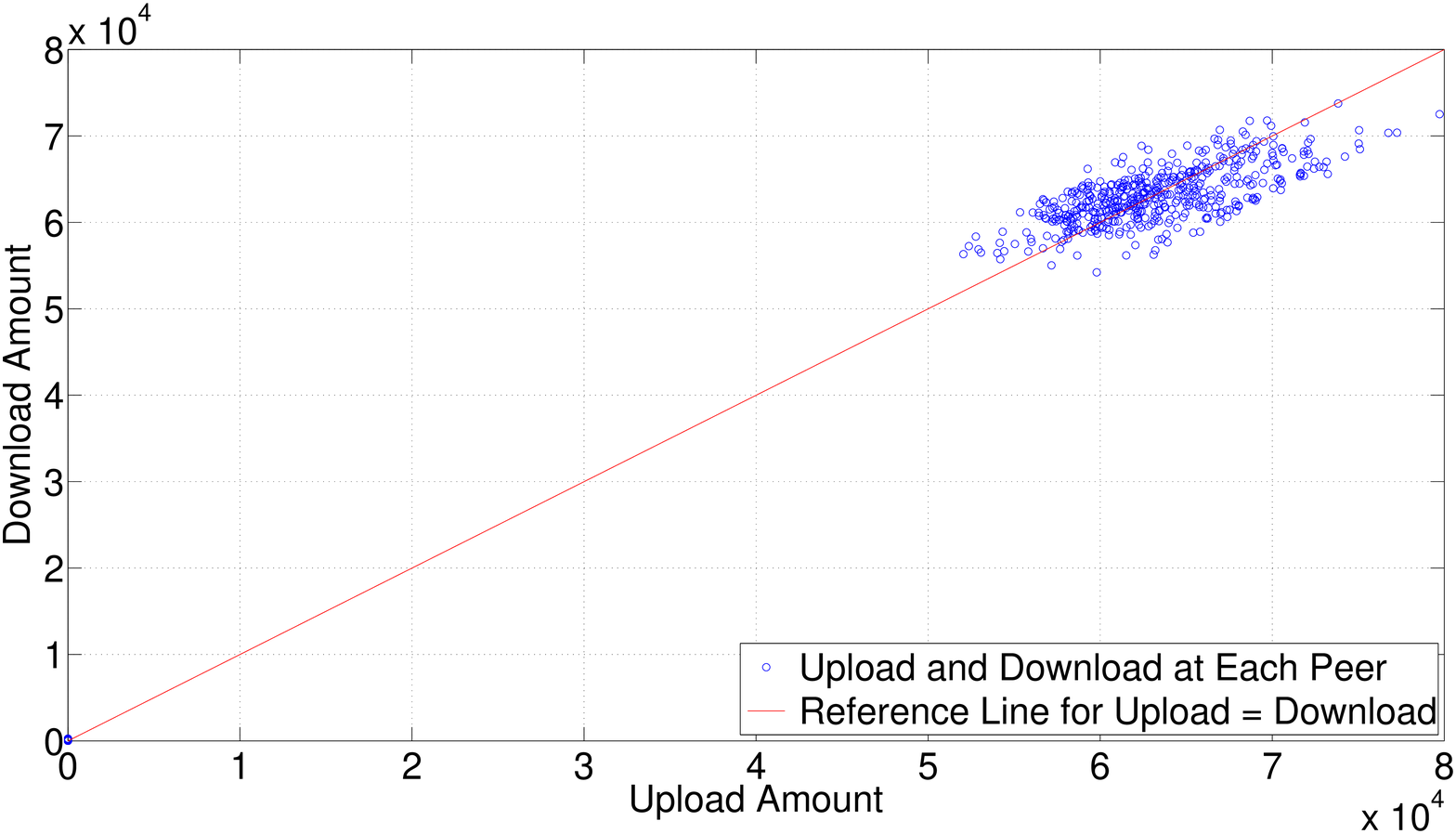}  }
\subfloat[Free Riders = 50\%, $\alpha=0.3$]{
\includegraphics[width=6cm,height=4cm,scale=.18]
{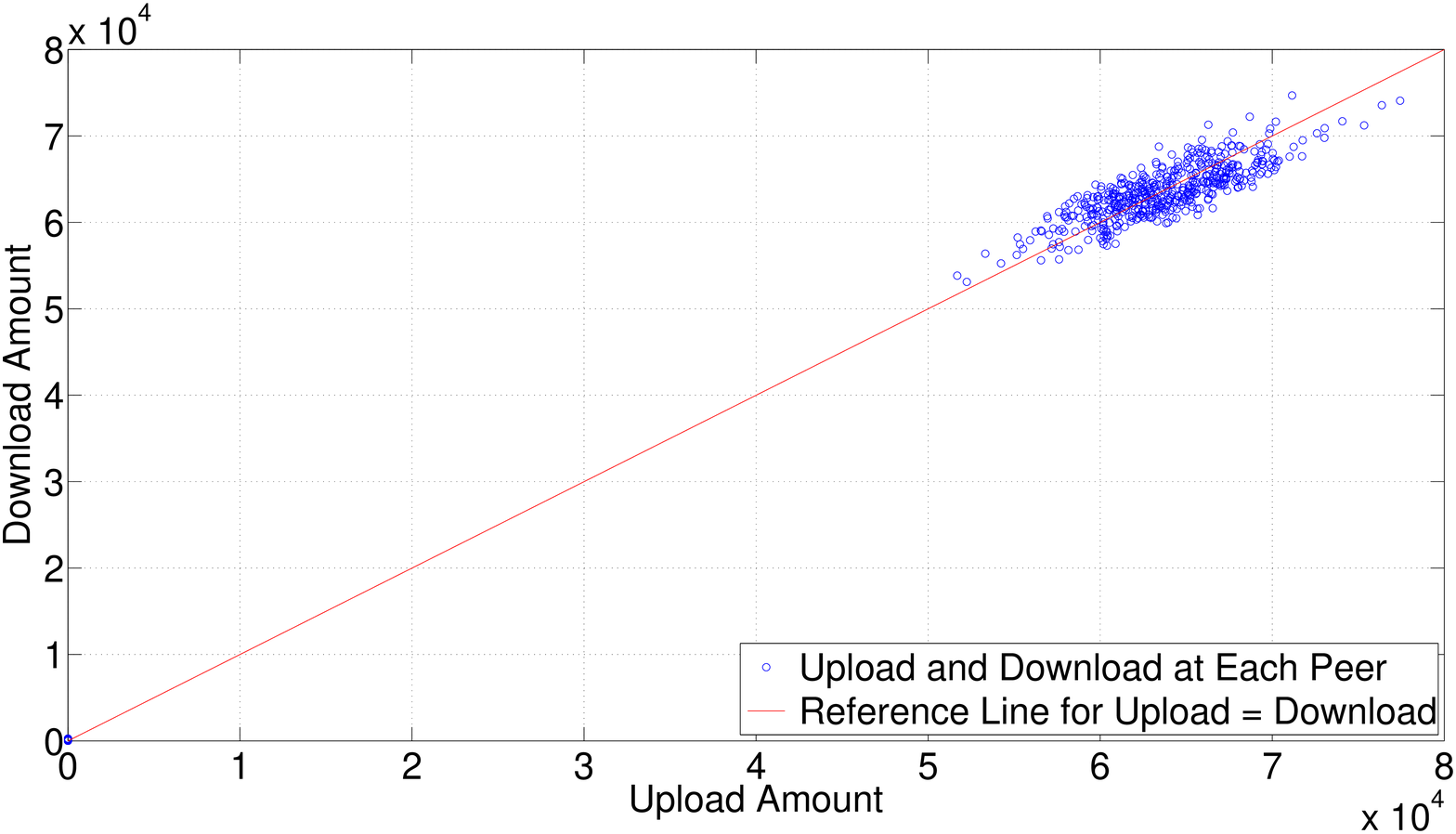}}\\
%\caption{Upload and Download Amount at Each Peer in the Presence of $50\%$ Free-riders for Different values of $\alpha$. Peer selection approach is based on the problem of "College Admission and The Stability of Marriage".}\label{sfifty}
%\end{figure*}
%\begin{figure*}
%\centering
\subfloat[Free Riders = 70\%, $\alpha=0.9$]{
\includegraphics[width=6cm,height=4cm,scale=.18]
{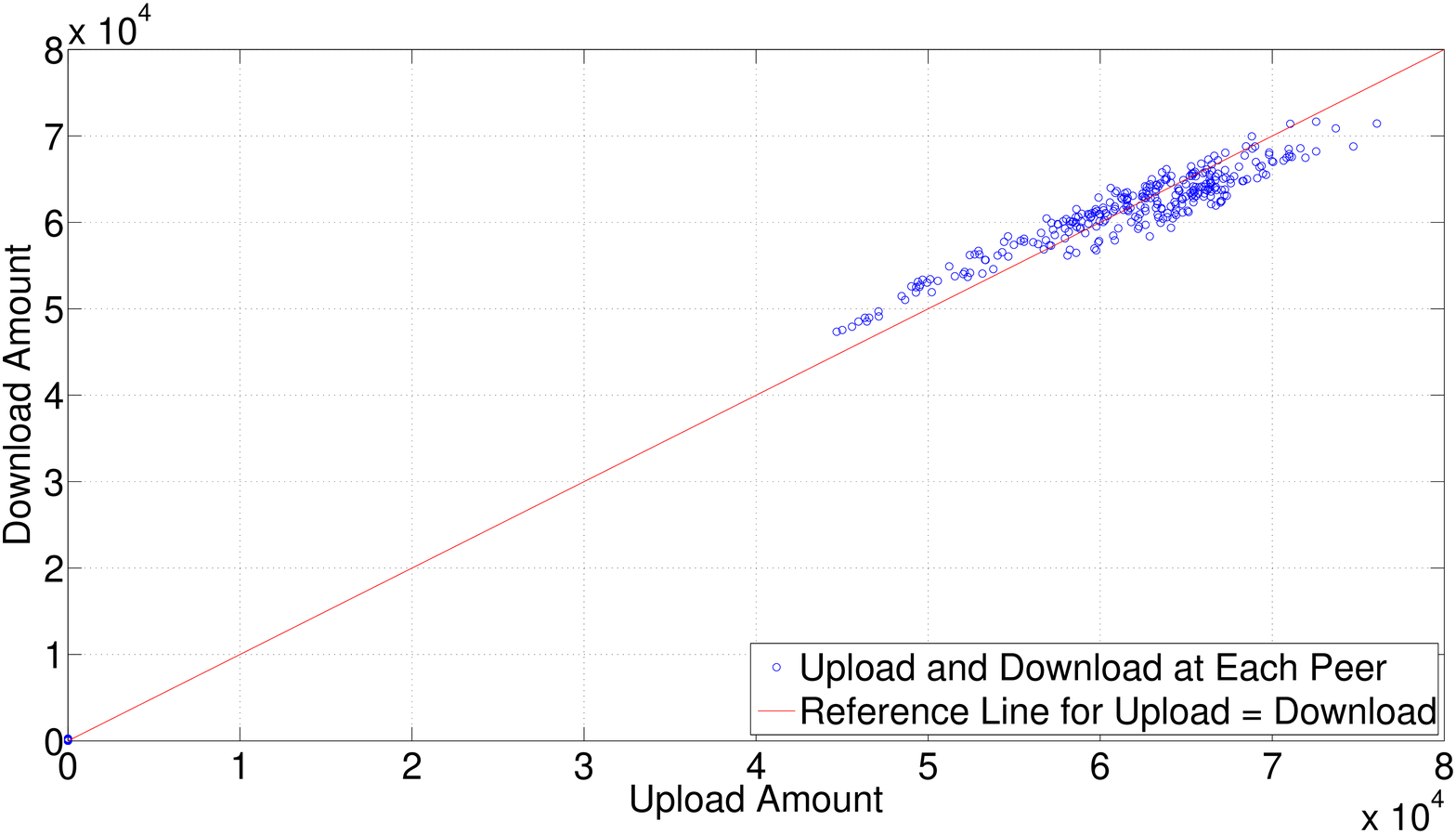} }
\subfloat[Free Riders = 70\%, $\alpha=0.6$]{
\includegraphics[width=6cm,height=4cm,scale=.18]
{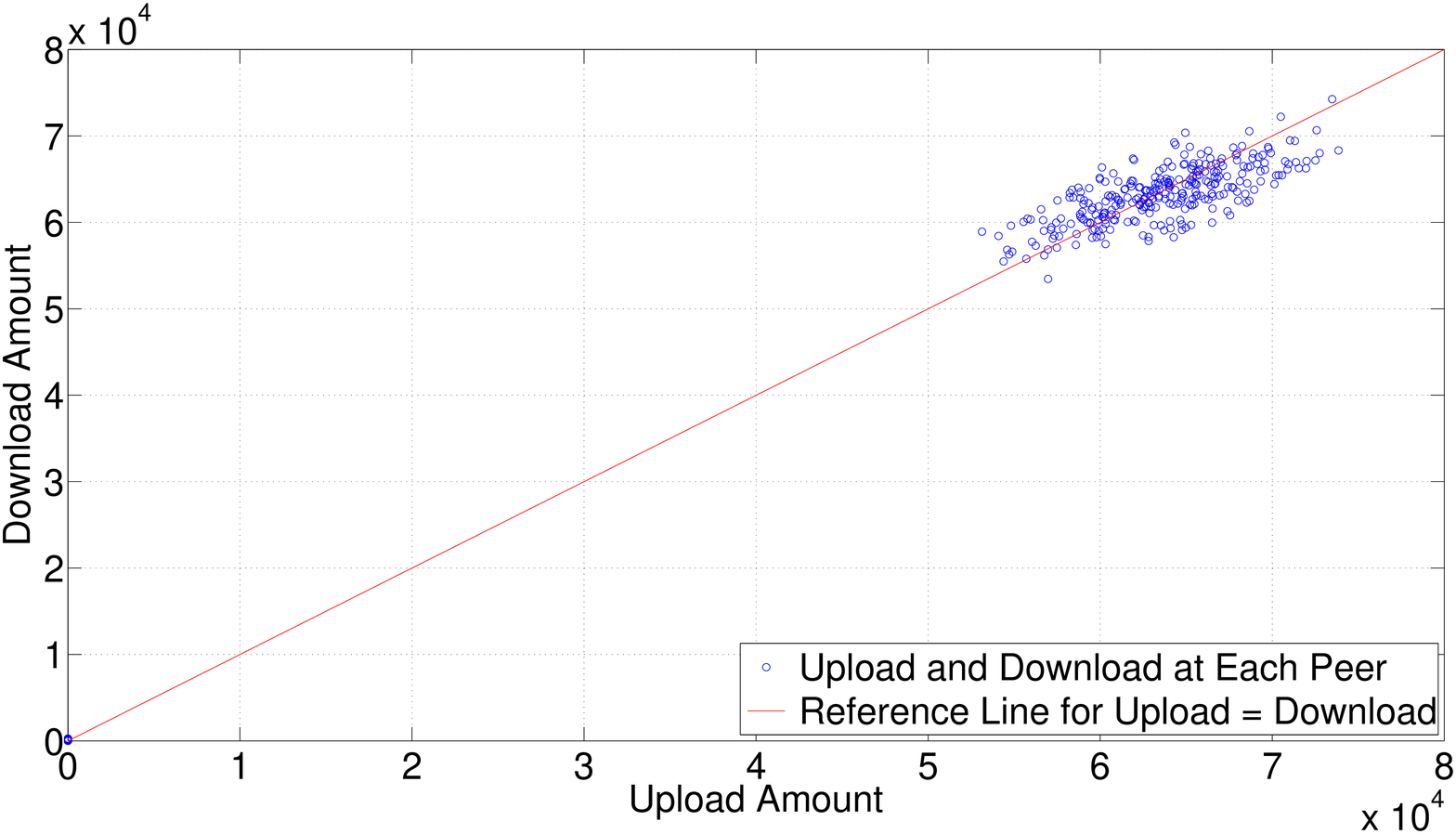}}
\subfloat[Free Riders = 70\%, $\alpha=0.3$]{
\includegraphics[width=6cm,height=4cm,scale=.18]
{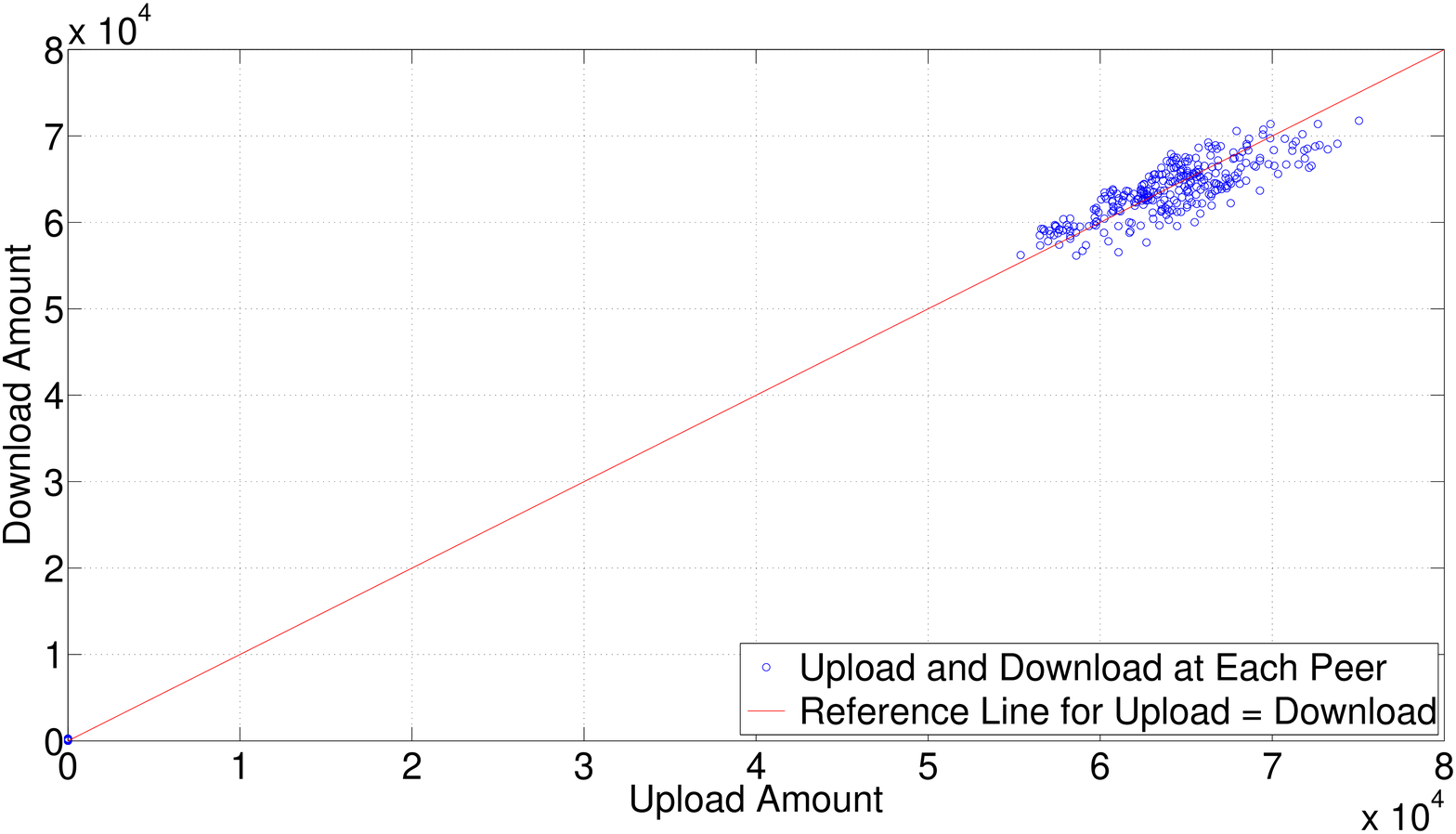} }
\caption{Upload and Download Amount at Each Peer in the Presence of $10\%-70\%$ Free-riders for Different values of $\alpha$. Peer selection approach is based on the problem of "College Admission and The Stability of Marriage". Bandwidth distribution is as type 2.}\label{soseventy}
\end{figure*}

\begin{table}
\begin{center}
\caption{ AAD and $\%$ of Rejections for SBCI in Simple Model for Stable merriage approach with ten different bandwith peers }\label{table4.5}
\begin{tabular}{ | m{0.5cm} | m{1em}| m{5em}|m{5em}| m{5.5em}|} 
 \hline
  S.N.& $\alpha$ &Free-riders & AAD &$\%$ of Rejections \\[1ex] 
 \hline
 $1$& 0.9  & 10\% &  0.131826&6.1534\\
    \hline
 $2$& 0.9  & 30\% &  0.322511&4.2034\\
    \hline
 $3$& 0.9  & 50\% &  0.515015&2.3352\\
    \hline
 $4$& 0.9  & 70\%&  0.709604&0.9604\\
    \hline
 $5$& 0.6  & 10\% & 0.128717&0.4344\\
    \hline
 $6$& 0.6  & 30\% &  0.327149&0.2746\\
    \hline
 $7$& 0.6  & 50\% &  0.520278&0.1396\\
    \hline
 $8$& 0.6  & 70\% &  0.710663&0.0348\\
    \hline
 $9$& 0.3  & 10\% &  0.130575&0.1214\\
  \hline
  $10$& 0.3  & 30\% &  0.323463&0.0652\\
    \hline
 $11$& 0.3  & 50\% &  0.514418&0.024\\
    \hline
 $12$& 0.3  & 70\% &  0.70845&0.0054\\
    \hline
\end{tabular}
\end{center}
\end{table}

\subsection{Simulation Results of College Admission and The Stability of Marriage Based Approach For the Peer Selection}
We also conducted the experiment for college admission and the stability of marriage based approach for the peer selection. For simplicity, we considered only Simple model. Bandwidth of peers is assumed to be different, so that they can also include the bandwidth, as a criteria for peer selection. Selection of peer for downloading and uploading is done according to Algorithm \ref{algo4.3}. The stable match for uploader and downloader is made downloader optimal. To observe the impact of heterogeneity, we simulated the Simple Model for two different types of bandwidth distributions, i.e., type 1 and type 2.  \par 
In type 1, half of the peers have bandwidth 10 units and the rest have 20 units. Simulation results for this type are shown in Fig. \ref{sseventy}. Corresponding $AAD$ and percentage of rejections among cooperative peers are shown in Table \ref{table4.4}. We can see from the figure that upload and download amount  increases in each peer compared to simple procedure. Because each peer, who request for resources, is getting some option for downloading. Uploads and downloads in each peer are close to the reference line and corresponding $AAD$ are lesser compared to simple procedure.  Thus, the algorithm is able to balance the upload and download amount in each peer.\par 
In type 2, 10\% of the peers have bandwidth 10 units, next 10\% of the peers have bandwidth 20 units, next 10\% of peers have bandwidth 30 units and so on. In this way, last 10\% of peers will have bandwidth 100 units. Simulation results for this type are shown in Fig. \ref{soseventy}. Corresponding $AAD$ and percentage of rejections among cooperative peers are shown in Table \ref{table4.5}. We can observe from this figure that upload and download amounts for most of the peers are far from reference line and corresponding $AAD$ are also higher. Thus, the impact of heterogeneity is clearly evident. It also supports the argument that if we will select the source peer according to bandwidth rather than SBCI, we will loose the fairness in the network. 

\section{Conclusion}\label{conclusion}
In this work, we proposed a new algorithm to make the P2P network fair and efficient. The algorithm  ranks the peers based on their simplified biased contribution index (SBCI) which can vary from 0 to 1. Estimation of SBCI is based on two factors, the resources contributed by the peer and the SBCI of peer with whom it is transacting.  We propose the design rules to make the network fair and efficient. With the help of mathematical justification, we have shown that our algorithm  can fulfill all the design objectives and is able to maintain the fairness in the network. This algorithm can be implemented in the truly distributed fashion. Since,  no iterative calculation is needed, it can be implemented with lesser message overhead and storage capacity. \par We proposed two different peer selection approaches, namely simple procedure and college admission and the stability of marriage based approach. Simulation results show that the algorithm is able to suppress the free-riders in highly free-riding environment. The algorithm is also able to suppress the dynamic free-riders, i.e., those who change their behavior dynamically.\par In future, we would like to implement this mechanism in unstructured P2P network.

% if have a single appendix:
%\appendix[Proof of the Zonklar Equations]
% or
%\appendix  % for no appendix heading
% do not use \section anymore after \appendix, only \section*
% is possibly needed

% use appendices with more than one appendix
% then use \section to start each appendix
% you must declare a \section before using any
% \subsection or using \label (\appendices by itself
% starts a section numbered zero.)
%

%\appendices
%\section{Proof of the First Zonklar Equation}
%Appendix one text goes here.

% you can choose not to have a title for an appendix
% if you want by leaving the argument blank
%\section{}
%Appendix two text goes here.

% use section* for acknowledgment
\ifCLASSOPTIONcompsoc
  % The Computer Society usually uses the plural form
 % \section*{Acknowledgments}
%\else
  % regular IEEE prefers the singular form
 % \section*{Acknowledgment}
%\fi

%The authors would like to thank...

% Can use something like this to put references on a page
% by themselves when using endfloat and the captionsoff option.
\ifCLASSOPTIONcaptionsoff
  \newpage
\fi

% trigger a \newpage just before the given reference
% number - used to balance the columns on the last page
% adjust value as needed - may need to be readjusted if
% the document is modified later
%\IEEEtriggeratref{8}
% The "triggered" command can be changed if desired:
%\IEEEtriggercmd{\enlargethispage{-5in}}

% references section

% can use a bibliography generated by BibTeX as a .bbl file
% BibTeX documentation can be easily obtained at:
% http://www.ctan.org/tex-archive/biblio/bibtex/contrib/doc/
% The IEEEtran BibTeX style support page is at:
% http://www.michaelshell.org/tex/ieeetran/bibtex/
%\bibliographystyle{IEEEtran}
% argument is your BibTeX string definitions and bibliography database(s)
%\bibliography{IEEEabrv,../bib/paper}
%
% <OR> manually copy in the resultant .bbl file
% set second argument of \begin to the number of references
% (used to reserve space for the reference number labels box)

\begin{IEEEbiography}%[{\includegraphics[width=1in,height=1.25in,clip,keepaspectratio]{sateesh.png}}]{Sateesh Kumar Awasthi}
He was born in Uttarkashi, India. He is currently pursuing Ph.D in the Department of Electrical Engineering at  IIT, Kanpur. His research interests include Peer-to-Peer Networks, Wireless Sensor Networks, Complex Networks, Social Networks, Solution of non-linear equations, Application of Linear Algebra and Game theory in Networks.
\end{IEEEbiography}
\begin{IEEEbiography}%[{\includegraphics[width=1in,height=1.25in,clip,keepaspectratio]{y-n-singh.png}}]{Yatindra Nath Singh}
He was born in Delhi, India. He was awarded Ph.D for his work on optical amplifier placement problem in all-optical broadcast networks in 1997 by IIT Delhi. In July 1997, he joined EE Department, IIT Kanpur. He was given AICTE young teacher award in 2003. Currently, he is working as professor. He is fellow of IETE, senior member of IEEE and ICEIT, and member ISOC. He has interests in telecommunications' networks specially optical networks, switching systems, mobile communications, distributed software system design. He has supervised 10 Ph.D and more than 125 M.Tech theses so far. He has filed three patents for switch architectures, and have published many journal and conference research publications. He has also written lecture notes on Digital Switching which are distributed as open access content through content repository of IIT Kanpur. He has also been involved in opensource software development. He has started Brihaspati (brihaspati.sourceforge.net) initiative, an opesource learning management system, BrihaspatiSync – a live lecture delivery system over Internet, BGAS – general accounting systems for academic  institutes.
\end{IEEEbiography}

% biography section
% 
% If you have an EPS/PDF photo (graphicx package needed) extra braces are
% needed around the contents of the optional argument to biography to prevent
% the LaTeX parser from getting confused when it sees the complicated
% \includegraphics command within an optional argument. (You could create
% your own custom macro containing the \includegraphics command to make things
% simpler here.)
%\begin{IEEEbiography}[{\includegraphics[width=1in,height=1.25in,clip,keepaspectratio]{mshell}}]{Michael Shell}
% or if you just want to reserve a space for a photo:

%\begin{IEEEbiography}{Michael Shell}
%Biography text here.
%\end{IEEEbiography}

% if you will not have a photo at all:
%\begin{IEEEbiographynophoto}{John Doe}
%Biography text here.
%\end{IEEEbiographynophoto}

% insert where needed to balance the two columns on the last page with
% biographies
%\newpage

%\begin{IEEEbiographynophoto}{Jane Doe}
%Biography text here.
%\end{IEEEbiographynophoto}

% You can push biographies down or up by placing
% a \vfill before or after them. The appropriate
% use of \vfill depends on what kind of text is
% on the last page and whether or not the columns
% are being equalized.

%\vfill

% Can be used to pull up biographies so that the bottom of the last one
% is flush with the other column.
%\enlargethispage{-5in}

% that's all folks
\end{document}